\newtheorem{example}{Example}[section]
\definecolor {processblue}{cmyk}{0.96,0,0,0}
\DeclareRobustCommand{\Bot}{%
  \mathord{\vphantom{\bot}\mathpalette\mich@Bot\relax}%
}
\newcommand{\mich@Bot}[2]{%
  \ooalign{%
    $\m@th#1\bot$\cr
    \clipbox*{0pt 0pt {\width} {.5\height}}{\raisebox{.2\height}{$\m@th#1\bot$}}\cr
  }%
}
\def\ci{\perp\!\!\!\perp}
\title{Trustworthiness Preservation by Copies of Machine Learning Systems} 
\author{Leonardo Ceragioli and Giuseppe Primiero\footnote{LUCI Lab, Department of Philosophy, Università degli Studi di Milano}} 
\date{}
\begin{document}

\maketitle




\begin{abstract}
A common practice of ML systems development concerns the training of the same model under different data sets, and the use of the same (training and test) sets for different learning models. The first case is a desirable practice for identifying high quality and unbiased training conditions. The latter case coincides with the search for optimal models under a common dataset for training. These differently obtained systems have been considered akin to copies. In the quest for responsible AI, a legitimate but hardly investigated question is how to verify  that trustworthiness is preserved by copies. In this paper we introduce a calculus to model and verify probabilistic complex queries over data and define four distinct notions: Justifiably, Equally, Weakly and Almost Trustworthy which can be checked analysing the (partial) behaviour of the copy with respect to its original. We provide a study of the relations between these notions of trustworthiness, and how they compose with each other and under logical operations. The aim is to offer a computational tool to check the trustworthiness of possibly complex systems copied from an original whose behavour is known.
\end{abstract}



\section{Introduction}

The problem of evaluating the trustworthiness of computational systems has received significant attention in the light of the new generation of Machine Learning algorithms \citep{10.1145/3491209,HENRIQUE2024100043,Benk-BENTYO-3}. For such systems, largely characterized by high complexity and opacity, the available approaches are \textit{ex-ante} and \textit{post-hoc} explanations \citep{RETZLAFF2024101243}. The latter, useful in particular in absence of a model to probe and investigate, consists of building transparent explaining agents able to emulate the behaviour of a given opaque system in order to evaluate the adherence of the latter to the former \citep{10.1007/978-3-030-88708-7_19}.

In this light, reasoning about the behaviour of such uncertain computational systems becomes crucial in order to 
verify their trustworthiness. This is the task of TPTND \citep{10.1093/logcom/exaf003}, a typed natural deduction calculus in which appropriate trust rules are defined according to different \textit{criteria}. 
A system can be evaluated by comparing its outputs with the theoretical \textit{expected} behavior. Or one can compare the output of the system with the \textit{ideal} measure for decision making, when available: \textit{e.g.} the objective probability that a candidate returns a loan, based on statistical information. The resulting distance obtains modulo a chosen threshold of admissibility. The calculus is equipped with a corresponding relational semantics \citep{KUBYSHKINA2024109212}, and it has been modeled to deal with different kinds of measures of entropy, in particular non-symmetric ones for the evaluation of divergent behaviours which may be characterised as bias \citep{DBLP:conf/aiia/PrimieroD22}. This approach has been implemented in terms of a verification platform for classifiers \cite{DBLP:conf/beware/CoragliaDGGPPQ23,coraglia2024evaluatingaifairnesscredit}. 
In the present work, our goal is to investigate in which cases trustworthiness, whatever the chosen criterion for its assessment, is preserved when the model is copied. While TPTND offers a formal tool for trust evaluation of computational processes in presence of a desirable or ideal behaviour, it remains limited in its expressivity for checking trustworthiness preservation under varying Training Sets for the same model, or for different models trained and tested on the same datasets.

\begin{example}\label{ex1}
    Consider an automatic credit scoring system used to evaluate the likelihood of defaulting on credit by assigning individual applicants a score. Falling below a certain threshold determines a negative output on the loan request. Individuals are characterized by properties such as gender, marital status, ethnicity, previous credit history and so on. The system is trained on historical data available to the financial institution containing the same properties as those of the intended Test Set. The result is a well-performing, robust model called $M1$, although it presents certain undersirable discrimination phenomena with respect to certain subgroups within ethnicity and maritual status. The financial institution decides to train on the same historical data a second model $M2$, characterised by differently tuned parameters and hyperparameters. Additionally, the original model is trained again on a different set of historical data provided by a local branch for deployment of the new model $M3$ in a different location. 
\end{example}

Example \ref{ex1}, albeit fictional, is close to a real case scenario. Credit scoring is actually performed by ML models trained with historical data and discrimination and bias are acknowledged issues \citep{DASTILE2020106263}. Moreover, the development of models like $M2$ to identify the better performing and most accurate one across a number of options trained on the same dataset is standard practice in several domains where AI is used. On the other hand, the development of models like $M3$, with retraining on more accurate datasets for local deployment is a desirable, although too often ignored, best practice. 

The practice of generating new models from the same training dataset, or to train the same model on different data can be considered akin to making copies.  An early analysis of this type of formal relations has been developed in view of technical artefacts and formal ontologies \citep{carrara2001identity,carrara2010copies}.In particular, when copies do not satisfy all properties of the original, one speaks of inexact or approximate copies of an original computational model \citep{DBLP:journals/logcom/AngiusP18}. Inexact and approximate copies preserve safety and liveness properties violations, while neither
of them is able to preserve their satisfaction. Viceversa, if safety and
liveness are satisfied by an inexact copy, those properties are satisfied by the copied system as well \citep{DBLP:journals/logcom/AngiusP23}. When these kinds of copy are considered in the context of machine learning systems and their inherently non-deterministic behaviour, the preservation of valid properties from the model to the copies requires appropriate weak morphisms, see \citep{manganiniprimieroforth}. These in turn necessarily translate to appropriate weak forms of trust, which are the focus of the present paper. Such analysis can also be useful in the development of digital twin technologies \citep{grieves2017digital}. They serve to test and understand how industrial infrastructure might behave during their lifecycle \citep{SEMERARO2021103469}. To this purpose, virtual environments and digital simulations are used, aided by data-driven algorithms. A major hurdle towards the efficient and safe deployment of processes realised by a digital twin in real production is represented by
the ability to formulate essential properties to be represented by the simulation and to verify that they are also correctly satisfied by
the real-world artefact. This is especially true when Machine Learning and Deep Learning  algorithms intervene in the
simulation process and huge amounts of data are used to draw correlations about potential behaviours of the system under analysis. Especially when such algorithms are adapted from ready-made models, the question whether their trustworthiness is preserved across new domains is crucial.

To allow for formal verification of these cases, a language is required that can express features and their values, equipped with frequencies. The evaluation of dependent probabilities is essential to express the value of a target feature for a given class and subclass of individuals. Equivalence relations can then be defined on formal systems of this form, modular with respect to either Training Set or model. We introduce the calculus \textit{Typed Natural Deduction for Probabilistic Queries} (TNDPQ for short) with the aim of reasoning about complex queries with probabilistic outputs and establishing  trustworthiness preservation under varying elements of the ML system, what we call its \textit{copies}. 
More generally, this can be seen as trustworthiness evaluation of an ML algorithm under variable input and model, and we shall consider 
different degrees in which trustworthiness is preserved across such copies.

TNDPQ --  and its ancestor TPTND -- belong to the family of probabilistic deductive systems, which includes many variants. \citep{DBLP:conf/lics/BacciFKMPS18, DBLP:conf/icfp/BorgstromLGS16}  offer stochastic untyped $\lambda$-calculi which focus on denotational or operational semantics with random variables, including a uniform formulation offered in \citep{DBLP:conf/lics/AmorimKMPR21}. In these systems a probabilistic program has the ability to model sampling from distributions, thereby rendering program evaluation a probabilistic process. \citep{DBLP:conf/icfp/BorgstromLGS16} introduces a measure space on terms and define step-indexed approximations, where a sampling-based semantics of a term is defined as a function from a trace of random samples to a value. In TNDPQ our terms are individual data points of a sample and the probabilistic approximation is associated with their output value. The inferential structure of TNDPQ is similar to calculi with types or natural deduction systems.
\citep{pierro2020atypetheory} deals with judgements that do not have a context and uses a subtyping relation for the term reduction. \citep{bor17} introduces ``probabilistic sequents'' of the form $\Gamma \vdash^n \Delta$, which are interpreted as stating that the probability of $\Gamma \vdash \Delta$ is greater than, or equal to, a function of the natural number $n$. In \citep{borivcic2019sequent}, and closer to the natural deduction in \citep{bor16},  probabilistic reasoning is formalised through sequents of the form $\Gamma \vdash_{a}^{b} \Delta$, which are interpreted as empirical statements of the form ``the probability of $\Gamma \vdash \Delta$ lies in the interval $[a,b]$''. TNDPQ deals only with sharp probabilities on the output types rather than on the derivability relation.
\citep{ghilezan2018probabilistic} introduces the logic P$\Lambda_{\rightarrow}$ where it is possible to mix standard Boolean and probabilistic formulas, the latter formed starting by a ``probabilistic operator'' $P_{\geq s} M:\sigma$ stating that the probability of $M:\sigma$ is equal to or greater than $s$. \citep{DBLP:conf/types/Adams015} introduces a quantitative logic with fuzzy predicates and conditioning of states to compute conditional probabilities. \citep{DBLP:journals/corr/Warrell16} offers a Probabilistic Dependent Type System (PDTS) via a functional language based on a subsystem of intuitionistic type theory including dependent sums and products, expanded to include stochastic functions. As a direct result of the Curry-Howard isomorphism, a probabilistic logic is derived from PDTS, and shown to provide a universal representation for finite discrete distributions. Unlike all these languages, the syntax of TNDPQ has been designed to express queries on a dataset returning probabilistic outputs and its inferential engine to account for closure under logical operators for such queries. The main advantage of the syntax is the modularity with respect to the elements denoting components of a ML systems, hence the ability to compare easily across modular changes (or copies).

The remaining of this paper is structured as follows.
In section~\ref{sec:ProofSystem} we start analyzing the components of an ML system and then we display a proof-system for calculus TNDPQ.
The proof-system is designed to decompose complex queries into logically simpler ones, and for simplicity it is presented (in subsections~\ref{subsec:Syntax} and \ref{subsec:ProofSystem}) in a language that keeps implicit the components of the ML system.
Then, the language of the calculus is expanded by making those components explicit in sub-subsection~\ref{subsub:ExtLang}.

In section~\ref{sec:TrustCopies}, we start to investigate the notion of trustworthy copy of an ML system.
In this and in the subsequent section, we restrict our focus only to ML systems with atomic target variables that receive only atomic values.
Moreover, trust is investigated for \textit{local} applications of ML systems to specific lists of values assignments $\sigma$ and for specific atomic target variables $a$.
In subsection~\ref{subsec:DifferentCopies}, we discriminate copies on the basis of the elements of the original system that are changed in the copy, while in subsection~\ref{subsec:DiffTrust} we discriminate them on the basis of the strength of the Trust relation that holds between original and copy.
As for the elements of the original that are changed in the copy, both Training Set and Learning Algorithm are considered.
As for the strength of the Trust relation, we distinguish between Justifiably, Weakly, Equally and Almost Trustworthy copies of an original.
The distinction is based on how many outputs are preserved and how similar they are between original and copy.

In section~\ref{sec:RelationsTrust}, the relations between the different notions of trust are investigated.
Since relations between notions of trustworthiness do not depend on which element of the original system is preserved and which one is changed in the copy, but only on the strength of such notions, in this section we use variables for trained systems, without discerning their components.
In subsection~\ref{subsec:entailment}, the relations of entailment between notions of trustworthiness are investigated, together with the complementary assumptions needed to increase the degree of trustworthiness of a copy.
In subsection~\ref{subsec:composition}, compositions of trustworthiness relations are investigated, researching whether and to what extent compositions of trustworthy relations are themselves trustworthy, and so how continual copying impacts trustworthiness.

In section~\ref{sec:TrustLogic}, we extend the study of trustworthiness to logical queries.
First, in subsection~\ref{subsec:TrustMultAppl}, trust is defined for general applications of ML systems to sets of lists of value attributions $\Sigma$ and to sets of atomic target variables $ \mathscr{A} $.
Then, in subsection~\ref{subsec:LogicTrust}, the definitions of Justifiably, Equally, Weakly, and Almost Trustworthy copy are extended to cover non-atomic target variables and values.
In particular, with these definitions in place, in sub-subsection~\ref{subsubsec:PreservationResults}, some results about the preservation of trustworthiness under logical construction and deconstruction of the queries are proved.

Section~\ref{sec:Conclusion} concludes and proposes some further steps of this investigation, while in~\ref{appendix} we present formal details of mutual exclusivity for values of variables, connected to the treatment of disjunction in section~\ref{sec:ProofSystem}.

\section{ML systems, proof-theoretically}
\label{sec:ProofSystem}

We start considering an ML system as composed of the following elements:

\begin{itemize}
\item a Learning Algorithm 
\item a Training Data Set, used to obtain the Learned Model
\item a Test Data Set, on which the Learned Model is applied to make predictions or classifications,
\item Outputs.
\end{itemize}

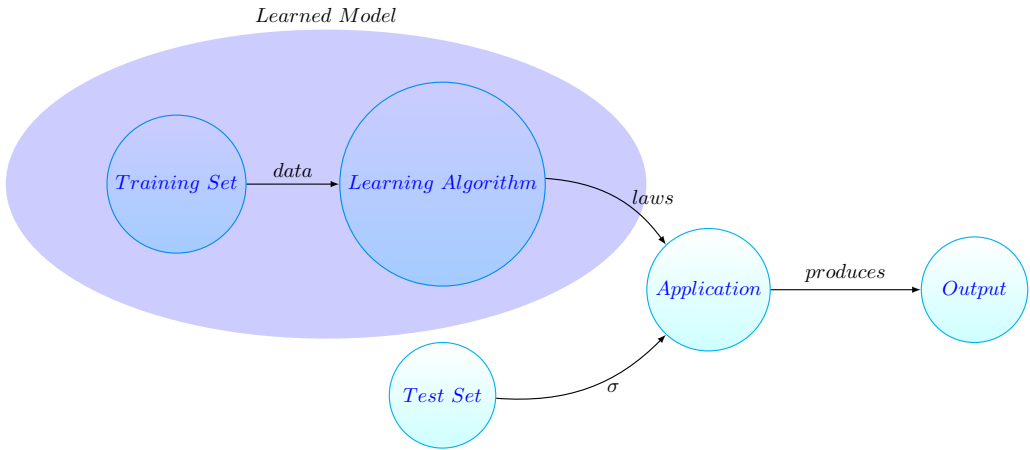
\begin{figure}
\scalebox{.70}{
\begin {tikzpicture}[-latex ,auto ,node distance =2 cm and 5cm ,on grid ,
semithick ,
state/.style ={ circle ,top color =white , bottom color = processblue!20 ,
draw,processblue , text=blue , minimum width =2 cm}]
\node[state] (A) {$Application$};
\node[state] (B) [above left=of A] {$Learning$ $Algorithm$};
\node[state] (C) [left =of B] {$Training$ $Set$};
\node[state] (D) [below left =of A] {$Test$ $Set$};
\node[state] (E) [right =of A] {$Output$};

\node[fit=(B)(C), ellipse, fill=blue, opacity=0.2, label=above: $Learned$ $Model$] {};

\path (C) edge node[above =0 cm] {$data$} (B);
\path (B) edge [bend left =25] node[above =0 cm, right =0.2 cm] {$laws$} (A);
\path (D) edge [bend right =25] node[above =0 cm, right =0.2 cm] {$\sigma$} (A);
\path (A) edge node[above =0 cm] {$produces$} (E);

\end{tikzpicture}
}
\caption{Components of an ML system and its application}
\label{fig:ontology}
\end{figure}

While the Test Set is usually not accounted as part of the system itself, it is nonetheless crucial to our purposes: our notion of copy is going to be the result of replacements on an original ML system of either the Training Set, or the Learning Algorithm, and the analysis is made based on their deployment on a Test Set. In the first case, a copy is a Learning Algorithm (function) differently trained, and possibly executed on the same Test Set. In the second case, the copy is a different algorithm trained on the same Test Set and possibly executed on the same Test Set. For each of these cases, it is plausible to ask whether the system's behavior is going to be the same, or at least sufficiently similar. And when the original system is evaluated as trustworthy, it is appropriate to ask whether the copy is going to preserve, and how much, such property.

In the following of this section, we introduce a language and proof system for these components.

\subsection{Syntax}
\label{subsec:Syntax}

The syntax of our calculus is generated by the following grammar in Backus-Naur form:

\begin{equation*}
\begin{split}
AtT & := a   \mid a_{n}  \\
T & := AtT \mid  
\langle T,T \rangle \mid fst(T) \mid snd(T) \mid [T] T \\
AtO & := \alpha \mid \alpha^{n} \\
O & :=  AtO \mid O^{\bot} \mid O + O \\
pO & := O_{r} \mid (O \times O)_{r} \mid (O \rightarrow O)_{r} \\
VA & := AtT:O \\
pVA & := T:pO \\
lVA & := VA \mid lVA, VA \\
J & := lVA \rhd pVA \\
AtJ & := lVA \rhd AtT:AtO_{r} \\
\end{split}
\end{equation*}

The first lowercase letter of the Latin alphabet $a$ in the set $AtT$ (possibly indexed with natural numbers, $a_{i}$) is used for atomic variables for attributes.
$\mathscr{A}$ is used to denote a set of atomic variables.
Lowercase Latin letters from the \textit{end} of the alphabet $ t,u,v,z $ are used as variables to denote elements in $T$, comprising atomic variables and closed under logical construction.
$\mathscr{T}$ is used to denote a set of variables.
Given two variables $t$ and $u$, they can be composed in pairs $\langle t,u \rangle$ or conditionally $ [t] u $.
$fst(t)$ and $snd(t)$ are the first and second projections, used to deconstruct conjunctions, and are characterized by the usual evaluation rules:

\begin{equation*}
\begin{split}
fst(\langle t,u \rangle) & = t \\
snd(\langle t,u \rangle) & = u
\end{split}
\end{equation*}

\noindent
Note that we do not have a similar operation for the conditional, since any judgment containing a conditional is just a notational variation of a judgment containing no conditionals, as will be clear from its rules.

The first letter of the Greek alphabet $ \alpha $ (possibly with indices, $\alpha^{i}$; when necessary we will refer to the set of all indices of a given variable $a$ as $I$) from the set $AtO$ is used for possible atomic values of atomic variables.
The other lowercase Greek letters  $\beta , \delta , \gamma $ are used for possible deterministic values of atomic variables from the set $O$, which comprise atomic values and are closed under negation $\bot$ and disjunction $+$.
Lowercase Greek letters with real numbers as subscripts $\alpha _{r} , \beta _{r}, \delta _{r}, \gamma _{r}$ are used for probabilistic values for variables, which comprise atomic values and are closed under negation $\bot$, disjunction $+$, conjunction $ \times $ and implication $\rightarrow$.

The expression $a:\alpha$ stands for the atomic variable $a$ receiving the atomic value $\alpha$, and $a:\alpha _{r}$ stands for atomic variable $a$ receiving value $\alpha$ with probability $r \in \mathbb{R}$.
Moreover, since every atomic variable has a finite number of possible atomic values, we will write $a:\alpha ^{1}_{r_{1}}, \ldots , \alpha ^{n}_{r_{n}}$ to indicate that $\alpha ^{1}, \ldots , \alpha ^{n}$ are all the possible atomic values of atomic variable $ a $, and $a$ receives value $\alpha^{i}$ with probability $ r_{i} $.
The expression $a:\beta $ stands for the atomic variable $a$ receiving the value $\beta$, while the expression $t:\beta _{r}$ stands for the variable $t$ receiving the value $\beta$ with probability $ r $.
We will call ``values attributions'' ($VA$) events such as $t:\beta $ (variable $t$ receives value $\beta$), and ``probabilistic values attributions'' ($pVA$) events such as $t:\beta _{r}$ (variable $t$ receives value $\beta$ with probability $r$).

Note that, while values attributions require an atomic variable, probabilistic values attributions are more `liberal' and can be constructed using any kind of variable.
Accordingly, values ($O$) -- used in $VA$ -- comprise only atomic values, disjunctions and negation (corresponding to atomic variables), while probabilistic values ($pO$) -- used in $pVA$ -- are their proper extension, comprising also products and conditionals (corresponding to possibly non-atomic variables). Value attributions are collected in lists ($lVA$).
For compactness, we will use $\sigma _{1}, \ldots, \tau _{1}, \ldots$ to indicate such lists of value attributions and $\Sigma$ to indicate a set of lists of value attributions. A list of value attributions represents a single unit in the population of the Test Set. If more variables receive values, then the description of such unity of population is more specific.

\begin{example}[List of value attributions]
As an example,
\medskip
\[
Age: 27 ,\: Gen.: f,\: M.S.: married + divorced,\: Etn.: white^{\bot} 
\]

\medskip
\noindent
is used to refer to a single unit of population that satisfies the following value attributions: a married or divorced woman who is 27 years old and is not white.
\end{example}

In our analysis, the system receives in input a list of value attributions and gives a probabilistic value attribution for the target variable as its output. Intuitively, this corresponds to asking the system the probability that a subject described by the list of value attributions receives a given value for the target attribute.
We use judgments ($J$) to express such answers to the queries.
In particular, we write $\sigma \rhd u:\delta _{r}$ to mean that the system gives probability $r$ to the assignment of value $\delta$ to variable $u$ for subjects described by the list of value attributions $\sigma$.
Lastly, atomic judgments ($AtJ$) are a special kind of judgments of the form $\sigma \rhd a:\alpha _{r}$, where the probabilistic values attribution gives an atomic value for an atomic target variable.
Moreover, judgments of the form $\sigma \rhd a:\alpha^{1} _{r_{1}}, \ldots , \alpha ^{n}_{r_{n}}$ indicate that $\alpha ^{1}, \ldots , \alpha ^{n}$ are all the possible atomic values of atomic variable $ a $, and $a$ receives value $\alpha^{i}$ with probability $ r_{i} $ for subjects described by the list of value attributions $\sigma$.

\begin{example}[Judgment]
\label{exa:Jud}
As an example, the judgment
\medskip
\[
Age: 27 ,\: Gen.: f,\: M.S.: married + divorced,\: Etn.: white^{\bot} \rhd Loan : yes _{0.60}
\]

\medskip
\noindent
says that a married or divorced woman who is 27 years old and is not white has a probability of $0.65$ of receiving a loan.
\end{example}

\subsection{Proof System: TNDPQ}
\label{subsec:ProofSystem}

Our proof system TNDPQ contains one rule for atomic judgments and other rules for dealing with logically complex judgments.
The rule $AtQuery$ in table~\ref{tab:AtQuery} introduces atomic judgments, and serves as an axiom to import into the calculus the probabilistic laws that the ML system has learned from the Training Set, in the form of conditional probabilities.
Note that, according to our grammar, $\sigma$ contains only value attributions and not probabilistic value attributions, hence our dependence is from deterministic values only.
Hence, the conclusion of $AtQuery$ corresponds to the output of a query asking the probability that an individual described by the list of assignments $\sigma$ receives atomic value $\alpha$ for the target attribute corresponding to atomic variable $a$.

The rules in tables~\ref{tab:triangle} and \ref{tab:LRtriangle} are designed to investigate how logically complex questions are addressed by an ML system.
Note that, as stressed in the previous section, only atomic variables can occur on the left of $ \rhd $, while both atomic and complex variables can occur on its right.
Moreover, $\beta \times \delta$ and $\beta \rightarrow \delta$ require an introduction that builds complex variables from those corresponding to their constituents $ \beta $ and $ \delta $, as opposed to $\beta + \delta$ and $\beta ^{\bot} $, which applies to variables as simple as those corresponding to their constituents $ \beta $ and $ \delta $.
For this reason, table~\ref{tab:triangle}, which deals with the connectives that can occur only on the right of $ \rhd $, contains the rules for $\times$ and $ \rightarrow $.
On the other hand, table~\ref{tab:LRtriangle} deals with the connectives that can occur both on the left and on the right of $ \rhd $, and so contains the rules for $ + $ and $ \bot $.

Before analyzing the logical rules one by one, let us deal with their general structure.
As it is usual with natural deduction, logical rules are divided into introduction (I) and elimination (E) rules.
Moreover, there is a general procedure for defining the E-rules for a connective as a function of the corresponding I-rules:

\begin{definition}[Inversion Principle]
\label{def:InvPrinc}
Given a connective $\dagger$, for every I-rule

\begin{center}
\AxiomC{$\sigma _{1} \rhd t_{1} : \beta ^{1} _{g_{1}} $}
\AxiomC{$ \ldots $}
\AxiomC{$\sigma _{n} \rhd t_{n} : \beta ^{n} _{g_{n}} $}
\RightLabel{I$ \dagger $i}
\TrinaryInfC{$\tau \rhd u : \delta _{f(g_{1}, \ldots , g_{n})} $}
\DisplayProof
\end{center}

\noindent
there are $ n $ E-rules, each deriving a premise of the corresponding I-rule I$ \dagger $i from its conclusion and its other premises:

\begin{center}
\scalebox{.85}{
{\AxiomC{$\tau \rhd u : \delta _{h} $}
\AxiomC{$ \ldots $}
\AxiomC{$\sigma _{n} \rhd t_{n} : \beta ^{n} _{g_{n}} $}
\RightLabel{E$ \dagger $i$ _{1} $}
\TrinaryInfC{$\sigma _{1} \rhd u : \delta _{f^{1}(h, \ldots , g_{n})} $}
\DisplayProof}~{...}~{
\AxiomC{$\sigma _{1} \rhd t_{1} : \beta ^{1} _{g_{1}}$}
\AxiomC{$ \ldots $}
\AxiomC{$\tau \rhd u : \delta _{h}$}
\RightLabel{E$ \dagger $i$ _{n} $}
\TrinaryInfC{$\sigma _{n} \rhd u : \delta _{f^{n}(g_{1}, \ldots , h)} $}
\DisplayProof}}
\end{center}

\noindent
Where, the functions $ f $ and $ f^{i} $ computing the probability of the conclusion of respectively the I-rule and the i-th E-rule from their premises are such that $ f^{i}(g_{1}, \ldots , f(g_{1}, \ldots , g_{n}), \ldots , g_{n}) = g_{i} $.
\end{definition}

\noindent
Note that, as a special case of this definition, when an I-rule has only one premise, the corresponding E-rule is obtained just by inverting the premise and the conclusion of the I-rule.

\begin{table} 
\caption{Rules for Atomic Query} 
\label{tab:AtQuery} 
\centering 

\AxiomC{}
	\RightLabel{{\tiny AtQuery}}
	\UnaryInfC{$ \sigma \rhd a: \alpha _{g} $}
\DisplayProof

\end{table}

\begin{table} 
\caption{Rules for $\rightarrow$ and $\times$ on the right} 
\label{tab:triangle} 
\centering 

\begin{tabular}{c c}

\multicolumn{2} {c} {
\AxiomC{$ \sigma , t: \beta \rhd u: \delta _{g} $}
	\RightLabel{{\tiny I/E$\rightarrow$}}
	\doubleLine
	\UnaryInfC{$ \sigma  \rhd [t] u : \beta \rightarrow \delta _{g} $}
\DisplayProof
}

\\[0.8cm]

\AxiomC{$ \sigma ,t: \beta  \rhd u : \delta _{g} $}
	\AxiomC{$ \sigma \rhd t: \beta _{f} $}
	\RightLabel{{\tiny I$\times$1}}
	\BinaryInfC{$ \sigma  \rhd \langle t, u \rangle: (\beta\times  \delta) _{f\cdot g} $}
\DisplayProof		

&

\AxiomC{$ \sigma ,u: \delta \rhd t : \beta _{g} $}
	\AxiomC{$ \sigma \rhd u: \delta _{f} $}
	\RightLabel{{\tiny I$\times$2}}
	\BinaryInfC{$ \sigma  \rhd \langle t, u \rangle: (\beta\times  \delta) _{f\cdot g} $}
\DisplayProof

\\[0.8cm] 
\multicolumn{2} {c} {
\AxiomC{$ \sigma  \rhd t: (\beta\times  \delta) _{f} $}
	\AxiomC{$ \sigma , fst(t) : \beta \rhd  snd(t): \delta _{g\neq 0} $}
	\RightLabel{{\tiny E$\times$1a}}
	\BinaryInfC{$\sigma \rhd fst(t): \beta _{f/g}  $}
\DisplayProof
}

\\[0.8cm] 

\multicolumn{2} {c} {
\AxiomC{$ \sigma  \rhd t: (\beta\times  \delta) _{f} $}
	\AxiomC{$ \sigma \rhd fst(t): \beta _{g\neq 0} $}
	\RightLabel{{\tiny E$\times$1b}}
	\BinaryInfC{$ \sigma , fst(t) : \beta \rhd  snd(t): \delta _{f/g} $}
\DisplayProof
}

\\[0.8cm]

\multicolumn{2} {c} {
\AxiomC{$ \sigma  \rhd t: (\beta\times  \delta) _{f} $}
	\AxiomC{$ \sigma , snd(t) : \delta \rhd fst(t): \beta _{g\neq 0} $}
	\RightLabel{{\tiny E$\times$2a}}
	\BinaryInfC{$\sigma \rhd snd(t): \delta _{f/g}  $}
\DisplayProof
}

\\[0.8cm] 

\multicolumn{2} {c} {
\AxiomC{$ \sigma  \rhd t: (\beta\times  \delta) _{f} $}
	\AxiomC{$ \sigma \rhd snd(t): \beta _{g\neq 0} $}
	\RightLabel{{\tiny E$\times$2b}}
	\BinaryInfC{$ \sigma , snd(t): \delta \rhd fst(t): \beta _{f/g} $}
\DisplayProof
}
\\
\end{tabular} 
\end{table}

Let us now deal with the specific rules, starting with those in table~\ref{tab:triangle}.
Rules I$\rightarrow$ and E$\rightarrow$ deal with implication, treated as conditional probability.
Since $AtQuery$ introduces judgments that are essentially conditional probabilities, the introduction of implication corresponds to residuation.
In other words, in order to find the probability of 

\begin{center}
``$u$ receives value $\delta$, if $t$ receives value $\beta$'' 
\end{center}

\noindent
for an individual described by the list $\sigma$, we just use $t:\beta$ in the query alongside  $\sigma$ and then move it on the right. Note that, since I$\rightarrow$ has only one premise, by definition~\ref{def:InvPrinc}, I$\rightarrow$ and E$\rightarrow$ characterize $\rightarrow$-judgments as a purely notational variance of atomic judgments.
This is possible because, as already stressed, atomic judgments themselves correspond to conditional probabilities.
Note however that the judgments 
\medskip
\[
\sigma , t: \beta \rhd u: \delta _{g} \qquad and \qquad  \sigma  \rhd [t] u : \beta \rightarrow \delta _{g}
\]

\medskip
\noindent
correspond to conceptually different queries.

\begin{example}[Atomic and conditional judgments]
\label{exa:Cond}
The following judgement
\medskip
\[
Age: 27 ,\: M.S.: married + divorced,\: Etn.: white^{\bot} \rhd  [ Gen. ] Loan : f \rightarrow yes _{0.60}
\]

\medskip
\noindent
queries the probability that an individual satisfying the premises (a 27 years old person who is married or divorced and not white) receives the loan, if she is a woman.
Although they have the same probability, the query is different from the one considered in example~\ref{exa:Jud}: in this case, we are asking a conditional probability, while in the previous example we were asking the probability of a value attribution to an atomic variable.
\end{example}

The rules I$\times$1 and I$\times$2 use conditional probability to introduce conjunction.
They rely on the property of probabilities:
\medskip
\[
Pr(B | A)Pr(A) = Pr(A\times B) = Pr(A | B)Pr(B)
\]

\medskip
\noindent
Where $Pr(B | A)$ stands for the probability of $B$ conditioned on the assumption that $ A $ holds, and $Pr(A\times B)$ stands for the probability that both $ A $ and $ B $ hold.
There are two such rules, because since conjunction is symmetric, $\langle t, u \rangle: (\beta\times  \delta)$ can be introduced using either $t : \beta \rhd u : \delta $ together with $t : \beta$, or using $ u : \delta \rhd t : \beta $ together with $u :\delta$.\footnote{
We assume that $t \neq u$.
Note that no interesting case of conjunction is dropped by this restriction, while the system behaves much better.
As an example, see~\ref{appendix}.
}

Following definition~\ref{def:InvPrinc}, each I-rule for conjunction gives rise to two distinct E-rules.
I$\times$1 gives rise to E$\times$1a and E$\times$1b, while I$\times$2 gives rise to E$\times$2a and E$\times$2b.
Each E-rule derives a premise of the corresponding I-rule from its conclusion together with its other premise.
In this way, we can obtain an implication from an atom, or vice-versa.
The proviso that the probability $g$ of the minor premise be different from $0$ is added to make $f/g$ well defined. 
The intended meaning of this restriction is the following.
The minor premise of any E$\times$ represents $Pr(B | A)$ (or $Pr(A)$), and if it has probability $0$, also $Pr(A\times B)$ has probability $0$.
But then, any probability of the other conjunct $Pr(A)$ (or $Pr(B | A)$) is consistent with $Pr(A\times B) = 0$.
Hence, we cannot derive any specific probability for such judgments.\footnote{
Note that if $g = 0$, then $f=0$.
However, we do not need to impose this ``consonance'' in the E-rules for $\times$, since it is automatically satisfied by every derivation of their premises.
}

\begin{example}[Conjunction]
As an example, from the judgment seen in example~\ref{exa:Cond}, together with the atomic judgment that ascribes the female gender to the subject described by:
\medskip
\[
\sigma:= Age: 27 ,\: M.S.: married + divorced,\: Etn.: white^{\bot} \rhd  Gen. : f _{0.50}
\]

\medskip
\noindent
we can obtain the probability of the conjunction ascribing both gender and success in obtaining the loan:
\medskip
\[
Age: 27 ,\: M.S.: married + divorced,\: Etn.: white^{\bot} \rhd  \langle Gen. , Loan \rangle : f \times yes _{0.30}
\]

\end{example}

Let us now consider the rules in table~\ref{tab:LRtriangle}, starting with those for $ + $.
The rule I$\rhd+$ introduces the disjunction on the right, relying on the property defining the probability of mutually exclusive alternatives:
\medskip
\[
Pr(A+B) = Pr (A) + Pr (B)
\]

\medskip
\noindent
Note that premise $ u: \delta ~ \Bot ~ u: \gamma  $ formalizes the requirement that $\delta$ and $\gamma$ are mutually exclusive values for $u$.
Moreover, note that, for the atomic values of atomic variables, this requirement is automatically satisfied.
For logically complex variables and values, the formal criteria for establishing mutual exclusivity are addressed in~\ref{appendix}.\footnote{
Since every value correspond to at most one variable, sometimes we will omit the variables and write directly $ \delta ~ \Bot ~ \gamma  $.
}

The corresponding E-rules E$\rhd+$a and E$\rhd+$b use the probability of the attribution of a disjunction of values and the probability of the attribution of one such value to derive the probability of the attribution of the other value.
Also in this case, the mutual exclusivity of the disjuncts is needed for the rule to be sound.
However, since the derivability of the disjunction in the major premise already requires such mutual exclusivity, we can drop this assumption from the premises.

The rule I$+\rhd$ introduces the disjunction on the left of $\rhd$, that is it introduces the disjunction in the list of values attributions that characterize the subject of the query.
In general, the justification of the left-rules is more complex, because they operate \textit{on the condition} of a conditional probability.
In this case, the justification relies on the following provable property of probability:
\medskip
\[
Pr(C| A+B) = \dfrac{Pr(C| A)\cdot Pr(A)+Pr(C| B)\cdot Pr(B)}{Pr(A) + Pr(B)}
\]

\medskip
\noindent
which can be proved as follows: 

\bigskip

\scalebox{0.80}{$
   \begin{aligned}
Pr(C| A+B) &=^{def}  \dfrac{Pr((A+B)\times C)}{Pr(A+B)} =^{distri. \times on +} 
\dfrac{Pr((A \times C)+(B\times C))}{Pr(A+B)} =  \\
^{Pr((A \times C)\times(B\times C))=0} &= \dfrac{Pr(A\times C)+Pr(B\times C)}{Pr(A+B)} = ^{def}
\dfrac{Pr(C| A)\cdot Pr(A) + Pr(C| B)\cdot Pr(B)}{Pr(A+B)} = \\
^{Pr(A\times B)=0} &= \dfrac{Pr(C| A)\cdot Pr(A)+Pr(C| B)\cdot Pr(B)}{Pr(A) + Pr(B)}
\end{aligned}$}

\bigskip

Using this property, we can obtain the probability of the value attribution $u: \delta $ under the condition $ \sigma , t: (\gamma + \beta) $, using:
\begin{itemize}
\item The probability of the value attribution $u: \delta $ under the condition $ \sigma , t: \gamma $;
\item The probability of the value attribution $u: \delta $ under the condition $ \sigma , t: \beta $;
\item The probability of the value attribution $t: \gamma $ under the condition $ \sigma $;
\item The probability of the value attribution $t: \beta $ under the condition $ \sigma $.
\end{itemize}

\noindent
I$+\rhd$ takes all these judgments as premises and derives its conclusion from them.
Like for I$\rhd +$, also in this case, we need the extra premise $ t: \gamma ~ \Bot ~ t: \beta  $, which formalizes the requirement that the values composing the disjunction are mutually exclusive.

Given definition~\ref{def:InvPrinc}, there are four E-rules corresponding to I$+\rhd$, one for each of its premises.
They can be used both to eliminate the disjunction from the condition (E$+\rhd$a and E$+\rhd$b) and to derive the probability of the disjuncts (E$+\rhd$c and E$+\rhd$d).

\begin{example}[Disjunction]
As an example, the judgment seen in example~\ref{exa:Cond} has the disjunction ``married or divorced'' in $ \sigma $, as the value of the variable ``marital status''.
We can use E$+\rhd$a to infer the probability of receiving a loan for a married woman, if we assume the following judgments:
\medskip
\[
Age: 27 ,\: M.S.: divorced,\: Etn.: white^{\bot} ,\: Gen. : f \rhd Loan : yes _{0.40}
\]
\[
Age: 27 ,\: Etn.: white^{\bot} ,\: Gen.: f \rhd  M.S.: divorced _{0.10}
\]
\[
Age: 27 ,\: Etn.: white^{\bot} ,\: Gen.: f \rhd  M.S.: married _{0.45}
\]

\bigskip
Indeed, by applying the rule and approximating $\frac{0.60\cdot (0.45+0.10)-0.40\cdot 0.10}{0.45} $ to $ 0.64$, we find:
\medskip
\[
Age: 27 ,\: M.S.: married, \: Etn.: white^{\bot} ,\: Gen. : f \rhd Loan : yes _{\approx 0.64}
\]

\medskip
\noindent
Hence, we can conclude that the probability of receiving a loan for a married woman is $0.64$, much higher than that for a divorced woman ($0.40$).
\end{example}

\begin{table} 
\caption{Left and Right Logical Rules} 
\label{tab:LRtriangle} 
\centering 
\scalebox{.85}{
\begin{tabular}{c}

\AxiomC{$ \sigma \rhd u: \delta _{f}  $}
	\AxiomC{$ \sigma \rhd u: \gamma _{g}  $}
    \AxiomC{$ u: \delta ~ \Bot ~ u: \gamma  $}
	\RightLabel{{\tiny I$\rhd+$}}
	\TrinaryInfC{$ \sigma \rhd u: (\delta + \gamma) _{f+g}  $}
\DisplayProof

\\[0.8cm]

\AxiomC{$ \sigma \rhd u: (\delta + \gamma) _{h} $}
	\AxiomC{$ \sigma \rhd u: \delta _{f}  $}
	\RightLabel{{\tiny E$\rhd+$a}}
	\BinaryInfC{$ \sigma \rhd u: \gamma _{h-f}  $}
\DisplayProof

\\[0.8cm]

\AxiomC{$ \sigma \rhd u: (\delta + \gamma) _{h} $}
	\AxiomC{$ \sigma \rhd u: \gamma _{g}  $}
	\RightLabel{{\tiny E$\rhd+$b}}
	\BinaryInfC{$ \sigma \rhd u: \delta _{h-g}  $}
\DisplayProof

\\[0.8cm]

\AxiomC{$ \sigma , t: \gamma \rhd u: \delta _{f}  $}
	\AxiomC{$ \sigma , t: \beta \rhd u: \delta _{g}  $}
	\AxiomC{$ \sigma \rhd t: \gamma _{h}  $}
	\AxiomC{$ \sigma \rhd t: \beta _{i}  $}
    \AxiomC{$ t: \gamma ~ \Bot ~ t: \beta  $}
	\RightLabel{{\tiny I$+\rhd$}}
	\QuinaryInfC{$ \sigma , t: (\gamma + \beta) \rhd u: \delta _{\frac{f\cdot h + g \cdot i}{h+i}}  $}
\DisplayProof

\\[0.8cm]

\AxiomC{$ \sigma , t: \gamma + \beta \rhd u: \delta _{f}  $}
	\AxiomC{$ \sigma , t: \beta \rhd u: \delta _{g}  $}
	\AxiomC{$ \sigma \rhd t: \gamma _{h}  $}
	\AxiomC{$ \sigma \rhd t: \beta _{i}  $}
	\RightLabel{{\tiny E$+\rhd$a}}
	\QuaternaryInfC{$ \sigma , t: \gamma \rhd u: \delta _{\frac{f\cdot (h+i) - g\cdot i}{h}}  $}
\DisplayProof

\\[0.8cm]

\AxiomC{$ \sigma , t: \gamma + \beta \rhd u: \delta _{f}  $}
	\AxiomC{$ \sigma , t: \gamma \rhd u: \delta _{g}  $}
	\AxiomC{$ \sigma \rhd t: \gamma _{h}  $}
	\AxiomC{$ \sigma \rhd t: \beta _{i}  $}
	\RightLabel{{\tiny E$+\rhd$b}}
	\QuaternaryInfC{$ \sigma , t: \beta \rhd u: \delta _{\frac{f\cdot (h + i)-g\cdot  h}{i}}  $}
\DisplayProof

\\[0.8cm]

\AxiomC{$ \sigma , t: \gamma \rhd u: \delta _{f}  $}
\AxiomC{$ \sigma , t: \beta \rhd u: \delta _{g}  $}
\AxiomC{$ \sigma , t: \gamma + \beta \rhd u: \delta _{h}  $}
	\AxiomC{$ \sigma \rhd t: \beta _{i}  $}
	\RightLabel{{\tiny E$+\rhd$c}}
	\QuaternaryInfC{$ \sigma \rhd t: \gamma _{\frac{i\cdot (g-h)}{h - f}}  $}
\DisplayProof

\\[0.8cm]

\AxiomC{$ \sigma , t: \gamma \rhd u: \delta _{f}  $}
\AxiomC{$ \sigma , t: \beta \rhd u: \delta _{g}  $}
\AxiomC{$ \sigma , t: \gamma + \beta \rhd u: \delta _{h}  $}
	\AxiomC{$ \sigma \rhd t: \gamma _{i}  $}
	\RightLabel{{\tiny E$+\rhd$d}}
	\QuaternaryInfC{$ \sigma \rhd t: \beta _{\frac{i\cdot (f- h)}{h - g}}  $}
\DisplayProof

\\[0.8cm]

\AxiomC{$ \sigma \rhd u: \delta _{g} $}
	\RightLabel{{\tiny I/E$\rhd\bot$}}
	\doubleLine
	\UnaryInfC{$ \sigma  \rhd u: \delta ^{\bot} _{1-g} $}
\DisplayProof

\\[0.8cm]

\def\defaultHypSeparation{\hskip .1in}
\AxiomC{$ \sigma \rhd t: \beta _{f} $}
\AxiomC{$ \sigma \rhd u: \delta _{g} $}
\AxiomC{$ \sigma , t: \beta \rhd u: \delta _{h} $}
	\RightLabel{{\tiny I$\bot\rhd$}}
	\TrinaryInfC{$ \sigma, t: \beta^{\bot} \rhd u: \delta _{\frac{g-f\cdot h}{1-f}} $}
\DisplayProof

\\[0.8cm]

\def\defaultHypSeparation{\hskip .1in}
\AxiomC{$ \sigma \rhd t: \beta _{f} $}
\AxiomC{$ \sigma \rhd u: \delta _{g} $}
\AxiomC{$ \sigma , t: \beta^{\bot} \rhd u: \delta _{h} $}
	\RightLabel{{\tiny E$\bot\rhd$a}}
	\TrinaryInfC{$ \sigma , t: \beta \rhd u: \delta _{\frac{g+ h \cdot(f-1)}{f}}$}
\DisplayProof

\\[0.8cm]

\def\defaultHypSeparation{\hskip .1in}
\AxiomC{$ \sigma \rhd t: \beta _{f} $}
\AxiomC{$ \sigma , t :\beta \rhd  u: \delta _{g} $}
\AxiomC{$ \sigma , t: \beta^{\bot} \rhd u: \delta _{h} $}
	\RightLabel{{\tiny E$\bot\rhd$b}}
	\TrinaryInfC{$ \sigma \rhd u: \delta _{h-h\cdot f + g \cdot f}$}
\DisplayProof

\\[0.8cm]

\def\defaultHypSeparation{\hskip .1in}
\AxiomC{$ \sigma \rhd u: \delta _{f} $}
\AxiomC{$ \sigma , t: \beta \rhd u: \delta _{g} $}
\AxiomC{$ \sigma , t: \beta^{\bot} \rhd u: \delta _{h} $}
	\RightLabel{{\tiny E$\bot\rhd$c}}
	\TrinaryInfC{$ \sigma \rhd t: \beta _{\frac{f-h}{g-h}}$}
\DisplayProof

\\
\end{tabular} }
\end{table}

The rules $\rhd \bot $ deal with the probability that the target variable \textit{does not} receive a specific value, which can be easily calculated as $ 1 $ (representing certainty) minus the probability that the target variable \textit{does} receive the value.

The rules $\bot \rhd$ are a little more complex, dealing with negation on the left of $\rhd$, that is on the list of values assignments characterizing the subject of the query.
I$\bot \rhd$ is based on the following property of probability:\footnote{
This property easily follows from the law of total probabilities ($Pr(B) = Pr(A\times B)+ Pr(A ^{\bot} \times B)$) and the probability of the negation ($Pr(A ^{\bot}) = 1 - Pr(A)$).
}
\medskip
\[
Pr(B| A ^{\bot}) = \dfrac{Pr(B) - Pr(B| A)\cdot Pr(A)}{1 - Pr(A)}
\]

\medskip
Given definition~\ref{def:InvPrinc}, there are three E-rules corresponding to I$\bot \rhd$, one for each of its premises.
They can be used to eliminate the negation from the condition (E$\bot \rhd$a) and to derive the probability of the negated formula or of the target variable receiving the value when the condition is dropped (E$\bot \rhd$b and E$\bot \rhd$c).

\begin{example}[Negation]
As an example, the judgment seen in example~\ref{exa:Cond} has the negation ``not white'' in $ \sigma $, as the value of the variable ``ethnicity''.
We can use E$\bot \rhd$a to infer the probability of receiving a loan for a white woman, if we assume the following judgments:
\medskip
\[
Age: 27 ,\: M.S.: married + divorced,\: Gen. : f \rhd Loan : yes _{0.75}
\]
\[
Age: 27 ,\: M.S.: married + divorced,\: Gen. : f \rhd Etn.: white _{0.80}
\]

\medskip
Indeed, by applying the rule and approximating $\frac{0.75 + 0.60 \cdot (0.80 - 1)}{0.80}$ to $0.79$, we find:
\medskip
\[
Age: 27 ,\: M.S.: married + divorced, \: Etn.: white,\: Gen. : f \rhd Loan : yes _{\approx 0.79}
\]

\medskip
\noindent
Hence, we can conclude that the probability of receiving a loan for a white woman is $0.79$, much higher than those for a non-white woman ($0.60$).
\end{example}

\subsubsection{Extension of the language}
\label{subsub:ExtLang}

To make the system as simple as possible, we have kept the language very minimal.
However, in order to investigate the notion of copy of an ML system, it is convenient to have a system in which Learning Algorithm and Training Set are explicit in the notation.
Hence, we use the following notation
\medskip
\[
\Gamma \vdash A_{\sigma}(t): \beta  _{f}
\]

\medskip
\noindent
to mean that an ML system that implements Learning Algorithm $ A $ and is trained using Training Set $ \Gamma $ satisfies $ \sigma  \rhd t: \beta  _{f} $.
All the rules remain valid when we substitute $ \sigma  \rhd t: \beta  _{f} $ with $\Gamma \vdash A_{\sigma}(t): \beta  _{f}$, provided that we require the same Training Set $ \Gamma $ and Learning Algorithm $A$ to occur in all the premises and in the conclusion.
That is, logical rules compose and decompose logically variables and values \textit{for a given ML system}.
As an example, I$\times$1 becomes:

\begin{prooftree}
\AxiomC{$\Gamma \vdash A_{\sigma ,t: \beta}(u) : \delta _{f} $}
	\AxiomC{$ \Gamma \vdash A_{\sigma}(t): \beta _{g} $}
	\RightLabel{{\tiny I$\times$1}}
	\BinaryInfC{$ \Gamma \vdash A_{\sigma}(\langle t, u \rangle ): (\beta\times  \delta) _{f\cdot g} $}
\end{prooftree}

As for atomic judgments, we will use:
\medskip
\[
\Gamma \vdash A_{\sigma}(a): \alpha ^{1} _{f_{1}} \ldots \alpha ^{n} _{f_{n}}
\]

\medskip
\noindent
to mean that $\Gamma \vdash A_{\sigma}(a): \alpha ^{i} _{f_{i}}$ for every $ i $ s.t. $1 \leq i \leq n$ and $ \alpha ^{1} \ldots \alpha ^{n} $ are all the possible atomic values of $ a $.

\section{Trustworthy Copies}
\label{sec:TrustCopies}

In this section, we start to investigate the notion of trustworthiness of a copy of an ML system with respect to its original.
We begin by categorizing different kinds of copy of an original ML system and explaining how they are distinguished in our notation.
Then, we investigate different notions of trustworthiness for copies: Justifiably, Weakly, Equally and Almost Trustworthiness.

Different ways of defining trustworthiness of an ML system with respect to its original may depend on whether one focuses on local or global trustworthiness, i.e. on whether we evaluate single applications of the system to specific queries or trustworthiness across different applications.
Moreover, one may focus on queries with a specific logical structures.
In the present and the following section~\ref{sec:RelationsTrust}, we will be local and atomic: we will address trustworthiness only concerning \textit{single applications} of ML systems to specific lists of values attributions $\sigma$ and for specific atomic target variables $ a $.
Moreover, we will consider only atomic values $\alpha ^{1}, \ldots , \alpha ^{n}$ (assumed to be both exhaustive and mutually exclusive) for these target variables.
In summary, we will consider only trustworthiness with respect to single queries corresponding to atomic judgments.

This restricted notion of trustworthiness  will constitute the main core of our investigation.
We will retain these restrictions up until Section~\ref{sec:TrustLogic}, where we will investigate trustworthiness across different applications of an ML system and for queries that possibly regard logically complex variables and values.
We will see that for both these liberalizations there are natural extensions of the definitions seen in this section. 

\subsection{Different notions of copy}
\label{subsec:DifferentCopies}

Since there are two components in an ML system (Learning Algorithm and Training Set), we can distinguish two different kinds of copy of an original system, depending on the component that varies in the copy.
Of course, ML systems could diverge in both their components, but we will change only one of them at a time, in order to investigate the consequences of such a change \textit{ceteris paribus}.

\subsubsection{Different Learning Algorithm}

Let us first of all consider systems that implement different Learning Algorithms, while their Training Set and Test Set remain unchanged and so does obviously the intended task.
Notationally, this distinction is displayed by a variation of the uppercase Latin letter for the Learning Algorithm.
As an example, we can compare the probabilities of the atomic values $\alpha ^{1} \ldots \alpha ^{n}$ for the atomic variable $ a $ when two ML systems, one that implements Learning Algorithm $ A $ and another that implements Learning Algorithm $ B $, are trained using the same Training Set $\Gamma$ and interrogated using the same list of values attributions $ \sigma $:
\[
\Gamma \vdash A_{\sigma}(a): \alpha ^{1} _{f_{1}} \ldots \alpha ^{n} _{f_{n}}
\]
\[
\Gamma \vdash B_{\sigma}(a): \alpha ^{1} _{g_{1}} \ldots \alpha ^{n} _{g_{n}}
\]

\begin{example}[Different Learning Algorithms]
\label{exa:diffLA}
Let us consider two systems devised to decide the appropriate sentence for an offender and assume that they run different Learning Algorithms but are trained with the same Training Set of defendants convicted with various sentences.
The Algorithms could be different for several reasons.
As an example, only one of them could implement the following protection:
\begin{description}
\item[Fairness through unawareness:] the value of the variable ``ethnicity'' should not be used for deciding the sentence.
\end{description}
As a result of this difference, the two systems could behave differently when applied to the same list of values $ \sigma $ from the Test Set (the characteristics of a person found guilty for which the system has to decide the sentence).
\end{example}

\subsubsection{Different Training Set}

Let us now consider systems that implement the same Learning Algorithm, but are trained using different Training Sets.
Notationally, this distinction is displayed by a variation of the uppercase Greek letter for the Training Set.
As an example, we can compare the probabilities of the atomic values $\alpha ^{1} \ldots \alpha ^{n}$ for the atomic variable $ a $ when two ML systems, one trained with Training Set $\Gamma$ and another trained with Training Set $\Delta$, implement the same Learning Algorithm $ A $ and are interrogated using the same list of values attributions $ \sigma $:
\medskip
\[
\Gamma \vdash A_{\sigma}(a): \alpha ^{1} _{f_{1}} \ldots \alpha ^{n} _{f_{n}}
\]
\[
\Delta \vdash A_{\sigma}(a): \alpha ^{1} _{g_{1}} \ldots \alpha ^{n} _{g_{n}}
\]

\begin{example}[Different Training Set]

Let us consider again a pair of systems devised to decide the appropriate sentence for an offender.
Moreover, let us assume that they implement the same Learning Algorithm, but we do not know whether these systems satisfy \textit{fairness through unawareness}.
Then, we decide to train these systems differently:
\begin{itemize}
\item The first system is trained using a fair Training Set, that is one carefully selected so that, if two offenders share all the properties apart from race, they receive the same sentence;
\item The second system is trained using a biased Training Set, that is one in which there are cases of offenders sharing all properties apart from race but receiving different sentences.
\end{itemize}
When we test these systems, if we observe that only the first one behaves fairly, then we can conclude that the Leaning Algorithm has no \textit{built-in} protection, and fairness is achieved only during training.

\end{example}

\subsection{Different notions of trust}
\label{subsec:DiffTrust}

In this section, we define different notions of trustworthiness for a copy with respect to an original system.
As we have seen in subsection~\ref{subsec:DifferentCopies}, copies can diverge from their original system for different reasons: they can be based on different Learning Algorithms or they can be trained using different Training Sets.
Our notions of trustworthiness apply to both kinds of copy.
However, the epistemic gain obtained with trustworthy copies depends on the specific kind of copy.
Indeed, our trustworthiness criteria also evaluate the adequacy of the different Training Sets and Learning Algorithms.

\subsubsection{Justifiably Trustworthy}

A first, quite obvious, kind of trustworthiness regards systems that share the same behavior for all the possible values of a variable.
In this case, the systems attach the same probabilities to all these values.
Although this is a case in which trust is evident, it is still relevant for two reasons:
\begin{itemize}
\item It will work as a limit case of other, weaker, notions of trust;
\item It is useful to interpret the consequences of trustworthiness for the evaluation of Training Sets and Learning Algorithms.
\end{itemize}
Note that, since we deal only with atomic variables and atomic values, we can easily define this kind of trustworthiness by taking into account all the possible values $\alpha^{1}, \dots ,\alpha^{n}$ for the atomic variable at hand $ a $, since these are finite in number.

\subparagraph*{Different Learning Algorithms}

Let us first of all compare systems that share the same Training Set, but have different Learning Algorithms.
Of course, a copy is trustworthy with respect to its original, if all of their outputs coincide.

\begin{definition}[Justifiably Trustworthy copies with different Learning Algorithms]
Given two ML systems dubbed respectively the \textit{original} and the \textit{copy}, composed of different Learning Algorithms $A$ and $B$ and trained with the same Data Set $\Gamma$, we say that the \textit{copy} is Justifiably Trustworthy with respect to the \textit{original} considering feature $ a $ and the same row of values $ \sigma $ from the Test Set as the original iff $ \Gamma \vdash A_{\sigma}(a): \alpha ^{1} _{f_{1}} \ldots \alpha ^{n} _{f_{n}} $ and $ \Gamma \vdash B_{\sigma}(a): \alpha ^{1} _{g_{1}} \ldots \alpha ^{n} _{g_{n}} $, where $ \forall \alpha^{1}\dots\alpha^{n}\mid f_{i}=g_{i} $.
\end{definition}

Notice that, in this case, the relation of trustworthiness is symmetric; this will not be the case for other notions of trustworthiness.
In this case, the conclusion is not only that the copy is trustworthy with respect to the original system, but also that the Learning Algorithm $B$ behaves as reliably as $A$, when both are trained with $ \Gamma $ and applied to the input $\sigma$, even though they are (possibly) designed in different ways.

\begin{example}[JT with different L.A.: credit risk evaluation]
\label{exa:JTdiffLA}
Let us consider two systems for the evaluation of credit risk based on different Learning Algorithms, but trained and applied to the same Data Sets.
Let us moreover assume that we trust one of these systems (the original), since it seems to avoid biases when used to predict credit risk, but we do not know whether this is because of some protection implemented in the Learning Algorithm, or because of a fair Training Set.
If we know that the copy behaves like the original, of course we know that it can avoid biases as well. Hence, since it has been trained using a different Learning Algorithm, we have reasons to believe that fairness is not achieved through some protections in the Learning Algorithm: not only the copy itself is trustworthy in this specific application, but also the Training Set in itself seems trustworthy when applied for purposes analogous to the ones considered here.
Of course, by repeated changes in the Learning Algorithm and applications to new rows of the Test Set, we can confirm both the trustworthiness of the copy and the unbiased nature of the Training Set.
\end{example}

\subparagraph*{Different Training Set}

Let us now compare systems that share the same Learning Algorithm and Test Set, but have been trained using different Training Sets.

\begin{definition}[Justifiably Trustworthy copies with different Training Sets]
Given two ML systems dubbed respectively the \textit{original} and the \textit{copy}, composed of the same Learning Algorithms $A$ but trained with different Data Sets $\Gamma$ and $\Delta$, we say that the \textit{copy} is Justifiably Trustworthy with respect to the \textit{original} considering feature $ a $ and the same row of values $ \sigma $ from the same Test Set as the original iff $ \Gamma \vdash A_{\sigma}(a): \alpha ^{1} _{f_{1}} \ldots \alpha ^{n} _{f_{n}} $ and $ \Delta \vdash A_{\sigma}(a): \alpha ^{1} _{g_{1}} \ldots \alpha ^{n} _{g_{n}} $, where $ \forall \alpha^{1}\dots\alpha^{n}\mid f_{i}=g_{i} $.
\end{definition}

In this case, the conclusion is not only that the copy is trustworthy with respect to the original system, but also that the Learning Algorithm leads the ML system to find the same output for the input $\sigma$, regardless (to some extent) of the Training Set used to train the machine.
Of course, this might suggest the possibility that the Training Sets $\Gamma$ and $\Delta$, although superficially different, are deeply analogous to each other. In order to minimize this possibility, we have to test the system with multiple Training Sets, so as to avoid lucky coincidences.

\begin{example}[JT with different Train.Set.: diagnostic software]
\label{exa:JTdiffTrS}
Consider two diagnostic systems that implement the same Learning Algorithm, but that are trained with different Data Sets: the population of the patients of two different hospitals.
Let us moreover assume that we trust the original system when it is applied to the Test Set under analysis.
Since the copy is Justifiably Trustworthy with respect to the original, we know that we can trust it too.
Hence, we can wonder whether the populations of the two hospitals, although maybe superficially different, highlight analogous correlations between the relevant properties, or whether the Learning Algorithm is good at dismissing spurious correlations and biases from the Training Sets.
Of course, by repeated changes in the Training Sets and applications of these systems to new rows of the Test Set we can confirm both the trustworthiness of the copy and the unbiased nature of the Learning Algorithm.
\end{example}

\subsubsection{Equally Trustworthy}

The previous notion of Justifiably Trustworthiness is clearly too stringent: it asks that copy and original behave exactly the same for all the possible values of the variable. A less stringent notion of trustworthiness requires that they behave in the same way only regarding a \textit{relevant} subset of the possible values of the variable. We will say that the copy is Equally Trustworthy regarding the subset of \textit{relevant} values if and only if the original and the copy attribute the same probabilities to these values.
Since the order of the possible outputs $ \alpha ^{1}, ..., \alpha ^{n} $ is arbitrary, we will assume that the relevant outputs are always the first values $ \alpha ^{1}, ..., \alpha ^{m} $ with $ m \leq n $.

To see why such a notion of trustworthiness is useful, let us consider the following example.

\begin{example}[ET: diagnostic software]
\label{exa:ET}

Let us consider a diagnostic software for the detection of Hepatitis, and assume that it assigns to a patient the probability that they contracted this disease, with the following possible outputs:  $absent$, $minor$, $moderate$, $major$, and $extreme$.
Moreover, let us assume that we want to compare this system with another diagnostic software for the detection of Hepatitis and that we are interested only in moderate to extreme cases.
Of course, we can ignore all the probabilities attached to the values $absent$ and $minor$ by these systems, focusing only on the relevant outputs: $moderate$, $major$, and $extreme$.
Hence, we can ask that copy and original behave accordingly only regarding these values, that is that they are Equally Trustworthy regarding these values.
\end{example}

Note that, like for Justifiably Trustworthiness, also in this case the relation is symmetric.
Moreover, like for Justifiably Trustworthy copies, also for Equally Trustworthy copies we can distinguish different cases, depending on which component of the system is changed in the copy.

\subparagraph*{Different Learning Algorithms}

\begin{definition}[Equally Trustworthy copies with different Learning Algorithms]
Given two ML systems dubbed respectively the \textit{original} and the \textit{copy}, composed of different Learning Algorithms $A$ and $B$ and trained with the same Data Set $\Gamma$, we say that the \textit{copy} is Equally Trustworthy with respect to the \textit{original} regarding values $ \alpha ^{1} \ldots \alpha ^{m} $ (with $ m \leq n $) when considering feature $ a $ and the same row of values $ \sigma $ from the same Test Set as the original iff $ \Gamma \vdash A_{\sigma}(a): \alpha ^{1} _{f_{1}} \ldots \alpha ^{n} _{f_{n}} $ and $ \Gamma \vdash B_{\sigma}(a): \alpha ^{1} _{g_{1}} \ldots \alpha ^{n} _{g_{n}} $, where $ \forall i \: _{1\leq i \leq m} \; \mid g_{i}= f_{i}$.
\end{definition}

\subparagraph*{Different Training Set}

\begin{definition}[Equally Trustworthy copies with different Training Sets]
Given two ML systems dubbed respectively the \textit{original} and the \textit{copy}, composed of the same Learning Algorithm $A$ but trained with different Data Sets $\Gamma$ and $\Delta$, we say that the \textit{copy} is Equally Trustworthy with respect to the \textit{original} regarding values $ \alpha ^{1} \ldots \alpha ^{m} $ (with $ m \leq n $) when considering feature $ a $ and the same row of values $ \sigma $ from the same Test Set as the original iff $ \Gamma \vdash A_{\sigma}(a): \alpha ^{1} _{f_{1}} \ldots \alpha ^{n} _{f_{n}} $ and $ \Delta \vdash A_{\sigma}(a): \alpha ^{1} _{g_{1}} \ldots \alpha ^{n} _{g_{n}} $, where $ \forall i \: _{1\leq i \leq m} \; \mid g_{i}= f_{i}$.
\end{definition}

\subsubsection{Weakly Trustworthy}

Restricting the set of values for which copy and original are required to behave equivalently is not the only way of relaxing the notion of trustworthiness.
Another quite flexible notion of trustworthiness requires that the copy produces the \textit{relevant} values with \textit{equal or higher} frequency.
This essentially corresponds to the requirement that the frequency of the \textit{relevant} values does not decrease, i.e. to an application of a precautionary principle.
Note, moreover, that when we want to impose the opposite requirement that the \textit{relevant} values do not increase in frequency, we just have to select the complement of the \textit{relevant} values.
Hence, the definition is more flexible than it could seem at first glance.

Another restriction that we will impose here (we will relax it in the next subsection) is that the set of values for the output should be the same for the original system and the copy.
That is, even though the frequency of the output can be different, both systems should consider the same set of possible outputs for the variable.
Hence, we will call a copy Weakly Trustworthy regarding the subset of \textit{relevant} values iff the copy attributes the same or higher probabilities to these values than the original, and moreover if copy and original attribute to the variable the same set of possible values.
Note that this relation is not symmetric, as opposed to Justifiably and Equally Trustworthiness.

To see why such a notion of trustworthiness is useful, let us consider the following extension of example~\ref{exa:ET}.
\begin{example}[WT: diagnostic software]
\label{exa:WT}
Let us consider a pair of diagnostic systems, labeled the \textit{original} and the \textit{copy}, devised for the detection of \textit{Hepatitis}, and such that they give the probabilities of its severity, with the following possible outputs: $absent$, $minor$, $moderate$, $major$, and $extreme$.
Moreover, let us assume that the copy lightly overestimates the probability of $moderate$, $major$, and $extreme$ cases of \textit{Hepatitis}.
We could decide, based on pragmatic reasons, that such a system is trustworthy enough, especially when it is used to suggest further tests to patients.
\end{example}

Let us see the definition of Weak Trustworthiness for the different kinds of copy.

\subparagraph*{Different Learning Algorithms}

\begin{definition}[Weakly Trustworthy copies with different Learning Algorithms]
Given two ML systems dubbed respectively the \textit{original} and the \textit{copy}, composed of different Learning Algorithms $A$ and $B$ and trained with the same Data Set $\Gamma$, we say that the \textit{copy} is Weakly Trustworthy with respect to the \textit{original} regarding values $ \alpha ^{1} \ldots \alpha ^{m} $ (with $ m \leq n $) when considering feature $ a $ and the same row of values $ \sigma $ from the same Test Set as the original iff $ \Gamma \vdash A_{\sigma}(a): \alpha ^{1} _{f_{1}} \ldots \alpha ^{n} _{f_{n}} $ and $ \Gamma \vdash B_{\sigma}(a): \alpha ^{1} _{g_{1}} \ldots \alpha ^{n} _{g_{n}} $, where $ \forall i \: _{1\leq i \leq m} \; \mid g_{i}\geq f_{i}$ and $\forall i \; \mid g_{i} \neq 0 \leftrightarrow f_{i} \neq 0$.
\end{definition}

\subparagraph*{Different Training Set}

\begin{definition}[Weakly Trustworthy copies with different Training Sets]
Given two ML systems dubbed respectively the \textit{original} and the \textit{copy}, composed of the same Learning Algorithm $A$ but trained with different Data Sets $\Gamma$ and $\Delta$, we say that the \textit{copy} is Weakly Trustworthy with respect to the \textit{original} regarding values $ \alpha ^{1} \ldots \alpha ^{m} $ (with $ m \leq n $) when considering feature $ a $ and the same row of values $ \sigma $ from the same Test Set as the original iff $ \Gamma \vdash A_{\sigma}(a): \alpha ^{1} _{f_{1}} \ldots \alpha ^{n} _{f_{n}} $ and $ \Delta \vdash A_{\sigma}(a): \alpha ^{1} _{g_{1}} \ldots \alpha ^{n} _{g_{n}} $, where $ \forall i \: _{1\leq i \leq m} \; \mid g_{i}\geq f_{i}$ and $\forall i \; \mid g_{i} \neq 0 \leftrightarrow f_{i} \neq 0$.
\end{definition}

\subsubsection{Almost Trustworthy}

The last and least stringent notion of trustworthiness that we will consider is Almost Trustworthiness.
A copy is Almost Trustworthy compared to its original if produces the \textit{relevant} values with \textit{equal or higher} frequency, but the set of values for the output is possibly different for the original system and the copy.
Note that this relation is essentially identical to Weak Trustworthiness without the assumption that the two systems work on the same possible values for the target variable.

To see why such a notion of trustworthiness is useful, let us reconsider example~\ref{exa:ET} as follows: 

\begin{example}[AT: diagnostic software]
\label{exa:AT}
Let us consider a pair of diagnostic systems, labeled the \textit{original} and the \textit{copy}, devised for the detection of \textit{Hepatitis}, and such that they give the probabilities of its severity.
Let us also assume that the original system considers $absent$, $minor$, $moderate$, $major$, and $extreme$ as possible outputs, while the copy only considers $absent$, $moderate$, $major$, and $extreme$.\footnote{
Formally, we will model this situation with the copy always attributing probability $0$ to a minor case of Hepatitis.
}
Moreover, let us assume that the copy lightly overestimates the probability of $moderate$, $major$, and $extreme$ cases of \textit{Hepatitis}.
Note that this means that the copy moves the $minor$ cases of \textit{Hepatitis} to $moderate$, $major$, and $extreme$ cases, and not to $absent$ cases.
Hence, we could decide, based on pragmatic reasons, that such a system is trustworthy enough, especially when it is used to suggest further tests to patients.
\end{example}

\subparagraph*{Different Learning Algorithms}

\begin{definition}[Almost Trustworthy copies with different Learning Algorithms]
Given two ML systems dubbed respectively the \textit{original} and the \textit{copy}, composed of different Learning Algorithms $A$ and $B$ and trained with the same Data Set $\Gamma$, we say that the \textit{copy} is Almost Trustworthy with respect to the \textit{original} regarding values $ \alpha ^{1} \ldots \alpha ^{m} $ (with $ m \leq n $) when considering feature $ a $ and the same row of values $ \sigma $ from the same Test Set as the original iff $ \Gamma \vdash A_{\sigma}(a): \alpha ^{1} _{f_{1}} \ldots \alpha ^{n} _{f_{n}} $ and $ \Gamma \vdash B_{\sigma}(a): \alpha ^{1} _{g_{1}} \ldots \alpha ^{n} _{g_{n}} $, where $ \forall i \: _{1\leq i \leq m} \; \mid g_{i}\geq f_{i}$.
\end{definition}

\subparagraph*{Different Training Set}

\begin{definition}[Almost Trustworthy copies with different Training Sets]
Given two ML systems dubbed respectively the \textit{original} and the \textit{copy}, composed of the same Learning Algorithm $A$ but trained with different Data Sets $\Gamma$ and $\Delta$, we say that the \textit{copy} is Almost Trustworthy with respect to the \textit{original} regarding values $ \alpha ^{1} \ldots \alpha ^{m} $ (with $ m \leq n $) when considering feature $ a $ and the same row of values $ \sigma $ from the same Test Set as the original iff $ \Gamma \vdash A_{\sigma}(a): \alpha ^{1} _{f_{1}} \ldots \alpha ^{n} _{f_{n}} $ and $ \Delta \vdash A_{\sigma}(a): \alpha ^{1} _{g_{1}} \ldots \alpha ^{n} _{g_{n}} $, where $ \forall i \: _{1\leq i \leq m} \; \mid g_{i}\geq f_{i}$.
\end{definition}

\section{Relations between trustworthiness notions}
\label{sec:RelationsTrust}

Although we interpreted trustworthiness in different ways, depending on the kind of copy under investigation, the relation of entailment between them can be described in general, without taking into account the specific kind of copy.
Let us use the first lowercase letters of the gothic alphabet $ \textgoth{a},\textgoth{b},\textgoth{c},... $ (possibly indexed with natural numbers $ \textgoth{a}_{n},\textgoth{b}_{n},\textgoth{c}_{n},... $) to indicate applications of ML systems to a string $ \sigma $ of data and regarding a target variable $t$: $\langle \Gamma , A , \sigma , t\rangle$.
In this way, we will be able to speak of copies in general, without specifying whether the Training Set $\Gamma$ or the Learning Algorithm $A$ is changed in the copy.
Moreover, as in the previous section, let us consider only ML systems applied to atomic target variables $a$, and let us focus only on their possible atomic values $\alpha ^{1}, \ldots , \alpha ^{n}$.

\subsection{Entailment between trustworthiness notions}
\label{subsec:entailment}

\begin{table}
\caption{Fundamental properties of Trust Relations} 
\label{tab:propertiesTrust} 
\centering
\scalebox{0.85}{
\begin{tabular}{|c|c||c|c|}
\multicolumn{2}{c}{\textbf{Justifiably Trust}} & \multicolumn{2}{c}{\textbf{Equally Trust}}\\
\hline 
\textit{JT transitivity} & $\textgoth{a} \overset{JT}{\sim} \textgoth{b} \land \textgoth{b} \overset{JT}{\sim} \textgoth{c} \rightarrow \textgoth{a} \overset{JT}{\sim} \textgoth{c} $ & \textit{ET transitivity} & $\textgoth{a} \overset{ET}{{\underset{m}{\sim}}} \textgoth{b} \land \textgoth{b} \overset{ET}{{\underset{l}{\sim}}} \textgoth{c} \rightarrow \textgoth{a} \overset{ET}{{\underset{min(m,l)}{\sim}}} \textgoth{c} $ \\ 
\hline 
& & \textit{ET transitivity'} & $\textgoth{a} \overset{ET}{{\underset{m}{\sim}}} \textgoth{b} \land \textgoth{b} \overset{ET}{{\underset{m}{\sim}}} \textgoth{c} \rightarrow \textgoth{a} \overset{ET}{{\underset{m}{\sim}}} \textgoth{c} $\\ 
\hline 
\textit{JT reflexivity} & $\textgoth{a} \overset{JT}{\sim} \textgoth{a}$ & \textit{ET reflexivity} & $\textgoth{a} \overset{ET}{{\underset{m}{\sim}}} \textgoth{a}$\\ 
\hline 
\textit{JT symmetry} & $\textgoth{a} \overset{JT}{\sim} \textgoth{b} \rightarrow \textgoth{b} \overset{JT}{\sim} \textgoth{a} $ & \textit{ET symmetry} & $\textgoth{a} \overset{ET}{{\underset{m}{\sim}}} \textgoth{b} \rightarrow \textgoth{b} \overset{ET}{{\underset{l}{\sim}}} \textgoth{a} $, with $l \leq m$\\ 
\hline 
& & \textit{ET weakening} & $\textgoth{a} \overset{ET}{{\underset{m}{\sim}}} \textgoth{b} \rightarrow \textgoth{a} \overset{ET}{{\underset{l}{\sim}}} \textgoth{b} $, with $l \leq m$ \\ 
\hline
\multicolumn{4}{c}{}\\
\multicolumn{2}{c}{\textbf{Weakly Trust}} & \multicolumn{2}{c}{\textbf{Almost Trust}}\\
\hline 
\textit{WT transitivity} & $\textgoth{a} \overset{WT}{{\underset{m}{\leadsto}}} \textgoth{b} \land \textgoth{b} \overset{WT}{{\underset{l}{\leadsto}}} \textgoth{c} \rightarrow \textgoth{a} \overset{WT}{{\underset{min(m,l)}{\leadsto}}} \textgoth{c} $ & \textit{AT transitivity} & $\textgoth{a} \overset{AT}{{\underset{m}{\leadsto}}} \textgoth{b} \land \textgoth{b} \overset{AT}{{\underset{l}{\leadsto}}} \textgoth{c} \rightarrow \textgoth{a} \overset{AT}{{\underset{min(m,l)}{\leadsto}}} \textgoth{c} $\\ 
\hline 
\textit{WT transitivity'} & $\textgoth{a} \overset{WT}{{\underset{m}{\leadsto}}} \textgoth{b} \land \textgoth{b} \overset{WT}{{\underset{m}{\leadsto}}} \textgoth{c} \rightarrow \textgoth{a} \overset{WT}{{\underset{m}{\leadsto}}} \textgoth{c} $ & \textit{AT transitivity'} & $\textgoth{a} \overset{AT}{{\underset{m}{\leadsto}}} \textgoth{b} \land \textgoth{b} \overset{AT}{{\underset{m}{\leadsto}}} \textgoth{c} \rightarrow \textgoth{a} \overset{AT}{{\underset{m}{\leadsto}}} \textgoth{c} $\\ 
\hline 
\textit{WT reflexivity} & $\textgoth{a} \overset{WT}{{\underset{m}{\leadsto}}} \textgoth{a}$ & \textit{AT reflexivity} & $\textgoth{a} \overset{AT}{{\underset{m}{\leadsto}}} \textgoth{a}$ \\ 
\hline 
\textit{WT weakening} & $\textgoth{a} \overset{WT}{{\underset{m}{\leadsto}}} \textgoth{b} \rightarrow \textgoth{a} \overset{WT}{{\underset{l}{\leadsto}}} \textgoth{b} $, with $l \leq m$ & \textit{AT weakening} & $\textgoth{a} \overset{AT}{{\underset{m}{\leadsto}}} \textgoth{b} \rightarrow \textgoth{a} \overset{AT}{{\underset{l}{\leadsto}}} \textgoth{b} $, with $l \leq m$\\ 
\hline 
\end{tabular} 
}
\end{table}

First of all, with our new meta-variables, let us define some abbreviations for the Trust relations that we have defined in the previous section.
This will help us in comparing and composing them.

\begin{definition}[Trust Relations]
Abbreviations for the Trust relations:\

\begin{itemize}
\item $\textgoth{a} \overset{JT}{\sim} \textgoth{b} =_{df}$ $ \textgoth{a} $ is Justifiably Trustworthy with respect to $ \textgoth{b} $;
\item $\textgoth{a} \overset{ET}{\underset{m}{\sim}} \textgoth{b} =_{df}$ $ \textgoth{a} $ is Equally Trustworthy, with respect to $ \textgoth{b} $, regarding values $ \alpha ^{1}, \ldots , \alpha ^{m}$;
\item $\textgoth{a} \overset{WT}{\underset{m}{\leadsto}} \textgoth{b} =_{df}$ $ \textgoth{a} $ is Weakly Trustworthy, with respect to $ \textgoth{b} $, regarding values $ \alpha ^{1}, \ldots , \alpha ^{m}$;
\item $\textgoth{a} \overset{AT}{\underset{m}{\leadsto}} \textgoth{b} =_{df}$ $ \textgoth{a} $ is Almost Trustworthy, with respect to $ \textgoth{b} $, regarding values $ \alpha ^{1}, \ldots , \alpha ^{m}$.

\end{itemize}

\end{definition}

Let us now consider the fundamental properties of each trustworthiness relation.

\begin{observation}[Fundamental properties of Trust Relations]
JT, ET, WT and AT satisfy the following properties in table~\ref{tab:propertiesTrust}.
\end{observation}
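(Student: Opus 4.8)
The plan is to reduce every entry of table~\ref{tab:propertiesTrust} to an elementary statement about the probability vectors assigned by the three applications and then verify it directly; no entry requires more than the reflexivity and transitivity of $\geq$, the reflexivity, symmetry and transitivity of $=$, and elementary facts about finite index sets. Fix the atomic target variable $a$ with its values $\alpha^{1},\dots,\alpha^{n}$ listed in one common order, and for an application $\textgoth{x}$ write $x_{i}$ for the probability it assigns to $\alpha^{i}$. Unfolding the definitions, $\textgoth{a}\overset{JT}{\sim}\textgoth{b}$ reads $a_{i}=b_{i}$ for all $i\leq n$; $\textgoth{a}\overset{ET}{\underset{m}{\sim}}\textgoth{b}$ reads $a_{i}=b_{i}$ for all $i\leq m$; $\textgoth{a}\overset{WT}{\underset{m}{\leadsto}}\textgoth{b}$ reads $a_{i}\geq b_{i}$ for all $i\leq m$ together with the global support condition $a_{i}\neq 0\leftrightarrow b_{i}\neq 0$ for all $i\leq n$; and $\textgoth{a}\overset{AT}{\underset{m}{\leadsto}}\textgoth{b}$ reads $a_{i}\geq b_{i}$ for all $i\leq m$. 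The common ordering of the $\alpha^{i}$ is what makes the subscript $m$ refer to the same values in each application, and this is precisely what the transitivity clauses will rely on.

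I would then organize the verification by the type of property rather than by the relation, since the arguments are uniform across JT, ET, WT and AT. Reflexivity is immediate from $x_{i}=x_{i}$ and $x_{i}\geq x_{i}$ (and, for WT, from $x_{i}\neq 0\leftrightarrow x_{i}\neq 0$). The symmetry clauses, which appear only for JT and ET, follow from the symmetry of $=$: from $a_{i}=b_{i}$ on $i\leq m$ we get $b_{i}=a_{i}$ on $i\leq m$, hence a fortiori on any $i\leq l$ with $l\leq m$. I would also remark explicitly that symmetry fails for WT and AT, since $a_{i}\geq b_{i}$ does not yield $b_{i}\geq a_{i}$, which is why no such clause is claimed for them. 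The weakening clauses (ET, WT, AT) amount to shrinking the index range $\{1,\dots,m\}$ to $\{1,\dots,l\}$ with $l\leq m$: every conjunct surviving for $i\leq l$ is already among those assumed for $i\leq m$, and for WT the support condition is quantified over all $i\leq n$ and is therefore untouched by the change of $m$.

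The transitivity clauses are the only ones needing a small remark, and they remain routine. For ET transitivity I would use the set identity $\{1,\dots,m\}\cap\{1,\dots,l\}=\{1,\dots,\min(m,l)\}$: for each $i\leq\min(m,l)$ both hypotheses apply, so $a_{i}=b_{i}=c_{i}$, and transitivity$'$ is the special case $m=l$. WT and AT transitivity are identical with $\geq$ replacing $=$, chaining $a_{i}\geq b_{i}\geq c_{i}$ on $i\leq\min(m,l)$; for WT one additionally chains the two support equalities, using that ``having the same support'' is transitive, to recover the global condition relating $\textgoth{a}$ and $\textgoth{c}$. I expect the only place demanding care to be the bookkeeping of the index $m$ in the transitivity entries---specifically, justifying the passage to $\min(m,l)$ and confirming that the fixed common ordering of $\alpha^{1},\dots,\alpha^{n}$ is what licenses comparing the ``first $m$'' values across the two relations being composed; everything else is a one-line consequence of the order and equality axioms on $\mathbb{R}$.
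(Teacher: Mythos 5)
Your proposal is correct and follows essentially the same route as the paper, which disposes of all entries in one line by appealing to the corresponding properties of $=$ and $\leq$ together with ``obvious considerations concerning $m$ and $l$''; you simply make those considerations explicit (the $\min(m,l)$ index bookkeeping, the globally quantified support condition for WT, and the failure of symmetry for WT/AT). No gap.
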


\begin{proof}
The properties of JT follow from the same properties of $ = $.
The properties of ET follow from the same properties of $ = $ and obvious considerations concerning $ m $ and $ l $.
The properties of WT and AT follow from the same properties of $ \leq $ and obvious considerations concerning $ m $ and $ l $.
\end{proof}

\begin{table}
\caption{Coordination Principles} 
\label{tab:CoordinationP} 
\centering
\scalebox{0.90}{
\begin{tabular}{|c|c|}
\hline 
\textit{AT Bottom} & $(\textgoth{a} \overset{ET}{{\underset{m}{\leadsto}}} \textgoth{b} \lor \textgoth{a} \overset{WT}{{\underset{m}{\leadsto}}} \textgoth{b} \lor \textgoth{a} \overset{JT}{\sim} \textgoth{b})\rightarrow \textgoth{a} \overset{AT}{{\underset{m}{\leadsto}}} \textgoth{b} $ \\
\hline
\textit{JT Top} & $\textgoth{a} \overset{JT}{\sim} \textgoth{b} \rightarrow \forall m (\textgoth{a} \overset{AT}{{\underset{m}{\leadsto}}} \textgoth{b} \land \textgoth{a} \overset{ET}{{\underset{m}{\leadsto}}} \textgoth{b} \land \textgoth{a} \overset{WT}{{\underset{m}{\leadsto}}} \textgoth{b}) $ \\ 
\hline 
\textit{JT Top'} & $\textgoth{a} \overset{JT}{\sim} \textgoth{b} \rightarrow \forall m (\textgoth{a} \overset{ET}{{\underset{m}{\leadsto}}} \textgoth{b} \land \textgoth{a} \overset{WT}{{\underset{m}{\leadsto}}} \textgoth{b}) $ \\ 
\hline 
\textit{m=n to Top} & $((\textgoth{a} \overset{ET}{{\underset{m}{\leadsto}}} \textgoth{b} \lor \textgoth{a} \overset{WT}{{\underset{m}{\leadsto}}} \textgoth{b} \lor \textgoth{a} \overset{AT}{{\underset{m}{\leadsto}}} \textgoth{b})\land m=n) \rightarrow \textgoth{a} \overset{JT}{\sim} \textgoth{b} $ \\
\hline 
\textit{AT + m=n = Top} & $(\textgoth{a} \overset{AT}{{\underset{m}{\leadsto}}} \textgoth{b} \land m=n) \rightarrow \textgoth{a} \overset{JT}{\sim} \textgoth{b} $ \\
\hline
\textit{Semi-Antisymmetry AT} & $(\textgoth{a} \overset{AT}{{\underset{m}{\leadsto}}} \textgoth{b} \land \textgoth{b} \overset{AT}{{\underset{l}{\leadsto}}} \textgoth{a}) \rightarrow \textgoth{a} \overset{ET}{{\underset{min(m,l)}{\sim}}} \textgoth{b} $ \\ 
\hline 
\textit{Semi-Antisymmetry WT} & $(\textgoth{a} \overset{WT}{{\underset{m}{\leadsto}}} \textgoth{b} \land \textgoth{b} \overset{WT}{{\underset{l}{\leadsto}}} \textgoth{a}) \rightarrow \textgoth{a} \overset{ET}{{\underset{min(m,l)}{\sim}}} \textgoth{b} $ \\ 
\hline 
\end{tabular} 
}
\end{table}

Let us now see some \textit{coordination principles} that express the relations between \textit{different} notions of trustworthiness.

\begin{observation}[Coordination Principles]
JT, ET, WT, and AT are connected by the properties in table~\ref{tab:CoordinationP}.
\end{observation}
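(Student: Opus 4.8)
The plan is to unfold every trust relation into the elementary arithmetic condition it abbreviates, and then verify each entry of table~\ref{tab:CoordinationP} as a statement about two probability vectors. Writing $a_i$ and $b_i$ for the probabilities that the applications $\textgoth{a}$ and $\textgoth{b}$ respectively attach to the value $\alpha^i$, the four relations unfold as follows: $\textgoth{a} \overset{JT}{\sim} \textgoth{b}$ iff $a_i = b_i$ for all $i \leq n$; $\textgoth{a} \overset{ET}{\underset{m}{\sim}} \textgoth{b}$ iff $a_i = b_i$ for all $i \leq m$; $\textgoth{a} \overset{AT}{\underset{m}{\leadsto}} \textgoth{b}$ iff $a_i \geq b_i$ for all $i \leq m$; and $\textgoth{a} \overset{WT}{\underset{m}{\leadsto}} \textgoth{b}$ iff $a_i \geq b_i$ for all $i \leq m$ together with $a_i \neq 0 \leftrightarrow b_i \neq 0$ for all $i \leq n$. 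Once everything is in this form, most of the table reduces to the monotonicity of $=$ and $\leq$ and to dropping or adding a side condition.

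First I would dispatch the ``downward'' principles that only require weakening a condition. For \emph{AT Bottom}, each of ET, WT, and JT entails AT's defining inequality $a_i \geq b_i$ on $i \leq m$: ET and JT give equality (hence $\geq$), and WT gives the inequality directly while AT simply forgets WT's extra support condition. For \emph{JT Top} and \emph{JT Top'}, equality $a_i = b_i$ on all of $i \leq n$ immediately yields, for every $m \leq n$, equality on $i \leq m$ (ET), the inequality on $i \leq m$ (AT), and the support equivalence $a_i \neq 0 \leftrightarrow b_i \neq 0$ (WT). For the two \emph{Semi-Antisymmetry} rows, pairing $\textgoth{a} \overset{AT}{\underset{m}{\leadsto}} \textgoth{b}$ (giving $a_i \geq b_i$ on $i \leq m$) with $\textgoth{b} \overset{AT}{\underset{l}{\leadsto}} \textgoth{a}$ (giving $b_i \geq a_i$ on $i \leq l$) forces $a_i = b_i$ on $i \leq \min(m,l)$, which is exactly ET at $\min(m,l)$; the WT version follows either identically or by first applying AT Bottom to reduce WT to AT.

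The one genuinely non-routine ingredient, and hence the main obstacle, is the pair \emph{m=n to Top} and \emph{AT + m=n = Top}, where an inequality must be upgraded to an equality. Here I would invoke the standing assumption of section~\ref{sec:TrustCopies} that $\alpha^1, \ldots, \alpha^n$ are the \emph{exhaustive} and \emph{mutually exclusive} atomic values of $a$, so that the outputs are genuine probability distributions with $\sum_{i=1}^n a_i = \sum_{i=1}^n b_i = 1$. When $m = n$, AT gives $a_i \geq b_i$ for every $i \leq n$; summing these $n$ inequalities yields $1 = \sum_i a_i \geq \sum_i b_i = 1$, so the total is tight and no individual inequality can be strict, whence $a_i = b_i$ for all $i$, i.e. $\textgoth{a} \overset{JT}{\sim} \textgoth{b}$. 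This settles \emph{AT + m=n = Top} directly, and \emph{m=n to Top} follows since its ET disjunct already gives equality on all of $i \leq n$ while its WT and AT disjuncts reduce to this case via AT Bottom. The only point needing care is to confirm that the normalization $\sum_i a_i = \sum_i b_i = 1$ is licensed by the exhaustivity and mutual-exclusivity conventions adopted for atomic target variables; with that in hand, the argument is purely the observation that a saturated sum of inequalities cannot contain a strict term.
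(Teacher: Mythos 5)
Your proof is correct, and it is worth noting that the paper itself states this observation without any proof at all (the preceding observation on fundamental properties gets a one-line proof "from the same properties of $=$ and $\leq$", but the coordination principles are left entirely to the reader). Your unfolding of the four relations into componentwise conditions on the two probability vectors is exactly the intended reading, and the routine entries (\emph{AT Bottom}, \emph{JT Top}, \emph{JT Top'}, the two \emph{Semi-Antisymmetry} rows) do reduce to dropping or combining the conditions as you describe. The genuinely valuable part of your write-up is that you isolate the one entry that does \emph{not} follow from properties of $=$ and $\leq$ alone: \emph{AT + m=n = Top} (and hence \emph{m=n to Top}) requires upgrading $g_i \geq f_i$ for all $i \leq n$ to equality, which is impossible without the normalization $\sum_{i=1}^{n} f_i = \sum_{i=1}^{n} g_i = 1$. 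That normalization is indeed licensed by the paper's standing assumption in section~\ref{sec:TrustCopies} that $\alpha^{1}, \ldots, \alpha^{n}$ are the exhaustive and mutually exclusive atomic values of the target variable, and your saturated-sum argument is the right way to use it. One minor remark: in reducing \emph{Semi-Antisymmetry WT} to \emph{Semi-Antisymmetry AT} via \emph{AT Bottom} you discard WT's support condition, which is harmless here since the conclusion $\textgoth{a} \overset{ET}{\underset{min(m,l)}{\sim}} \textgoth{b}$ only concerns the first $\min(m,l)$ coordinates; the argument goes through as stated.
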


The basic relations of entailment between notions of Trust, which underlie all these coordination principles are summed up in figure~\ref{fig:entail}.
Before addressing composition of Trust relations, let us briefly discuss interdependence of these principles.

\begin{observation}[Derivability of Transitivity]
For ET, WT, and AT \textit{Transitivity} is derivable from \textit{Teakening} and \textit{Transitivity'}.
\end{observation}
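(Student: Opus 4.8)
The plan is to observe that \emph{Transitivity} differs from \emph{Transitivity'} only in that its two hypotheses may carry different indices $m$ and $l$, whereas \emph{Transitivity'} requires them to coincide. The role of \emph{Weakening} is precisely to close this gap: it allows us to lower the index of a trust relation to any smaller value, so in particular we can lower both hypotheses to their common minimum and then invoke \emph{Transitivity'}.

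Concretely, I would argue as follows for ET; the arguments for WT and AT are verbatim the same, replacing $\sim$ by $\leadsto$ and the label $ET$ by $WT$ or $AT$. Assume $\textgoth{a} \overset{ET}{\underset{m}{\sim}} \textgoth{b}$ and $\textgoth{b} \overset{ET}{\underset{l}{\sim}} \textgoth{c}$, and set $k = \min(m,l)$. Since $k \leq m$, \emph{Weakening} applied to the first hypothesis yields $\textgoth{a} \overset{ET}{\underset{k}{\sim}} \textgoth{b}$; since $k \leq l$, \emph{Weakening} applied to the second hypothesis yields $\textgoth{b} \overset{ET}{\underset{k}{\sim}} \textgoth{c}$. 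Now both relations hold at the \emph{same} index $k$, so \emph{Transitivity'} applies and gives $\textgoth{a} \overset{ET}{\underset{k}{\sim}} \textgoth{c}$, which is exactly $\textgoth{a} \overset{ET}{\underset{\min(m,l)}{\sim}} \textgoth{c}$, the conclusion of \emph{Transitivity}.

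There is no genuine obstacle here: the derivation is a three-line syllogism whose only arithmetic content is the pair of trivial inequalities $\min(m,l) \leq m$ and $\min(m,l) \leq l$, which license the two applications of \emph{Weakening}. The point worth flagging is conceptual rather than technical, namely that this reduction explains \emph{why} the index in the conclusion of \emph{Transitivity} must be $\min(m,l)$: it is the largest index at which both hypotheses can simultaneously be guaranteed, and \emph{Weakening} is exactly what makes each relation available at that index so that the uniform-index principle \emph{Transitivity'} can fire.
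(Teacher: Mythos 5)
Your proposal is correct and coincides with the paper's own proof: both apply \emph{Weakening} to each hypothesis to bring the indices down to $\min(m,l)$ and then invoke \emph{Transitivity'} at that common index. Nothing is missing.
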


\begin{proof}
The (simple) proof is as follows:
\begin{prooftree}
\AxiomC{$\textgoth{a} \overset{XT}{{\underset{m}{\leadsto}}} \textgoth{b}$}
\RightLabel{Weakening}
\UnaryInfC{$\textgoth{a} \overset{XT}{{\underset{min(m,l)}{\leadsto}}} \textgoth{b}$}

\AxiomC{$\textgoth{b} \overset{XT}{{\underset{l}{\leadsto}}} \textgoth{c}$}
\RightLabel{Weakening}
\UnaryInfC{$\textgoth{b} \overset{XT}{{\underset{min(m,l)}{\leadsto}}} \textgoth{c}$}

\RightLabel{Transitivity$^\prime$}
\BinaryInfC{$\textgoth{a} \overset{XT}{{\underset{min(m,l)}{\leadsto}}} \textgoth{c}$}
\end{prooftree}
\end{proof}

\begin{observation}[Derivability of \textit{m=n to Top}]
The property \textit{m=n to Top} is derivable from \textit{AT Bottom} and \textit{AT + m=n = Top}.
\end{observation}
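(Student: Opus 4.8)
The plan is to assume the antecedent of \textit{m=n to Top} and derive its consequent $\textgoth{a} \overset{JT}{\sim} \textgoth{b}$ by a short case analysis on the disjunction, using \textit{AT Bottom} to collapse that disjunction into a single Almost Trustworthiness claim and then \textit{AT + m=n = Top} to lift it to Justifiable Trustworthiness. The whole argument is purely propositional, so no properties of $=$, $\leq$, or the indices $m,n$ beyond what the two cited principles already encapsulate are needed.

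Concretely, I would first assume the hypothesis $(\textgoth{a} \overset{ET}{{\underset{m}{\leadsto}}} \textgoth{b} \lor \textgoth{a} \overset{WT}{{\underset{m}{\leadsto}}} \textgoth{b} \lor \textgoth{a} \overset{AT}{{\underset{m}{\leadsto}}} \textgoth{b})$ together with $m=n$. The key observation is that each of the three disjuncts already entails $\textgoth{a} \overset{AT}{{\underset{m}{\leadsto}}} \textgoth{b}$ at the same index: for the first disjunct $\textgoth{a} \overset{ET}{{\underset{m}{\leadsto}}} \textgoth{b}$ and the second disjunct $\textgoth{a} \overset{WT}{{\underset{m}{\leadsto}}} \textgoth{b}$ this is exactly what \textit{AT Bottom} supplies, since both ET and WT appear among the disjuncts of its antecedent; the third disjunct is $\textgoth{a} \overset{AT}{{\underset{m}{\leadsto}}} \textgoth{b}$ itself. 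Hence by a three-way disjunction elimination the hypothesised disjunction yields $\textgoth{a} \overset{AT}{{\underset{m}{\leadsto}}} \textgoth{b}$.

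Second, keeping the side hypothesis $m=n$, I would form the conjunction $\textgoth{a} \overset{AT}{{\underset{m}{\leadsto}}} \textgoth{b} \land m=n$ and feed it into \textit{AT + m=n = Top}, whose consequent is precisely $\textgoth{a} \overset{JT}{\sim} \textgoth{b}$. Discharging the initial assumption then delivers the implication \textit{m=n to Top} in full.

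There is no genuine obstacle here; the only point requiring a little care is the mismatch between the two disjunctions. The antecedent of \textit{AT Bottom} ranges over ET, WT and JT, whereas the antecedent of \textit{m=n to Top} ranges over ET, WT and AT. Accordingly, \textit{AT Bottom} may be invoked only to dispatch the ET and WT cases, and the AT disjunct must instead be routed directly into \textit{AT + m=n = Top} without passing through \textit{AT Bottom}. Keeping this routing straight is all that the derivation demands.
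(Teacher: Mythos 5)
Your proposal is correct and matches the paper's (much terser) argument, which simply invokes transitivity of implication: the disjunction is collapsed to $\textgoth{a} \overset{AT}{{\underset{m}{\leadsto}}} \textgoth{b}$ via \textit{AT Bottom} (for the ET and WT disjuncts) and trivially for the AT disjunct, after which \textit{AT + m=n = Top} yields $\textgoth{a} \overset{JT}{\sim} \textgoth{b}$. Your explicit note about the mismatch between the disjuncts of \textit{AT Bottom} and of \textit{m=n to Top} is a correct and careful spelling-out of what the paper leaves implicit.
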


\begin{proof}
The proof trivially follows for transitivity of implication.
\end{proof}

\begin{observation}[Derivability of \textit{JT Top}]
\textit{JT Top} can be derived from \textit{AT Bottom} and \textit{JT Top'}.
\begin{description}
\item[] $\textgoth{a} \overset{JT}{\sim} \textgoth{b} \rightarrow \forall m (\textgoth{a} \overset{ET}{{\underset{m}{\leadsto}}} \textgoth{b} \land \textgoth{a} \overset{WT}{{\underset{m}{\leadsto}}} \textgoth{b}) $.
\end{description}
\end{observation}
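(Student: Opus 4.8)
The plan is to notice that \textit{JT Top} and \textit{JT Top'} share the same antecedent $\textgoth{a} \overset{JT}{\sim} \textgoth{b}$, and that the consequent of \textit{JT Top} is exactly the consequent of \textit{JT Top'} conjoined with the single extra clause $\forall m\, \textgoth{a} \overset{AT}{{\underset{m}{\leadsto}}} \textgoth{b}$. Hence it suffices to recover this additional AT conjunct from the standing assumption, since \textit{JT Top'} already supplies the ET and WT conjuncts.

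First I would assume the antecedent $\textgoth{a} \overset{JT}{\sim} \textgoth{b}$ and apply \textit{JT Top'} to it, obtaining $\forall m (\textgoth{a} \overset{ET}{{\underset{m}{\leadsto}}} \textgoth{b} \land \textgoth{a} \overset{WT}{{\underset{m}{\leadsto}}} \textgoth{b})$. This already delivers the ET and WT conjuncts of the target, for every $m$.

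Next I would fix an arbitrary $m$ and invoke \textit{AT Bottom}, whose antecedent is the disjunction $\textgoth{a} \overset{ET}{{\underset{m}{\leadsto}}} \textgoth{b} \lor \textgoth{a} \overset{WT}{{\underset{m}{\leadsto}}} \textgoth{b} \lor \textgoth{a} \overset{JT}{\sim} \textgoth{b}$, whose third disjunct is literally the standing assumption. By $\lor$-introduction from $\textgoth{a} \overset{JT}{\sim} \textgoth{b}$ and modus ponens on \textit{AT Bottom}, I obtain $\textgoth{a} \overset{AT}{{\underset{m}{\leadsto}}} \textgoth{b}$; since $m$ was arbitrary this holds universally, supplying the missing AT conjunct. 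Conjoining it with the output of \textit{JT Top'} under the universal quantifier and discharging the assumption yields \textit{JT Top}.

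The derivation is entirely propositional, so there is no genuine obstacle: the only point worth flagging is that the third disjunct of \textit{AT Bottom} coincides verbatim with the antecedent of \textit{JT Top}, so the assumption can be fed into \textit{AT Bottom} without any reformulation, and no appeal to the numerical parameters $m$ and $n$ is required.
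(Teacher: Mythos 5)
Your proposal is correct and matches the paper's own (one-line) argument: the paper derives \textit{JT Top} by composing \textit{JT Top'} for the ET and WT conjuncts with an application of \textit{AT Bottom} to the JT assumption (via disjunction introduction) for the AT conjunct, which is exactly the propositional derivation you spell out. No further comment is needed.
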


\begin{proof}
The proof trivially follows for transitivity of implication.
\end{proof}

\begin{observation}[No entailment between ET and WT]
None of the previous results establish an entailment relation between ET and WT.
In general: $\textgoth{a} \overset{ET}{{\underset{m}{\leadsto}}} \textgoth{b} \not\leftrightarrows \textgoth{a} \overset{WT}{{\underset{m}{\leadsto}}} \textgoth{b}$;
\end{observation}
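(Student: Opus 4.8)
The plan is to refute both directions of entailment by exhibiting explicit pairs of atomic judgments, one satisfying ET while violating WT and one satisfying WT while violating ET. Throughout I will use that the atomic values $\alpha^{1}, \ldots, \alpha^{n}$ are exhaustive and mutually exclusive, so that any admissible output distribution $f_{1}, \ldots, f_{n}$ (and likewise $g_{1}, \ldots, g_{n}$) sums to $1$; a proposed counterexample counts only if both distributions respect this constraint.

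First I would show that ET does not entail WT. The decisive difference is that $\textgoth{a} \overset{ET}{\underset{m}{\sim}} \textgoth{b}$ constrains only the relevant values $\alpha^{1}, \ldots, \alpha^{m}$, whereas $\textgoth{a} \overset{WT}{\underset{m}{\leadsto}} \textgoth{b}$ additionally imposes the global support condition $g_{i} \neq 0 \leftrightarrow f_{i} \neq 0$ for every $i$. It therefore suffices to let the two systems agree exactly on the relevant values while their supports differ on some non-relevant value. Concretely, taking $n = 3$ and $m = 1$, let the original $\textgoth{b}$ assign $f_{1} = 0.5$, $f_{2} = 0.5$, $f_{3} = 0$ and the copy $\textgoth{a}$ assign $g_{1} = 0.5$, $g_{2} = 0.3$, $g_{3} = 0.2$. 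Then $g_{1} = f_{1}$ yields $\textgoth{a} \overset{ET}{\underset{1}{\sim}} \textgoth{b}$, but $g_{3} \neq 0$ together with $f_{3} = 0$ breaks the support biconditional, so WT fails.

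Next I would establish the converse non-entailment, that WT does not entail ET. Here the point is simply that WT tolerates a strict increase $g_{i} > f_{i}$ on a relevant value, which ET forbids. With $n = 2$ and $m = 1$, take the original $f_{1} = 0.3$, $f_{2} = 0.7$ and the copy $g_{1} = 0.5$, $g_{2} = 0.5$. Then $g_{1} = 0.5 \geq 0.3 = f_{1}$ and both distributions have full support, so $\textgoth{a} \overset{WT}{\underset{1}{\leadsto}} \textgoth{b}$ holds, whereas $g_{1} \neq f_{1}$ makes ET fail.

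The one point that needs care, and what I would flag as the only (mild) obstacle, is respecting the sum-to-one constraint, which in the first direction forces $n \geq 3$. Indeed, for $n = 2$ and $m = 1$ the equality $g_{1} = f_{1}$ demanded by ET already forces $g_{2} = f_{2}$; the support condition then holds automatically and ET collapses into WT. Choosing $n = 3$, so that a non-relevant value can carry the probability mass on which the two supports disagree, is exactly what pries the two notions apart. With this care taken, the two counterexamples together give $\textgoth{a} \overset{ET}{\underset{m}{\sim}} \textgoth{b} \not\leftrightarrows \textgoth{a} \overset{WT}{\underset{m}{\leadsto}} \textgoth{b}$, as claimed.
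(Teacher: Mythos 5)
Your proposal is correct, and it actually does more work than the paper, which states this observation without any proof at all. Both counterexamples check out against the definitions: in the first, $g_{1}=f_{1}=0.5$ gives $\textgoth{a} \overset{ET}{\underset{1}{\sim}} \textgoth{b}$ while $g_{3}=0.2\neq 0=f_{3}$ violates the support biconditional $\forall i\,(g_{i}\neq 0 \leftrightarrow f_{i}\neq 0)$ required for WT; in the second, $g_{1}=0.5\geq 0.3=f_{1}$ with full support gives WT while the strict inequality defeats ET. Both distributions sum to one, as they must since the $\alpha^{i}$ are assumed exhaustive and mutually exclusive. Your identification of exactly which clause fails in each direction matches the complementary conditions the paper records in Figure~\ref{fig:entail}(b) (the zero-support condition to pass from ET to WT, and equality on the relevant values to pass from WT to ET), so your counterexamples can be read as witnessing that neither complementary condition is redundant. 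The closing remark that for $n=2$, $m=1$ the sum-to-one constraint forces ET to collapse into WT is a genuine refinement of the paper's bare ``in general'' claim: it shows the non-entailment from ET to WT requires $n\geq 3$ (or more precisely, a non-relevant value on which mass can be redistributed), a boundary case the paper does not mention.
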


\begin{figure}
\begin{subfigure}[h]{0.5\textwidth}
\scalebox{.55}{
\begin {tikzpicture}[-latex ,auto ,node distance =4 cm and 5cm ,on grid ,
semithick ,
state/.style ={ circle ,top color =white , bottom color = processblue!20 ,
draw,processblue , text=blue , minimum width =1 cm}]
\node[state] (C) {$JT$};
\node[state] (A) [above left=of C] {$ET$};
\node[state] (B) [above right =of C] {$WT$};
\node[state] (D) [above right =of A] {$AT$};
\path (C) edge [bend left =25] (A);
\path (C) edge [bend right =25] (B);
\path (A) edge [bend left =25] (D);
\path (B) edge [bend right =25] (D);
\path (C) edge (D);
\end{tikzpicture}
}
\caption{Entailment relations between notions of Trust}
\end{subfigure}
\begin{subfigure}[h]{0.5\textwidth}
\scalebox{.55}{
\begin {tikzpicture}[-latex ,auto ,node distance =4 cm and 5cm ,on grid ,
semithick ,
state/.style ={ circle ,top color =white , bottom color = processblue!20 ,
draw,processblue , text=blue , minimum width =1 cm}]
\node[state] (C) {$JT$};
\node[state] (A) [above left=of C] {$ET$};
\node[state] (B) [above right =of C] {$WT$};
\node[state] (D) [above right =of A] {$AT$};
\path (A) edge [bend right =25] node[below =0 cm, left =0.2 cm] {$m=n$} (C);
\path (B) edge [bend left =25] node[below =0 cm, right =0.2 cm] {$m=n$} (C);
\path (D) edge [bend right =25] node[below =0 cm, left =0.2 cm] {$\forall i _{1\leq i \leq m} (f_{i}=g_{i} )$} (A);
\path (D) edge [bend left =25] node[below =0 cm, right =0.2 cm] {$\forall i (f_{i}= 0 \leftrightarrow g_{i}=0)$} (B);
\path (D) edge [bend right =50] node[very near start] {$m=n$} (C);
\path (A) edge [bend left =25] node[below =0.2 cm] {$\forall i (f_{i}= 0 \leftrightarrow g_{i}=0)$} (B);
\path (B) edge [bend left =25] node[above =0.2 cm] {$\forall i _{1\leq i \leq m} (f_{i}=g_{i} )$} (A);
\end{tikzpicture}
}
\caption{Complementary conditions for inverse entailment relations between notions of Trust}
\end{subfigure}
\caption{Entailment relations between notions of Trust and complementary conditions}
\label{fig:entail}
\end{figure}
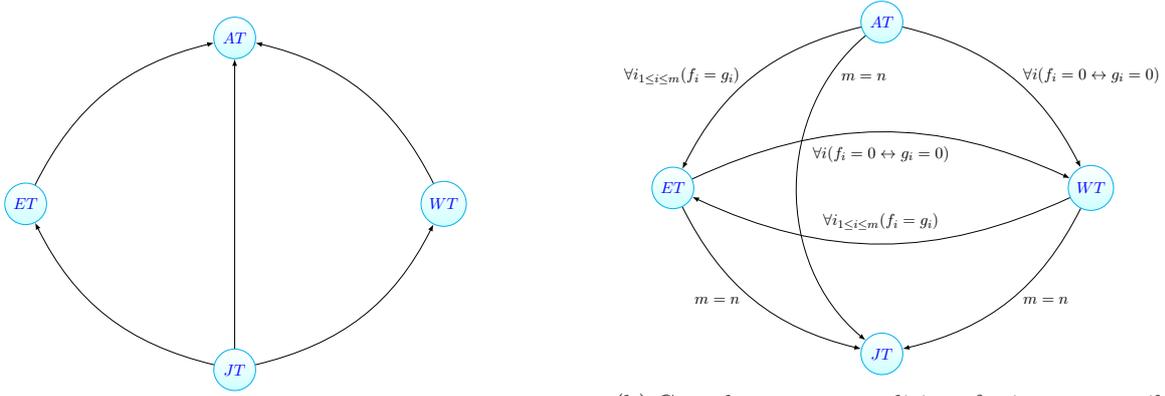

\subsection{Compositions of Trust relations}
\label{subsec:composition}

Once trustworthiness relations are defined in terms of one another, it is possible to investigate what happens when such relations are composed.
As an example, let us consider two systems (called `the originals') such that each of them is a Justifiably Trustworthy copy of the other.
Moreover, let us make a further copy of each of these original systems, such that each of these new copies is only Weakly Trustworthy with respect to its original.
What can we say about the relation of trust between the new copies?
In this subsection, we will discuss these kinds of question.
This is useful in terms of the preservation of trustworthiness under the modular composition of ML systems.
The conclusions that we will reach are mostly negative: by composing Trust relations, we can never be sure of raising the degree of trustworthiness of the copies.

Let us first present a positive (although quite weak) result about the composition of Justifiably, Equally, and Almost Trust relations.

\begin{theorem}
$(\textgoth{a}_{0} \overset{JT}{{\sim}} \textgoth{b}_{0} \land \textgoth{a}_{1} \overset{ET}{{\underset{m}{\sim}}} \textgoth{a}_{0} \land \textgoth{b}_{1} \overset{AT}{{\underset{m}{\leadsto}}} \textgoth{b}_{0}) \rightarrow  \textgoth{b}_{1} \overset{AT}{{\underset{m}{\leadsto}}} \textgoth{a}_{1} $
\end{theorem}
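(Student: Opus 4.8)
The plan is to unfold the four trust relations into elementary statements about output probabilities and then chain them. Write $f_1,\dots,f_n$, $g_1,\dots,g_n$, $p_1,\dots,p_n$, $q_1,\dots,q_n$ for the probabilities that $\textgoth{a}_0$, $\textgoth{b}_0$, $\textgoth{a}_1$, $\textgoth{b}_1$ respectively assign to the atomic values $\alpha^1,\dots,\alpha^n$ of the target variable $a$ (under the common row $\sigma$). By the definitions recalled in Section~\ref{sec:TrustCopies}, the three hypotheses translate as follows: $f_i=g_i$ for every $i$ (from $\textgoth{a}_0 \overset{JT}{\sim} \textgoth{b}_0$); $p_i=f_i$ for $1\le i\le m$ (from $\textgoth{a}_1 \overset{ET}{\underset{m}{\sim}} \textgoth{a}_0$); and $q_i\ge g_i$ for $1\le i\le m$ (from $\textgoth{b}_1 \overset{AT}{\underset{m}{\leadsto}} \textgoth{b}_0$).

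The conclusion $\textgoth{b}_1 \overset{AT}{\underset{m}{\leadsto}} \textgoth{a}_1$ is by definition exactly the assertion that $q_i\ge p_i$ for every $i$ with $1\le i\le m$. For each such $i$ I would simply compose the three facts into the single chain $q_i \ge g_i = f_i = p_i$, where the inequality is Almost Trustworthiness of $\textgoth{b}_1$ over $\textgoth{b}_0$, the first equality is Justifiable Trustworthiness of $\textgoth{a}_0$ and $\textgoth{b}_0$, and the second equality is Equal Trustworthiness of $\textgoth{a}_1$ and $\textgoth{a}_0$ (here using that the relevant index $i$ lies in the range $1\le i\le m$ shared by all three relations). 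This yields $q_i\ge p_i$, which is precisely the required condition, so the implication holds.

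An alternative, entirely syntactic route uses only the abstract principles of Tables~\ref{tab:propertiesTrust} and~\ref{tab:CoordinationP}, and makes explicit how the asymmetric relation composes with the symmetric ones. By \textit{JT symmetry} followed by \textit{JT Top} (equivalently \textit{AT Bottom}) one turns $\textgoth{a}_0 \overset{JT}{\sim} \textgoth{b}_0$ into $\textgoth{b}_0 \overset{AT}{\underset{m}{\leadsto}} \textgoth{a}_0$; by \textit{ET symmetry} (taking $l=m$) followed by \textit{AT Bottom} one turns $\textgoth{a}_1 \overset{ET}{\underset{m}{\sim}} \textgoth{a}_0$ into $\textgoth{a}_0 \overset{AT}{\underset{m}{\leadsto}} \textgoth{a}_1$. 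Two applications of \textit{AT transitivity}, each with $\min(m,m)=m$, then chain $\textgoth{b}_1 \overset{AT}{\underset{m}{\leadsto}} \textgoth{b}_0$, $\textgoth{b}_0 \overset{AT}{\underset{m}{\leadsto}} \textgoth{a}_0$, and $\textgoth{a}_0 \overset{AT}{\underset{m}{\leadsto}} \textgoth{a}_1$ into the desired $\textgoth{b}_1 \overset{AT}{\underset{m}{\leadsto}} \textgoth{a}_1$.

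The computation itself is trivial; the only real care needed is bookkeeping of direction. Because AT is asymmetric ($\leadsto$) while JT and ET are symmetric ($\sim$), I must invoke \emph{JT symmetry} and \emph{ET symmetry} to orient those two relations correctly before feeding them into the AT chain — this is the one place where a careless reading could fail. Since all three hypotheses are indexed by the same $m$, no downward \emph{weakening} of the threshold is required and the index ranges align automatically, so the entire numerical content reduces to transitivity of $=$ through $\le$.
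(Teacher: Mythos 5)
Your main argument is correct and is, in substance, the paper's own proof run forwards: the paper argues by contradiction, supposing $q_i < p_i$ for some $i \le m$ and then deriving $q_i < g_i$ via the same two identities $p_i = f_i$ (ET) and $f_i = g_i$ (JT), contradicting $q_i \ge g_i$ (AT); your direct chain $q_i \ge g_i = f_i = p_i$ is exactly that reasoning without the detour through negation, so I would count it as the same approach. Your second, purely relational route (orient JT and ET by symmetry, collapse both to AT via \textit{AT Bottom}, then apply \textit{AT transitivity} twice with $\min(m,m)=m$) is genuinely different from what the paper does: the paper proves this theorem from the raw definitions rather than from the principles of Tables~\ref{tab:propertiesTrust} and~\ref{tab:CoordinationP}. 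What the syntactic route buys is a derivation entirely inside the abstract calculus of trust relations, making visible that the theorem is a formal consequence of symmetry, \textit{AT Bottom}, and transitivity alone; what it costs is that those principles are themselves only justified semantically (the paper proves them ``from the same properties of $=$ and $\le$''), so it is not more elementary, just better factored. Both arguments are sound, and your closing remark about orientation is exactly the one point of care: since $\overset{AT}{\underset{m}{\leadsto}}$ is directed, the hypotheses must be flipped to $\textgoth{b}_{0} \overset{AT}{\underset{m}{\leadsto}} \textgoth{a}_{0}$ and $\textgoth{a}_{0} \overset{AT}{\underset{m}{\leadsto}} \textgoth{a}_{1}$ before chaining, which the symmetry of JT and ET licenses.
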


\begin{proof}
By contradiction, let us assume that $\textgoth{b}_{1}$ is not Almost Trustworthy with respect to $\textgoth{a}_{1}$, regarding $ \alpha ^{1}, \ldots , \alpha ^{m} $.
This means that for some $i$ s.t. $ 1 \leq i \leq m $, $\textgoth{b}_{1}$ attributes to value $ \alpha ^{i} $ a lower probability than $\textgoth{a}_{1}$.
Moreover, since $ \textgoth{a}_{0} $ and $ \textgoth{a}_{1} $ are Equally Trustworthy regarding the same values, and $ \textgoth{a}_{0} $ and $ \textgoth{b}_{0} $ are Justifiably Trustworthy, $\textgoth{b}_{0}$ attributes to value $ \alpha ^{i} $ the same probability than $\textgoth{a}_{1}$.
Hence, $\textgoth{b}_{1}$ attributes to value $ \alpha ^{i} $ a lower probability than $\textgoth{b}_{0}$.
However, since $\textgoth{b}_{1}$ is Almost Trustworthy with respect to $\textgoth{b}_{0}$ regarding the same values $ \alpha ^{1}, \ldots , \alpha ^{m} $, $\textgoth{b}_{1}$ attributes to value $ \alpha ^{i} $ the same or an higher probability than $\textgoth{b}_{0}$: contradiction.

\begin {center}
\begin {tikzpicture}[-latex ,auto ,node distance =4 cm and 5cm ,on grid ,
semithick ,
state/.style ={ circle ,top color =white , bottom color = processblue!20 ,
draw,processblue , text=blue , minimum width =1 cm}]
\node[state] (A) {$\textgoth{a}_{1}$};
\node[state] (B) [right=of A] {$\textgoth{b}_{1} $};
\node[state] (C) [above =of A] {$\textgoth{a}_{0} $};
\node[state] (D) [above =of B] {$\textgoth{b}_{0} $};
\path (B) edge node {$AT$} (A);
\path (B) edge node {$AT$} (D);
\path[-] (C) edge node {$JT$} (D);
\path[-] (A) edge node {$ET$} (C);

\end{tikzpicture}
\end{center}

\end{proof}

Let us now move to the negative results.
We will first show that when we weaken a Trust relation, copying in an Almost Trustworthy way two systems that are Justifiably Trustworthy with respect to one another, we can keep copying an infinite number of times without obtaining Justifiably Trustworthy copies never again.

\begin{theorem}[Infinite Diverging \textbf{AT} Chains]
\label{theorem:divergingATchanins}
Given $\textgoth{a}_{0}$ and $\textgoth{b}_{0}$, such that $\textgoth{a}_{0} \overset{JT}{{\sim}} \textgoth{b}_{0}$, and given $ m \neq n $, such that $\exists k _{m+1\leq k \leq n} (f^{0}_{k}\neq 0)$ (with $f^{0}_{k}$ the frequency of the k-est value for $\textgoth{a}_{0}$), there are two infinite descending chains $\textgoth{a}_{0},\textgoth{a}_{1}, ...$ and $\textgoth{b}_{0},\textgoth{b}_{1}, ...$ such that $\forall i (\textgoth{a}_{i+1} \overset{AT}{{\underset{m}{\leadsto}}} \textgoth{a}_{i} \land \textgoth{b}_{i+1} \overset{AT}{{\underset{m}{\leadsto}}} \textgoth{b}_{i})$ and $ \forall i_{i\neq 0} (\textgoth{a}_{i} \overset{JT}{{\not\sim}} \textgoth{b}_{i}) $.\footnote{
Of course, with $\textgoth{a} \overset{JT}{{\not\sim}} \textgoth{b}$ we mean that $\textgoth{a}$ is not Justifiably Trustworthy with respect to $\textgoth{b}$.
}
\end{theorem}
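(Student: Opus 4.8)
The plan is to reason entirely at the level of the output frequency vectors, which is legitimate here because Section~\ref{sec:RelationsTrust} treats systems through the meta-variables $\textgoth{a},\textgoth{b},\dots$ and all four trust relations depend only on the probabilities the systems assign to $\alpha^{1},\dots,\alpha^{n}$. Representing a system by a vector $(f_{1},\dots,f_{n})$ with $f_{j}\geq 0$ and $\sum_{j=1}^{n}f_{j}=1$, recall that $\textgoth{a}\overset{JT}{\sim}\textgoth{b}$ holds iff the two vectors agree in every coordinate, whereas $\textgoth{a}\overset{AT}{\underset{m}{\leadsto}}\textgoth{b}$ holds iff $\textgoth{a}$ dominates $\textgoth{b}$ on the first $m$ coordinates, the remaining $n-m$ coordinates being completely unconstrained. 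The hypotheses $m\neq n$ and $\exists k_{\,m+1\leq k\leq n}(f^{0}_{k}\neq 0)$ exist precisely to guarantee room for a nontrivial move: if every irrelevant coordinate were $0$, normalization together with the non-decrease requirement on the first $m$ coordinates would force every AT-copy to be identical, so no divergence would be possible.

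First I would fix an index $k$ with $m+1\leq k\leq n$ and $c:=f^{0}_{k}>0$, which exists by hypothesis, and use $\alpha^{1}$ as a ``sink'' coordinate onto which mass is poured; increasing $\alpha^{1}$ is always AT-permissible, since for $m\geq 1$ it is a relevant coordinate and we only raise it, while for $m=0$ the relation is vacuously unconstrained. I would then build the two chains by draining the coordinate $\alpha^{k}$ geometrically at two different rates. Concretely, let $\textgoth{a}_{i}$ be obtained from $\textgoth{a}_{0}$ by setting the $\alpha^{k}$-coordinate to $c\cdot 3^{-i}$ and the $\alpha^{1}$-coordinate to $f^{0}_{1}+c(1-3^{-i})$, all other coordinates unchanged; let $\textgoth{b}_{i}$ be defined the same way with $2^{-i}$ in place of $3^{-i}$. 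Since $\textgoth{a}_{0}\overset{JT}{\sim}\textgoth{b}_{0}$, both reduce to the given systems at $i=0$.

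The verification then splits into four routine checks. First, each $\textgoth{a}_{i},\textgoth{b}_{i}$ is a genuine distribution, because the increment on $\alpha^{1}$ exactly cancels the decrement on $\alpha^{k}$, so the coordinates still sum to $1$ and stay non-negative. Second, each consecutive pair is a legal AT step: among the first $m$ coordinates only $\alpha^{1}$ is touched and it strictly increases (as $3^{-(i+1)}<3^{-i}$, respectively $2^{-(i+1)}<2^{-i}$), so $\textgoth{a}_{i+1}$ dominates $\textgoth{a}_{i}$ on the relevant coordinates and likewise for $\textgoth{b}$, while the irrelevant coordinate $\alpha^{k}$ may decrease freely. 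Third, for every $i\geq 1$ the systems $\textgoth{a}_{i}$ and $\textgoth{b}_{i}$ disagree on $\alpha^{k}$, since $c\cdot 3^{-i}\neq c\cdot 2^{-i}$, whence $\textgoth{a}_{i}\overset{JT}{\not\sim}\textgoth{b}_{i}$. Fourth, the chains are genuinely infinite, as the $\alpha^{k}$-coordinate takes a new, strictly smaller positive value at each step.

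The one genuinely delicate point, and the reason the hypotheses are stated as they are, is ensuring the chains can be continued forever while every member remains a probability distribution. AT permits only increases on the first $m$ coordinates, and normalization caps the total mass at $1$; a schedule with constant-size shifts would therefore exhaust the mass of $\alpha^{k}$ after finitely many steps and freeze the chain, since one cannot compensate by lowering a relevant coordinate. Choosing a geometric schedule makes the total displacement a convergent series bounded by $c<1$, so $\alpha^{k}$ never runs dry and its frequency keeps changing — exactly what yields two infinite descending chains that never become JT again. The asymmetry between $3^{-i}$ and $2^{-i}$ serves only to keep the two chains apart at every positive stage.
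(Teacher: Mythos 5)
Your proof is correct and follows essentially the same strategy as the paper's: both fix a nonzero irrelevant coordinate $\alpha^{k}$, drain it geometrically at two different rates (you use the closed forms $c\cdot 3^{-i}$ vs.\ $c\cdot 2^{-i}$ where the paper uses the recursions $f^{i+1}_{k}=\tfrac{1}{2}f^{i}_{k}$ and $g^{i+1}_{k}=\tfrac{2}{3}g^{i}_{k}$), and pour the displaced mass into the relevant sink $\alpha^{1}$, so every step is AT while the two chains never re-coincide. The only cosmetic differences are that you witness the failure of JT on the drained coordinate $\alpha^{k}$ rather than on $\alpha^{1}$ as the paper does, and that you make the normalization and the necessity of a geometric schedule explicit.
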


\begin{proof}
First of all, let us note that if $ f_{k}^{0}\neq 0 $, then $ g_{k}^{0}\neq 0 $ (in fact $ f_{k}^{0} = g_{k}^{0} $), since $\textgoth{a}_{0} \overset{JT}{{\sim}} \textgoth{b}_{0}$.
Then, we decrease the frequency of the values $ \alpha ^{m+1}, ..., \alpha ^{n} $ to increase that of $ \alpha ^{1}, ..., \alpha ^{m} $.

Let us take the smallest $k$ such that $m+1\leq k \leq n$ and $f_{k}^{0} \neq 0$.
Given $\textgoth{a}_{0} \overset{JT}{{\sim}} \textgoth{b}_{0}$, $f_{k}^{0} = g_{k}^{0} \neq 0$

Let us now define:
\medskip
\[
f^{i+1}_{1}= f^{i}_{1} + \frac{1}{2} f^{i}_{k}
\]
\[
g^{i+1}_{1}= g^{i}_{1} + \frac{1}{3} g^{i}_{k}
\]

\medskip
\noindent
where $f^{i}_{k}$ and $g^{i}_{k}$ are such that $\textgoth{a}_{i}: \alpha ^{k}_{f^{i}_{k}}$ and $\textgoth{b}_{i}: \alpha ^{k}_{g^{i}_{k}}$.

Of course, frequencies of the other outputs ($ \alpha ^{m+1}, ..., \alpha ^{n} $) will decrease accordingly.
More precisely:
\medskip
\[
f^{i+1}_{m+k}= \frac{1}{2} f^{i}_{k}
\]
\[
g^{i+1}_{m+k}= \frac{2}{3} g^{i}_{k}
\]

\medskip
\noindent
Nonetheless, this is never a problem for \textbf{AT}.

A lower point in the descending chain so defined is always Almost Trustworthy with respect to its upper point.
Indeed, as for $\alpha ^{1}$, $f^{i+1}_{1}\geq f^{i}_{1}$ and $g^{i+1}_{1}\geq g^{i}_{1}$.
As for $\alpha ^{2}, ...,  \alpha ^{m} $, $f^{i+1}_{1} = f^{i}_{1}$ and $g^{i+1}_{1} = g^{i}_{1}$.

Moreover, the chains never converge for finite $i$.
Indeed, for every $i$ such that $ i \neq 0 $, $f^{i}_{1} \neq g^{i}_{1}$.

\begin {center}
\begin {tikzpicture}[-latex ,auto ,node distance =4 cm and 5cm ,on grid ,
semithick ,
state/.style ={ circle ,top color =white , bottom color = processblue!20 ,
draw,processblue , text=blue , minimum width =1 cm}]
\node[state] (A) { };
\node[state] (B) [right=of A] { };
\node[state] (C) [above =of A] {$\textgoth{a}_{1}$};
\node[state] (D) [above =of B] {$\textgoth{b}_{1}$};
\node[state] (E) [above =of C] {$\textgoth{a}_{0}$};
\node[state] (F) [above =of D] {$\textgoth{b}_{0}$};

\draw[every to/.style={append after command={[draw,dashed]}}] (A) to node {$AT_{m}$} (C);
\path (C) edge node {$AT_{m}$} (E);
\draw[every to/.style={append after command={[draw,dashed]}}] (B) to node {$AT_{m}$} (D);
\path (D) edge node {$AT_{m}$} (F);
\path[-] (E) edge node {$JT$} (F);

\end{tikzpicture}
\end{center}

\end{proof}

Let us now see some generalizations of theorem~\ref{theorem:divergingATchanins}, obtained by weakening the \textit{unpreserved} Trust relation (corollary~\ref{corollary:NonETATchanins}), strengthening the Trust relations composing the infinite descending chains (corollaries~\ref{corollary:WT} and \ref{corollary:ET}), or both (corollary~\ref{corollary:NonETWTchanins}).

\begin{corollary}[Infinite non-\textbf{ET} \textbf{AT} Chains]
\label{corollary:NonETATchanins}
Given $\textgoth{a}_{0}$ and $\textgoth{b}_{0}$, such that $\textgoth{a}_{0} \overset{JT}{{\sim}} \textgoth{b}_{0}$, and given $ m \neq n $, such that $\exists k _{m+1\leq k \leq n} (f^{0}_{k}\neq 0)$ (with $f^{0}_{k}$ the frequency of the k-est value for $\textgoth{a}_{0}$), there are two infinite descending chains $\textgoth{a}_{0},\textgoth{a}_{1}, ...$ and $\textgoth{b}_{0},\textgoth{b}_{1}, ...$ such that $\forall i (\textgoth{a}_{i+1} \overset{AT}{{\underset{m}{\leadsto}}} \textgoth{a}_{i} \land \textgoth{b}_{i+1} \overset{AT}{{\underset{m}{\leadsto}}} \textgoth{b}_{i})$ and $ \forall i_{i\neq 0} (\textgoth{a}_{i} \overset{ET}{{\underset{m}{\not\sim}}} \textgoth{b}_{i}) $.
\end{corollary}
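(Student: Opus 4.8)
The key observation is that the final clause $\textgoth{a}_{i} \overset{ET}{{\underset{m}{\not\sim}}} \textgoth{b}_{i}$ is \emph{stronger} than the corresponding clause $\textgoth{a}_{i} \overset{JT}{{\not\sim}} \textgoth{b}_{i}$ of Theorem~\ref{theorem:divergingATchanins}: since Justifiable Trustworthiness entails Equal Trustworthiness on any initial segment of values (coordination principle \textit{JT Top}), we have that $\textgoth{a} \overset{ET}{{\underset{m}{\not\sim}}} \textgoth{b}$ entails $\textgoth{a} \overset{JT}{{\not\sim}} \textgoth{b}$. So this corollary asks for more. The plan, however, is to reuse the \emph{identical} construction from the theorem, observing that the divergence it engineers already lands on a \emph{relevant} value, namely $\alpha^{1}$.

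First I would recall the construction verbatim. Pick the smallest $k$ with $m+1 \leq k \leq n$ and $f^{0}_{k} \neq 0$ (which exists by hypothesis and satisfies $g^{0}_{k} = f^{0}_{k} \neq 0$, since $\textgoth{a}_{0} \overset{JT}{{\sim}} \textgoth{b}_{0}$), and build the two chains by repeatedly transferring probability mass off $\alpha^{k}$ onto $\alpha^{1}$, at rate $\tfrac{1}{2}$ for the $\textgoth{a}$-chain and at the distinct rate $\tfrac{1}{3}$ for the $\textgoth{b}$-chain. The verification that each lower point is Almost Trustworthy with respect to its upper point regarding $\alpha^{1}, \ldots, \alpha^{m}$ is unchanged from the theorem: the frequency of $\alpha^{1}$ only increases, those of $\alpha^{2}, \ldots, \alpha^{m}$ are untouched, and all decreases occur among the non-relevant values $\alpha^{m+1}, \ldots, \alpha^{n}$, which \textbf{AT} does not constrain.

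The crux is the strengthened divergence claim. I would compute the closed forms $f^{i}_{1} = f^{0}_{1} + f^{0}_{k}\bigl(1 - (\tfrac{1}{2})^{i}\bigr)$ and $g^{i}_{1} = f^{0}_{1} + f^{0}_{k}\bigl(1 - (\tfrac{2}{3})^{i}\bigr)$, using $f^{i}_{k} = (\tfrac{1}{2})^{i} f^{0}_{k}$ and $g^{i}_{k} = (\tfrac{2}{3})^{i} g^{0}_{k}$ together with the geometric sums. This yields $f^{i}_{1} - g^{i}_{1} = f^{0}_{k}\bigl((\tfrac{2}{3})^{i} - (\tfrac{1}{2})^{i}\bigr)$. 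Since $f^{0}_{k} \neq 0$ and $(\tfrac{2}{3})^{i} > (\tfrac{1}{2})^{i}$ for every $i \geq 1$, we get $f^{i}_{1} \neq g^{i}_{1}$ for all $i \neq 0$. Because $m \geq 1$, the value $\alpha^{1}$ lies among the relevant values $\alpha^{1}, \ldots, \alpha^{m}$, so disagreement at $\alpha^{1}$ already witnesses $\textgoth{a}_{i} \overset{ET}{{\underset{m}{\not\sim}}} \textgoth{b}_{i}$, and not merely $\textgoth{a}_{i} \overset{JT}{{\not\sim}} \textgoth{b}_{i}$.

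I do not expect a genuine obstacle: the substantive work — well-definedness of the frequencies, conservation of total probability, and the \textbf{AT} checks — is all inherited from Theorem~\ref{theorem:divergingATchanins}. The only point demanding care is that the two transfer rates be kept \emph{distinct} (here $\tfrac{1}{2} \neq \tfrac{1}{3}$), so that the gap $(\tfrac{2}{3})^{i} - (\tfrac{1}{2})^{i}$ never vanishes for positive $i$; had both chains used a common rate, the frequencies of $\alpha^{1}$ would coincide and the argument would collapse back to the weaker non-\textbf{JT} conclusion, witnessed only by a non-relevant value. The entire upgrade from $\textgoth{a}_{i} \overset{JT}{{\not\sim}} \textgoth{b}_{i}$ to $\textgoth{a}_{i} \overset{ET}{{\underset{m}{\not\sim}}} \textgoth{b}_{i}$ thus amounts to the single remark that the divergence is deliberately placed at the relevant index $1$.
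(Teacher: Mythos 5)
Your proposal is correct and follows essentially the same route as the paper: the paper's proof likewise reuses the construction of Theorem~\ref{theorem:divergingATchanins} verbatim and simply observes that $f^{i}_{1}\neq g^{i}_{1}$ for all $i>0$ together with $1\leq m$, so the divergence sits at a relevant value. Your explicit closed forms $f^{i}_{1}-g^{i}_{1}=f^{0}_{k}\bigl((\tfrac{2}{3})^{i}-(\tfrac{1}{2})^{i}\bigr)$ merely spell out what the paper leaves implicit.
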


\begin{proof}
The proof is the same of theorem~\ref{theorem:divergingATchanins}.
Just note that $\forall i _{i>0}  (f^{i}_{1}\neq g^{i}_{1})$ and $ 1 \leq m$.
\end{proof}

\begin{corollary}[Infinite Diverging \textbf{WT} Chains]
\label{corollary:WT}
Given $\textgoth{a}_{0}$ and $\textgoth{b}_{0}$, such that $\textgoth{a}_{0} \overset{JT}{{\sim}} \textgoth{b}_{0}$ and $\exists l (f^{0}_{l}\neq 0)$, and given $ m \neq n $, such that $\exists k _{k \neq l \land m+1\leq k \leq n} (f^{0}_{k}\neq 0)$, there are two infinite descending chains $\textgoth{a}_{0},\textgoth{a}_{1}, ...$ and $\textgoth{b}_{0},\textgoth{b}_{1}, ...$ such that $\forall i (\textgoth{a}_{i+1} \overset{WT}{{\underset{m}{\leadsto}}} \textgoth{a}_{i} \land \textgoth{b}_{i+1} \overset{WT}{{\underset{m}{\leadsto}}} \textgoth{b}_{i})$ and $ \forall i_{i\neq 0} (\textgoth{a}_{i} \overset{JT}{{\not\sim}} \textgoth{b}_{i}) $.
\end{corollary}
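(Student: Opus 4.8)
The plan is to adapt, almost verbatim, the construction used in the proof of Theorem~\ref{theorem:divergingATchanins}, modifying it only as much as the stronger hypotheses of \textbf{WT} demand. Recall that \textbf{WT} differs from \textbf{AT} by the additional clause $\forall i\;(g_{i}\neq 0 \leftrightarrow f_{i}\neq 0)$: the copy must assign positive probability to \emph{exactly} the same outputs as the original. Hence the entire extra difficulty is to reproduce the diverging chains of the theorem while never turning a zero frequency into a positive one, nor a positive one into zero, at any step and for either chain.

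First I would isolate the two distinct outputs guaranteed by the hypotheses: an index $l$ with $f^{0}_{l}\neq 0$ (and, by $\textgoth{a}_{0}\overset{JT}{\sim}\textgoth{b}_{0}$, also $g^{0}_{l}=f^{0}_{l}\neq 0$), and an index $k\neq l$ with $m+1\leq k\leq n$ and $f^{0}_{k}=g^{0}_{k}\neq 0$. The key change with respect to Theorem~\ref{theorem:divergingATchanins} is that the mass subtracted from the non-relevant output $\alpha^{k}$ is now poured into the already-positive output $\alpha^{l}$ rather than into $\alpha^{1}$, which could be zero. Concretely I would set
\[
f^{i+1}_{l}= f^{i}_{l}+\tfrac{1}{2}f^{i}_{k},\qquad g^{i+1}_{l}= g^{i}_{l}+\tfrac{1}{3}g^{i}_{k},
\]
leaving $\alpha^{k}$ with $f^{i+1}_{k}=\tfrac{1}{2}f^{i}_{k}$ and $g^{i+1}_{k}=\tfrac{2}{3}g^{i}_{k}$, and keeping every other frequency fixed.

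Then I would verify the three requirements of the statement. (i) Each stage is still a distribution, since mass is only transferred between $\alpha^{k}$ and $\alpha^{l}$, so the frequencies stay in $[0,1]$ and sum to $1$. (ii) Each step is a \textbf{WT} copy of the previous one: passing from $\textgoth{a}_{i}$ to $\textgoth{a}_{i+1}$ no relevant frequency $f^{i}_{1},\dots,f^{i}_{m}$ decreases (only $\alpha^{l}$ may grow, while the only output losing mass is the non-relevant $\alpha^{k}$), and the support is preserved because $\alpha^{k}$ decays geometrically yet stays strictly positive, $f^{i}_{k}=(\tfrac12)^{i}f^{0}_{k}>0$ and likewise for $g$, whereas $\alpha^{l}$ only grows and every other frequency is frozen. (iii) The two chains never reconverge: a short computation gives the closed forms $f^{i}_{l}=f^{0}_{l}+f^{0}_{k}\bigl(1-(\tfrac12)^{i}\bigr)$ and $g^{i}_{l}=f^{0}_{l}+f^{0}_{k}\bigl(1-(\tfrac23)^{i}\bigr)$, so $f^{i}_{l}-g^{i}_{l}=f^{0}_{k}\bigl((\tfrac23)^{i}-(\tfrac12)^{i}\bigr)\neq 0$ for every $i\neq 0$, whence $\textgoth{a}_{i}\overset{JT}{\not\sim}\textgoth{b}_{i}$.

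The main obstacle, and precisely the reason the corollary needs the extra hypothesis $\exists l\,(f^{0}_{l}\neq 0)$ absent from Theorem~\ref{theorem:divergingATchanins}, is exactly the support-preservation clause of \textbf{WT}: one must both avoid creating a new positive output (handled by choosing the sink $\alpha^{l}$ among the already-positive outputs) and avoid exhausting the source (handled by removing only a fixed fraction of $\alpha^{k}$ at each step, so that its frequency remains positive forever). Once the sink and the fractional transfer are chosen so as to respect the support, the remaining verifications reduce to the same routine arithmetic as in the \textbf{AT} case, and indeed I expect the divergence argument of (iii) to be formally identical to the one already used there.
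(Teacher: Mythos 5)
Your proof is correct and follows essentially the same route as the paper's: the paper likewise reuses the construction of Theorem~\ref{theorem:divergingATchanins} with $\alpha^{l}$ in place of $\alpha^{1}$, noting that the only decreasing frequencies are $f^{i}_{k}, g^{i}_{k}$ (which never vanish) and the only increasing ones are $f^{i}_{l}, g^{i}_{l}$ (which are never zero), so the support condition of \textbf{WT} holds at every step. Your version merely makes explicit the closed forms and the role of the extra hypothesis $\exists l\,(f^{0}_{l}\neq 0)$, which the paper leaves to a footnote.
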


\begin{proof}
The proof is the same of theorem~\ref{theorem:divergingATchanins} but with $ \alpha _{l} $ in place of $ \alpha _{1} $.
Just observe that at each step the only frequencies that decrease are $ f^{i}_{k} $ and $g^{i}_{k}$, which nonetheless never reach zero for $ i \in \mathbb{N}$, and are such that $ m<k $.
Moreover, the only frequencies that increase are $ f^{i}_{l} $ and $g^{i}_{l}$, and are such that for every $ i \in \mathbb{N} $, $ f^{i}_{l} \neq 0 $ and $g^{i}_{l} \neq 0$.\footnote{
Note that $ l $ can be less than or greater than $ m $, and $ \textgoth{a}_{i+1} $ and $ \textgoth{b}_{i+1} $ are still \textbf{WT} regarding $ \textgoth{a}_{i} $ and $ \textgoth{b}_{i} $ respectively.
}
\end{proof}

\begin{corollary}[Infinite non-\textbf{ET} \textbf{WT} Chains]
\label{corollary:NonETWTchanins}
Given $\textgoth{a}_{0}$ and $\textgoth{b}_{0}$, such that $\textgoth{a}_{0} \overset{JT}{{\sim}} \textgoth{b}_{0}$ and $\exists l _{1\leq l \leq m} (f^{0}_{l}\neq 0)$, and given $ m \neq n $, such that $\exists k _{m+1\leq k \leq n} (f^{0}_{k}\neq 0)$, there are two infinite descending chains $\textgoth{a}_{0},\textgoth{a}_{1}, ...$ and $\textgoth{b}_{0},\textgoth{b}_{1}, ...$ such that $\forall i (\textgoth{a}_{i+1} \overset{WT}{{\underset{m}{\leadsto}}} \textgoth{a}_{i} \land \textgoth{b}_{i+1} \overset{WT}{{\underset{m}{\leadsto}}} \textgoth{b}_{i})$ and $ \forall i_{i\neq 0} (\textgoth{a}_{i} \overset{ET}{{\underset{m}{\not\sim}}} \textgoth{b}_{i}) $.
\end{corollary}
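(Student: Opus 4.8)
The plan is to recycle the construction underlying Theorem~\ref{theorem:divergingATchanins} and Corollary~\ref{corollary:WT}, but to tune the coordinate into which probability mass is funnelled so that the resulting disagreement falls \emph{among the relevant values}. First I would fix witnesses for the two existential hypotheses: the smallest $k$ with $m+1 \leq k \leq n$ and $f^0_k \neq 0$, and some $l$ with $1 \leq l \leq m$ and $f^0_l \neq 0$. Since $\textgoth{a}_0 \overset{JT}{\sim} \textgoth{b}_0$, the original frequencies agree everywhere, so also $g^0_k = f^0_k \neq 0$ and $g^0_l = f^0_l \neq 0$. Note that $l \leq m < k$ forces $l \neq k$ automatically, which is exactly why---unlike in Corollary~\ref{corollary:WT}, where $l$ may exceed $m$---no separate clause $k \neq l$ must be assumed.

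Next I would define the two descending chains coordinate-wise, transferring mass from $\alpha^k$ to $\alpha^l$ at the two different rates used before:
\[
f^{i+1}_l = f^i_l + \tfrac{1}{2} f^i_k, \qquad f^{i+1}_k = \tfrac{1}{2} f^i_k, \qquad g^{i+1}_l = g^i_l + \tfrac{1}{3} g^i_k, \qquad g^{i+1}_k = \tfrac{2}{3} g^i_k,
\]
leaving every other coordinate fixed. The verification of $\textgoth{a}_{i+1} \overset{WT}{{\underset{m}{\leadsto}}} \textgoth{a}_i$ (and symmetrically for the $\textgoth{b}$-chain) then runs exactly as in Corollary~\ref{corollary:WT}: among the relevant coordinates $1,\dots,m$ only $\alpha^l$ changes, and it strictly increases, so the copy dominates the original on $\alpha^1,\dots,\alpha^m$; and since $\alpha^k$ is merely rescaled by a positive factor below $1$ it never vanishes at any finite stage, while $\alpha^l$ only grows and all remaining coordinates are untouched, so copy and original share the same support. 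This secures the support clause $f^{i+1}_j \neq 0 \leftrightarrow f^i_j \neq 0$ that distinguishes WT from AT.

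For the failure of Equal Trustworthiness I would solve the (geometric) recurrences to obtain $f^i_l - g^i_l = f^0_k\big((\tfrac{2}{3})^i - (\tfrac{1}{2})^i\big)$, which is strictly positive for every $i \geq 1$ because $\tfrac{2}{3} > \tfrac{1}{2}$ and $f^0_k > 0$. Since the sole coordinate on which the two chains disagree is $\alpha^l$ with $l \leq m$, the discrepancy lies \emph{inside} the relevant block (the same move as in Corollary~\ref{corollary:NonETATchanins}, but with $\alpha^l$ in place of $\alpha^1$), so $\textgoth{a}_i \overset{ET}{{\underset{m}{\not\sim}}} \textgoth{b}_i$ for all $i \neq 0$, as required.

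The main obstacle is conceptual rather than computational: the two target properties pull in opposite directions. WT insists on preserved support, ruling out the AT-style trick of draining a value to zero and forcing the mass to be routed into an \emph{already-nonzero} coordinate; ET-divergence, on the other hand, demands that the induced discrepancy be visible among the first $m$ values. The construction reconciles these by choosing $l$ relevant \emph{and} initially nonzero and $k$ non-relevant \emph{and} initially nonzero---both guaranteed by the hypotheses---and the crux is confirming that the discrepancy $f^i_l \neq g^i_l$ genuinely persists at \emph{every} finite $i$ (the tails $(\tfrac{2}{3})^i$ and $(\tfrac{1}{2})^i$ never coincide for $i \geq 1$), not merely in the limit.
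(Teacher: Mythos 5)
Your proposal is correct and follows essentially the same route as the paper, which proves this corollary by pointing back to the mass-transfer construction of Theorem~\ref{theorem:divergingATchanins} (via Corollary~\ref{corollary:WT}) with the receiving coordinate chosen as an $\alpha^{l}$ that is both relevant ($l \leq m$) and initially nonzero; you have correctly filled in the details the paper leaves implicit, including the support-preservation clause for WT and the explicit gap $f^{0}_{k}\bigl((\tfrac{2}{3})^{i}-(\tfrac{1}{2})^{i}\bigr)$. One trivial slip: the two chains also come to disagree at coordinate $k$ (since $(\tfrac{1}{2})^{i}f^{0}_{k} \neq (\tfrac{2}{3})^{i}g^{0}_{k}$ for $i \geq 1$), not solely at $l$, but as $k > m$ this is immaterial to the failure of ET regarding $\alpha^{1},\ldots,\alpha^{m}$.
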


\begin{proof}
The proof is the same as for the corollary~\ref{corollary:WT}.
Just observe that in this case $ l $ cannot be greater than $ m $ (hence, the change in the assumptions of the statement).
\end{proof}

\begin{corollary}[Infinite Diverging \textbf{ET} Chains]
\label{corollary:ET}
Given $\textgoth{a}_{0}$ and $\textgoth{b}_{0}$, such that $\textgoth{a}_{0} \overset{JT}{{\sim}} \textgoth{b}_{0}$, and given $ m \neq n $, such that $\exists k,l _{k \neq l \land m+1\leq l \leq n \land m+1\leq k \leq n} (f^{0}_{k}\neq 0 \land f^{0}_{l}\neq 0)$, there are two infinite descending chains $\textgoth{a}_{0},\textgoth{a}_{1}, ...$ and $\textgoth{b}_{0},\textgoth{b}_{1}, ...$ such that $\forall i (\textgoth{a}_{i+1} \overset{ET}{{\underset{m}{\sim}}} \textgoth{a}_{i} \land \textgoth{b}_{i+1} \overset{ET}{{\underset{m}{\sim}}} \textgoth{b}_{i})$ and $ \forall i_{i\neq 0} (\textgoth{a}_{i} \overset{JT}{{\not\sim}} \textgoth{b}_{i}) $.
\end{corollary}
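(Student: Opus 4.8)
The plan is to reuse the mass\nobreakdash-redistribution construction of Theorem~\ref{theorem:divergingATchanins} and Corollary~\ref{corollary:WT}, adapting it to the stricter constraint imposed by ET. The essential difference is that ET demands \emph{equality} of the frequencies on the relevant values $\alpha^1,\ldots,\alpha^m$ between consecutive systems, whereas AT and WT only demand that these frequencies not decrease. Consequently I cannot move any probability mass into a relevant value (as was done in Theorem~\ref{theorem:divergingATchanins}, where mass flowed from the non\nobreakdash-relevant $\alpha^k$ into the relevant $\alpha^1$); every transfer must stay \emph{within} the block of non\nobreakdash-relevant values $\alpha^{m+1},\ldots,\alpha^n$. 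This is exactly why the hypothesis is strengthened to demand two distinct indices $k\neq l$ with $m+1\leq k,l\leq n$ and $f^0_k\neq 0$, $f^0_l\neq 0$: one value serves as the source of the transferred mass and the other as its sink, the source $\alpha^k$ being the one that matters for forcing divergence.

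First I would fix such $k$ and $l$ and, from $\textgoth{a}_0 \overset{JT}{\sim} \textgoth{b}_0$, record that $f^0_k=g^0_k\neq 0$ and $f^0_l=g^0_l$. Then I would define the two chains by transferring, at each step, a fixed fraction of the current $\alpha^k$\nobreakdash-mass from $\alpha^k$ into $\alpha^l$, using \emph{different} fractions in the two chains:
\begin{align*}
f^{i+1}_l &= f^i_l + \tfrac{1}{2}\,f^i_k, & f^{i+1}_k &= \tfrac{1}{2}\,f^i_k, \\
g^{i+1}_l &= g^i_l + \tfrac{1}{3}\,g^i_k, & g^{i+1}_k &= \tfrac{2}{3}\,g^i_k,
\end{align*}
leaving every other frequency (in particular those of $\alpha^1,\ldots,\alpha^m$) unchanged. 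Since the mass added to $\alpha^l$ exactly equals the mass removed from $\alpha^k$, each distribution still sums to one, so the construction is well defined.

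Next I would verify the two required properties. For ET\nobreakdash-preservation, I note that only $\alpha^k$ and $\alpha^l$ change between $\textgoth{a}_i$ and $\textgoth{a}_{i+1}$ (and likewise for the $\textgoth{b}$\nobreakdash-chain), and both indices lie in the non\nobreakdash-relevant range $>m$; hence the frequencies of $\alpha^1,\ldots,\alpha^m$ coincide, giving $\textgoth{a}_{i+1} \overset{ET}{\underset{m}{\sim}} \textgoth{a}_i$ and $\textgoth{b}_{i+1} \overset{ET}{\underset{m}{\sim}} \textgoth{b}_i$. For failure of JT, I would solve the recurrences to obtain $f^i_k=(\tfrac{1}{2})^i f^0_k$ and $g^i_k=(\tfrac{2}{3})^i g^0_k=(\tfrac{2}{3})^i f^0_k$; since $(\tfrac{1}{2})^i\neq(\tfrac{2}{3})^i$ for every $i\geq 1$ and $f^0_k\neq 0$, the two chains assign different probabilities to $\alpha^k$ at every positive step, so $\textgoth{a}_i \overset{JT}{\not\sim} \textgoth{b}_i$ for all $i\neq 0$. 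The same computation shows $f^i_k,g^i_k>0$ for every finite $i$, so the transfer can be repeated indefinitely and the chains are genuinely infinite.

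The step I expect to require the most care is the first one: checking that the ET constraint really can be met by an \emph{intra}\nobreakdash-block redistribution, i.e.\ that keeping all relevant frequencies fixed is still compatible with forcing permanent divergence. The key observation making this work is that JT, unlike ET, ranges over \emph{all} $n$ values, so a persistent discrepancy confined to the non\nobreakdash-relevant value $\alpha^k$ is invisible to ET (which inspects only $\alpha^1,\ldots,\alpha^m$) yet already defeats JT. This asymmetry is precisely what the two\nobreakdash-value hypothesis is designed to guarantee, and it is what the construction exploits.
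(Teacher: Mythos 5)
Your proposal is correct and follows essentially the same route as the paper, which simply instructs the reader to replicate the construction of Theorem~\ref{theorem:divergingATchanins} using the frequencies $f_{k}, g_{k}, f_{l}, g_{l}$ — i.e.\ exactly your intra-block transfer from $\alpha^{k}$ to $\alpha^{l}$ with unequal fractions in the two chains, keeping all relevant frequencies fixed so that ET holds while the persistent discrepancy at $\alpha^{k}$ defeats JT. Your write-up merely makes explicit the details the paper leaves implicit.
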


\begin{proof}
We just replicate the proof procedure of theorem~\ref{theorem:divergingATchanins} but taking the frequencies $ f_{k} $, $ g_{k} $, $ f_{l} $ and $ g_{l} $.
\end{proof}

\section{Trust and Logical Rules}
\label{sec:TrustLogic}

In section~\ref{sec:TrustCopies} and specifically subsection~\ref{subsec:DiffTrust}, we have defined local notions of trustworthiness (Justifiably, Weak, Equal and Almost Trust) for applications of ML systems to single rows $ \sigma $ of the Test Set, and regarding a specific target atomic variable $ a $.
In this section, we investigate whether and how trustworthiness is preserved when logical rules are applied to judgments regarding ML systems.
In other words, we wish to check under which logical composition of atomic judgments regarding ML systems trustworthiness is preserved.
In order to do so, we define in subsection~\ref{subsec:TrustMultAppl} the generalizations of the notions of trustworthiness for sets of lists of values assignments $\Sigma$ and sets of atomic target variables $\mathscr{A}$.
Then, in subsection~\ref{subsec:LogicTrust} we use this generalization to investigate logically complex judgments.

\subsection{General Trust for $ \Sigma $ and $ \mathscr{A} $}
\label{subsec:TrustMultAppl}

In this section, we define trustworthiness for a set of lists of values assignments $\Sigma$ and a set of atomic target variables $\mathscr{A}$, generalizing the local notions that we have developed in section~\ref{sec:TrustCopies}.
The generalization of Justifiably Trust is straightforward, since it regards preservation of all the possible values of the target variable:

\begin{definition}[General Justifiably Trustworthy Copies]
\label{JTGeneral}
Given two ML systems dubbed respectively the \textit{original} and the \textit{copy}, we say that the \textit{copy} is Justifiably Trustworthy with respect to the \textit{original} regarding a set of lists of values assignments $\Sigma$ and a set of atomic target variables $\mathscr{A}$ iff for every $\sigma \in \Sigma $ and $a \in \mathscr{A} $ the \textit{copy} is Justifiably Trustworthy with respect to the \textit{original} considering the feature $ a $ and the list of values assignments $ \sigma $.
\end{definition}

On the contrary, Weakly, Equally and Almost Trust are more problematic.
Indeed, when we defined the local versions of these notions for single values attributions $ \sigma $ and single atomic target variables $ a $, we specified that a copy is Weakly, Equally and Almost Trustworthy \textit{regarding a set of possible outputs} $ \alpha ^{1}, \ldots , \alpha ^{m}$ for $ a $.
Now, if we take into account a set $\mathscr{A}$ of atomic target variables, we need a way to specify the relevant values for each of its elements.
To do so, we will use a function $v$ that ascribes to each $a\in \mathscr{A}$ a set of possible values $ \alpha ^{1}, \ldots ,  \alpha ^{m}$ of $ a $.
Hence, the definition becomes:

\begin{definition}[General Weakly, Equally and Almost Trustworthy Copies]
\label{WEATGeneral}
Given two ML systems dubbed respectively the \textit{original} and the \textit{copy}, we say that the \textit{copy} is Weakly, Equally or Almost Trustworthy with respect to the \textit{original} regarding a set of lists of values assignments $\Sigma$, a set of atomic target variables $\mathscr{A}$ and a function $v$ assigning relevant values to each atomic target variable $a\in \mathscr{A}$ iff for every $\sigma \in \Sigma $ and $a \in A $ the \textit{copy} is respectively Weakly, Equally or Almost Trustworthy with respect to the \textit{original} considering the feature $ a $ and the list of values assignments $ \sigma $, regarding values $ v(a) $.
\end{definition}

\begin{example}[Trust over $\Sigma$, $\mathscr{A}$ and $v$]
\label{exa:AtTrustSigma}
Let us consider a pair of diagnostic systems, labeled the \textit{original} and the \textit{copy}, devised both for the detection of some set of diseases such as \textit{Chickenpox} and \textit{Hepatitis}, and for the evaluation of risk factors for skin cancer.
The systems consider a different variable for each disease and give the probabilities for its severity, with the following possible outputs:  $absent$, $minor$, $moderate$, $major$, and $extreme$.
Moreover, the system gives an evaluation of the risk of developing skin cancer (corresponding to the atomic variable $SkinC$) with the following possible outputs: $no$ (denoting the probability that the patient will not develop a tumor in the foreseeable future), $more~2~years$ (denoting the probability that the patient will develop a tumor in more than two years), and $less~2~years$ (denoting the probability that the patient will develop a tumor in less than two years).
Let us, moreover, assume that $\Sigma$ contains all the data regarding $n$ patients described respectively in the lists of values assignments $\sigma_{1} , \ldots , \sigma_{n}$.
In conclusion, let as assume that function $v$ assigns $moderate$, $major$, and $extreme$ to both variables $Chickenpox$ and $Hepatitis$ and $less~2~years$ to $SkinC$.
This means that we want to focus on the probabilities attached to severe cases of \textit{Chickenpox} and \textit{Hepatitis} and to the development of skin cancer in two years, prioritizing worst scenarios.

The \textit{copy} is Justifiably Trustworthy with respect to the \textit{original} regarding the patients described in $\Sigma$ and the set of atomic target variables $\mathscr{A} $ $=$ $\{ Chickenpox, $ $Hepatitis,$ $ SkinC \}$ iff for every $\sigma _{i} \in \Sigma $ (that is for every patient) and $a \in \mathscr{A} $ the \textit{copy} is Justifiably Trustworthy with respect to the \textit{original} considering the feature $ a $ and the list of values assignments $ \sigma _{i}$.
That is, Justifiably Trustworthiness ignores function $v$, and asks that original and copy behaves identically for all possible outputs of $Chickenpox$, $Hepatitis$, and $SkinC$, for every patient described in $\Sigma$. 

The \textit{copy} is Weakly, Equally or Almost Trustworthy with respect to the \textit{original} regarding the patients described in $\Sigma$, the set of atomic target variables $\mathscr{A} $ $=$ $\{ Chickenpox, $ $Hepatitis,$ $ SkinC \}$ and the function $v$ assigning relevant values to each atomic target variable $a\in \mathscr{A}$ --- $v(Chickenpox)=v(Hepatitis)= \{moderate, major, extreme \}$ and $v(SkinC)= \{ less~2~years \}$ --- iff for every $\sigma \in \Sigma $ and $a \in A $ the \textit{copy} is respectively Weakly, Equally or Almost Trustworthy with respect to the \textit{original} considering the feature $a$ and the list of values assignments $ \sigma $, regarding values $v(a)$.

As an example, the \textit{copy} is Almost Trustworthy iff, for every patient, it attributes at least the same probability as the original to the judgments that ascribes them $moderate$, $major$, or $extreme$ case of $Chickenpox$ and $Hepatitis$, and the risk of contracting a skin cancer within the next two years.
In other words, let us consider the following judgments for every patient $\sigma _{i}$:\footnote{
We use $f^{o}$, $g^{o}$, and $h^{o}$ to indicate the probability given by the original system and $f^{c}$, $g^{c}$, and $h^{c}$ for the probability given by the copy.
}
\medskip
\[
\sigma _{i} \rhd Chickenpox : absent _{f^{o}_{i,1}}, minor_{f^{o}_{i,2}}, moderate_{f^{o}_{i,3}}, major_{f^{o}_{i,4}}, extreme_{f^{o}_{i,5}}
\]
\[
\sigma _{i} \rhd Chickenpox : absent _{f^{c}_{i,1}}, minor_{f^{c}_{i,2}}, moderate_{f^{c}_{i,3}}, major_{f^{c}_{i,4}}, extreme_{f^{c}_{i,5}}
\]
\[
\sigma _{i} \rhd Hepatitis : absent _{g^{o}_{i,1}}, minor_{g^{o}_{i,2}}, moderate_{g^{o}_{i,3}}, major_{g^{o}_{i,4}}, extreme_{g^{o}_{i,5}}
\]
\[
\sigma _{i} \rhd Hepatitis : absent _{g^{c}_{i,1}}, minor_{g^{c}_{i,2}}, moderate_{g^{c}_{i,3}}, major_{g^{c}_{i,4}}, extreme_{g^{c}_{i,5}}
\]
\[
\sigma _{i} \rhd SkinC : no _{h^{o}_{i,1}}, more~2~years _{h^{o}_{i,2}}, less~2~years{h^{o}_{i,3}}
\]
\[
\sigma _{i} \rhd SkinC : no _{h^{c}_{i,1}}, more~2~years _{h^{c}_{i,2}}, less~2~years{h^{c}_{i,3}}
\]

\medskip

\noindent
The requirements for Almost Trustworthiness of the copy are that $ h^{o}_{i,3} \leq h^{c}_{i,3}  $, and for $3 \leq j \leq 5$, $ f^{o}_{i,j} \leq f^{c}_{i,j}$ and $ g^{o}_{i,j} \leq g^{c}_{i,j}$.
Note that, assuming that the original system is reliable, we could be satisfied with our copy slightly overestimating these probabilities, as remarked in example~\ref{exa:WT}.
\end{example}

\subsection{Logical rules and trustworthiness}
\label{subsec:LogicTrust}

\subsubsection{Definitions for Copies of Non-Atomic Systems}

In section~\ref{sec:TrustCopies}, we have defined Trust notions for copies of applied ML systems taking into account only atomic outputs for atomic target variables.
In this section, we will investigate how these notions generalize when non-atomic variables and outputs are considered.
In order to do so, it is convenient to define the notion of logical construction and deconstruction of an ML system:

\begin{definition}[Logical Construction of ML systems]
Given a set $\textgoth{A}$ of ML systems $\textgoth{a}_{1},\ldots ,\textgoth{a}_{n}$, we say that an ML system $\textgoth{b}$ is a logical construction of $\textgoth{a}_{1}, \ldots , \textgoth{a}_{n}$ ($\textgoth{a}_{1}, \ldots , \textgoth{a}_{n} \mapsto ^{cstr} \textgoth{b}$) iff there are values $\beta ^{1}, \ldots \beta ^{n}$ and $\delta$ respectively for $\textgoth{a}_{1}, \ldots , \textgoth{a}_{n}$ and $\textgoth{b}$ s.t. for some probabilities $f_{1}, \ldots , f_{n},h$, the judgment $\textgoth{b}: \delta _{h}$ can be derived from $\textgoth{a}_{1}:\beta^{1} _{f_{1}}, \ldots , \textgoth{a}_{n}: \beta^{n}_{f_{n}}$ using only right I-rules.
Sometimes we will speak directly of $\textgoth{b}: \delta _{h}$ as a logical construction of $\textgoth{a}_{1}:\beta^{1} _{f_{1}} , \ldots , \textgoth{a}_{n}: \beta^{n}_{f_{n}}$ ($\textgoth{a}_{1}:\beta^{1} _{f_{1}} , \ldots , \textgoth{a}_{n}: \beta^{n}_{f_{n}} \mapsto ^{cstr} \textgoth{b}: \delta _{h}$).
\end{definition}

\begin{definition}[Logical Deconstruction of ML systems]
Given a set $\textgoth{A}$ of ML systems $\textgoth{a}_{1},\ldots ,\textgoth{a}_{n}$, we say that an ML system $\textgoth{b}$ is a logical construction of $\textgoth{a}_{1}, \ldots , \textgoth{a}_{n}$ ($\textgoth{a}_{1}, \ldots , \textgoth{a}_{n} \mapsto ^{decstr} \textgoth{b}$) iff there are values $\beta ^{1}, \ldots \beta ^{n}$ and $\delta$ respectively for $\textgoth{a}_{1}, \ldots , \textgoth{a}_{n}$ and $\textgoth{b}$ s.t. for some probabilities $f_{1}, \ldots , f_{n},h$, the judgment $\textgoth{b}: \delta _{h}$ can be derived from $\textgoth{a}_{1}:\beta^{1} _{f_{1}}, \ldots , \textgoth{a}_{n}: \beta^{n}_{f_{n}}$ using only right E-rules.
Sometimes we will speak directly of $\textgoth{b}: \delta _{h}$ as a logical construction of $\textgoth{a}_{1}:\beta^{1} _{f_{1}} , \ldots , \textgoth{a}_{n}: \beta^{n}_{f_{n}}$ ($\textgoth{a}_{1}:\beta^{1} _{f_{1}} , \ldots , \textgoth{a}_{n}: \beta^{n}_{f_{n}} \mapsto ^{decstr} \textgoth{b}: \delta _{h}$).
\end{definition}

Note that the starting systems $\textgoth{a}_{1},\ldots ,\textgoth{a}_{n}$ in logical construction and the system $\textgoth{b}$ reached in logical deconstruction can have non-atomic target variables and deal with non-atomic values.
However, logical constructions starting with systems dealing only with atomic target variables and atomic values, and logical deconstructions reaching systems dealing only with atomic target variables and atomic values are special cases of logical construction and deconstruction.
Hence, if trust is preserved across logical construction and deconstruction, trust for atomic variables and values entails trust for non-atomic variables and values, and vice-versa.

Note also that logical constructions and logical deconstructions of a set of ML systems do not count as copies.
Indeed, we defined a copy of an ML system as a different ML system (different Learning Algorithm or different Training Set) but possibly dealing with \textit{the same Test Set} and having the \textit{same target variable}.
On the contrary, as clarified in subsection~\ref{subsub:ExtLang}, when we apply logical rules (and so also in a logical construction or deconstruction), we refer to the same system (same Learning Algorithm and Training Set) but change the target variable of the query.
Moreover, the result of construction or deconstruction will not necessarily use the same row of the Test Set; as an example, consider the rules of introduction for $ \times $.

Before addressing the preservation of trustworthiness under logical construction and deconstruction, we need to define the notions of Trust for non-atomic target variables and values.
For Justifiably Trust, the definition is as follows: 

\begin{definition}[Justifiably Trustworthy For Non-Atomic Elements]
Given two ML systems $ \textgoth{a} $ and $\textgoth{b}$ both having $t$ as target variable, dubbed respectively the \textit{original} and the \textit{copy}, differing for either the Learning Algorithms they implement or the Data Set used during training, we say that the \textit{copy} is Justifiably Trustworthy with respect to the \textit{original} considering the variable $ t $ and the same row of values assignments $ \sigma $ from the same Test Set as the original iff for every possible value $ \beta $ for $ t $, $ \textgoth{a}:\beta _{f} $ iff $ \textgoth{b}:\beta _{f} $.
\end{definition}

The only difficulty with this definition is that we cannot list all the possible non-atomic values of a non-atomic variable, as opposed to what we did with atomic values of atomic variables.
Hence, since the number of possible outputs for $ t $ is not finite, we cannot check JT by examining all of them one by one.
However, this is not a problem, since we will focus on proving the preservation of trustworthiness when logical rules are applied, and not on proving JT directly for logically complex judgments.
On the contrary, focusing on such preservation is the easiest approach to deal with JT for non-atomic variables and values.

As for the generalization of the previous definition for a set of lists of values assignments $\Sigma$ and a set of  variables $\mathscr{T}$, we can easily adapt definition~\ref{JTGeneral}:

\begin{definition}[General Non-Atomic Justifiably Trustworthy Copies]
\label{LogicJTGeneral}
Given two ML systems dubbed respectively the \textit{original} and the \textit{copy}, we say that the \textit{copy} is Justifiably Trustworthy with respect to the \textit{original} regarding a set of lists of values assignments $\Sigma$ and a set of target variables $\mathscr{T}$ iff for every $\sigma \in \Sigma $ and $t \in \mathscr{T} $ the \textit{copy} is Justifiably Trustworthy with respect to the \textit{original} considering the variable $ t $ and the list of values assignments $ \sigma $.
\end{definition}

The problem with Equally and Almost Trust is that we need to specify a subset of outputs for the target variable $t$ that we want to consider relevant, just like in subsection~\ref{subsec:DiffTrust} we picked outputs $ \alpha ^{1}, \ldots , \alpha ^{m} $.
In this case, we have to focus on a (possibly infinite) set $ B $ of possible values for $t$:

\begin{definition}[Equally Trustworthy Copies for Non-Atomic Variables]
Given two ML systems $ \textgoth{a} $ and $\textgoth{b}$ both having $t$ as target variable, dubbed respectively the \textit{original} and the \textit{copy},  differing for either the Learning Algorithms they implement or the Data Set used during training, we say that the \textit{copy} is Equally Trustworthy with respect to the \textit{original} regarding the set of values $ B $ when considering variable $ t $ and the same row of values assignments $ \sigma $ from the same Test Set as the original iff for every value $ \beta \in B $ for $ t $, $ \textgoth{a}:\beta _{f} $ iff $ \textgoth{b}:\beta _{f} $.
\end{definition}

\begin{definition}[Almost Trustworthy Copies for Non-Atomic Variables]
\label{def:ATNoAtm}
Given two ML systems $ \textgoth{a} $ and $\textgoth{b}$ both having $t$ as target variable, dubbed respectively the \textit{original} and the \textit{copy}, differing for either the Learning Algorithms they implement or the Data Set used during training, we say that the \textit{copy} is Almost Trustworthy with respect to the \textit{original} regarding the set of values $ B $ when considering variable $ t $ and the same row of values assignments $ \sigma $ from the same Test Set as the original iff for every value $ \beta \in B $ for $ t $, if $ \textgoth{a}:\beta _{f} $ and $ \textgoth{b}:\beta _{g} $, then $ g\geq f $.
\end{definition}

Also in this case, the generalization of the previous definition for a set of lists of values assignments $\Sigma$ and a set of variables $\mathscr{T}$ can be adapted from definition~\ref{WEATGeneral}:

\begin{definition}[General Non-Atomic Equally (and Almost) Trustworthy Copies]
\label{EATGeneral}
Given two ML systems dubbed respectively the \textit{original} and the \textit{copy}, we say that the \textit{copy} is Equally or Almost Trustworthy with respect to the \textit{original} regarding a set of lists of values assignments $\Sigma$, a set of target variables $\mathscr{T}$ and a function $v$ assigning relevant values to each atomic target variable $t\in \mathscr{T}$ iff for every $\sigma \in \Sigma $ and $t \in T $ the \textit{copy} is respectively Equally or Almost Trustworthy with respect to the \textit{original} considering the variable $ t $ and the list of values assignments $ \sigma $, regarding the set of values $ B = v(t) $.
\end{definition}

As for Weak Trust, we can adapt definition~\ref{def:ATNoAtm}, by adding the condition of the zeros:

\begin{definition}[Weakly Trustworthy Copies for Non-Atomic Variables]
\label{def:WTNoAtm}
Given two ML systems $ \textgoth{a} $ and $\textgoth{b}$ both having $t$ as target variable, dubbed respectively the \textit{original} and the \textit{copy}, differing for either the Learning Algorithms they implement or the Data Set used during training, we say that the \textit{copy} is Weakly Trustworthy with respect to the \textit{original} regarding the set of values $ B $ when considering variable $ t $ and the same row of values assignments $ \sigma $ from the same Test Set as the original iff:
\begin{itemize}
\item for every value $ \delta $ for $t$, $ \textgoth{a}:\delta _{0} $ iff $ \textgoth{b}:\delta _{0} $;
\item for every value $ \beta \in B $ for $ t $, if $ \textgoth{a}:\beta _{f} $ and $ \textgoth{b}:\beta _{g} $, then $ g\geq f $.
\end{itemize} 
\end{definition}

\noindent
Note that the first condition regards every possible value for $ t $, while the second condition applies only to values in $B$.

Also in this case, the generalization of the previous definition for a set of lists of values assignments $\Sigma$ and a set of variables $\mathscr{T}$ can be adapted from definition~\ref{WEATGeneral}:

\begin{definition}[General Non-Atomic Weakly Trustworthy Copies]
\label{WTGeneral}
Given two ML systems dubbed respectively the \textit{original} and the \textit{copy}, we say that the \textit{copy} is Weakly Trustworthy with respect to the \textit{original} regarding a set of lists of values assignments $\Sigma$, a set of target variables $\mathscr{T}$ and a function $v$ assigning relevant values to each atomic target variable $t\in \mathscr{T}$ iff for every $\sigma \in \Sigma $ and $t \in T $ the \textit{copy} is Weakly Trustworthy with respect to the \textit{original} considering the variable $ t $ and the list of values assignments $ \sigma $, regarding the set of values $ B = v(t) $.
\end{definition}

\subsubsection{Preservation Results}
\label{subsubsec:PreservationResults}

Let us now address trustworthiness for possibly non-atomic variables and values, investigating how it is linked to trustworthiness for atomic variables and values, starting with the generalization of JT.
Let us first prove a simple lemma:

\begin{lemma}[Construction and Deconstruction Preserve Identity]
\label{lemma:CompPresAgree}
If $\textgoth{a}_{1}:\beta^{1} _{f_{1}} , \ldots , \textgoth{a}_{n}: \beta^{n}_{f_{n}} \mapsto ^{cstr} \textgoth{c}: \delta _{g}$ and $\textgoth{b}_{1}:\beta^{1} _{f_{1}} , \ldots , \textgoth{b}_{n}: \beta^{n}_{f_{n}} \mapsto ^{cstr} \textgoth{d}: \delta _{h}$, or $\textgoth{a}_{1}:\beta^{1} _{f_{1}} , \ldots , \textgoth{a}_{n}: \beta^{n}_{f_{n}} \mapsto ^{decstr} \textgoth{c}: \delta _{g}$ and $\textgoth{b}_{1}:\beta^{1} _{f_{1}} , \ldots , \textgoth{b}_{n}: \beta^{n}_{f_{n}} \mapsto ^{decstr} \textgoth{d}: \delta _{h}$, and moreover the target variable of $\textgoth{d}$ is identical to the target variable of $\textgoth{c}$, then $g=h$.
\end{lemma}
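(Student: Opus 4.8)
The plan is to prove the statement by induction on the right-rule derivation that witnesses the construction (respectively the deconstruction), resting on a single structural observation about the calculus: in every rule of tables~\ref{tab:triangle} and~\ref{tab:LRtriangle}, the probability decorating the conclusion is a fixed function of the probabilities decorating the premises, and this function never inspects \emph{which} ML system labels a judgment. Reading the right rules off the tables, I$\rightarrow$ leaves the probability unchanged, I$\times$ multiplies ($f\cdot g$), I$\rhd+$ adds ($f+g$), and I$\rhd\bot$ sends $g$ to $1-g$; by Definition~\ref{def:InvPrinc} the corresponding right E-rules invert these same functions. Since both derivations start from judgments carrying the very same probabilities $f_{1},\ldots,f_{n}$ on the very same values $\beta^{1},\ldots,\beta^{n}$, and both terminate on the same value $\delta$ for the same target variable, the whole task reduces to showing that they apply the same system-independent probability functions to the same numerical inputs.

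First I would set up the induction on the height of the derivation. In the base case no rule is applied: $\textgoth{c}:\delta_{g}$ is literally one of the hypotheses $\textgoth{a}_{i}:\beta^{i}_{f_{i}}$, and because the target variables of $\textgoth{c}$ and $\textgoth{d}$ coincide, $\textgoth{d}:\delta_{h}$ is the matching $\textgoth{b}_{i}:\beta^{i}_{f_{i}}$, whence $g=f_{i}=h$. For the inductive step, the common value $\delta$ and target variable pin down the outermost connective of $\delta$ and therefore the last rule of the derivation. Each probabilistic premise of that last rule is itself reached from the starting judgments by a shorter right-rule derivation ending on a value that is the \emph{same} for the $\textgoth{c}$-derivation and the $\textgoth{d}$-derivation; the induction hypothesis then equates the corresponding premise probabilities, and feeding these equal inputs into the (system-blind) probability function of the last rule delivers $g=h$. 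The deconstruction case is handled identically, the right E-rules being read through the Inversion Principle so that, again, the conclusion's probability is a determined function of the premises'.

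The main obstacle is that the last rule need not be literally unique, and that conjunction carries conditional premises with augmented contexts. Specifically, $\beta\times\delta$ can be introduced by either I$\times$1 or I$\times$2 (and dually eliminated by either of the paired E-rules), so the two derivations could in principle decompose the conjunction along opposite orientations, and each orientation feeds on a conditional premise such as $\sigma,t\colon\beta\rhd u\colon\delta_{g}$ whose context is enlarged by $t\colon\beta$. Two facts resolve this. The enlargement $t\colon\beta$ is forced to be identical in both derivations, since they build the same pair term dictated by the shared target variable, so the induction hypothesis still applies premise-by-premise. And the orientation choice is harmless because both I$\times$ rules encode the one probabilistic identity $Pr(B\mid A)\,Pr(A)=Pr(A\times B)=Pr(A\mid B)\,Pr(B)$, so whichever is used the value attached to the conjunction is the same joint probability determined by the starting data. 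I would therefore state the inductive invariant in terms of the probability assigned to each reachable value rather than in terms of a fixed syntactic derivation, so that the benign orientation choices for $\times$ collapse to a single well-defined number and the induction closes with $g=h$.
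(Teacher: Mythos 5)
Your proof is correct and follows essentially the same route as the paper's: both arguments rest on the observations that the conclusion's target variable and value determine (the shape of) the derivation and that each right rule computes the conclusion's probability as a fixed, system-independent function of the premises' probabilities, so equal leaf probabilities propagate to equal conclusion probabilities. Your version is somewhat more explicit than the paper's --- you set up the induction formally and address the I$\times$1/I$\times$2 orientation ambiguity head-on, whereas the paper compresses both points into the claim that reconstruction of the proof from the conclusion's variable and value is ``univocal'' --- but the underlying idea is identical.
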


\begin{proof}
Let us prove that construction preserves identity.
    If $\textgoth{a}_{1}:\beta^{1} _{f_{1}} , \ldots , \textgoth{a}_{n}: \beta^{n}_{f_{n}} \mapsto ^{cstr} \textgoth{c}: \delta _{g}$ and $\textgoth{b}_{1}:\beta^{1} _{f_{1}} , \ldots , \textgoth{b}_{n}: \beta^{n}_{f_{n}} \mapsto ^{cstr} \textgoth{d}: \delta _{h}$ and the target variable of $\textgoth{d}$ is identical to the target variable of $\textgoth{c}$, then the same rules are applied in the same order to both $\textgoth{a}_{1}:\beta^{1} _{f_{1}} , \ldots , \textgoth{a}_{n}: \beta^{n}_{f_{n}}$ and $\textgoth{b}_{1}:\beta^{1} _{f_{1}} , \ldots , \textgoth{b}_{n}: \beta^{n}_{f_{n}}$, in order to reach $\textgoth{c}: \delta _{g}$ and $\textgoth{d}: \delta _{h}$.
    Indeed, applications of the rules for $\rightarrow$ and $\times$ leave evidence in both the variable and the value, while those of the rules for $\bot$ and $+$ leave evidence in the value, and so the reconstruction of the proof from the variable and the value of the conclusion is univocal.
    Hence, the proof easily follows from the observation that the probability in the conclusion of an I-rule is a function of the probabilities in the premises.
    The proof that deconstruction preserves identity is similar.
\end{proof}

\begin{theorem}[Logical Construction and Deconstruction Preserve JT]
\label{theo:LRulesPreserveJT}
Let us consider two sets $\textgoth{A}$ and $\textgoth{B}$ of applied ML systems and assume that for every $\textgoth{a}_{i} \in \textgoth{A}$ there is $\textgoth{b}_{i} \in \textgoth{B}$ such that $\textgoth{a}_{i}$ is JT with respect to $\textgoth{b}_{i}$.
If $\textgoth{a}_{1}, \ldots \textgoth{a}_{n} \mapsto ^{cstr} \textgoth{c}$, and $\textgoth{b}_{1}, \ldots \textgoth{b}_{n} \mapsto ^{cstr} \textgoth{d}$, or $\textgoth{a}_{1}, \ldots \textgoth{a}_{n} \mapsto ^{decstr} \textgoth{c}$, and $\textgoth{b}_{1}, \ldots \textgoth{b}_{n} \mapsto ^{decstr} \textgoth{d}$, and moreover the target variable of $\textgoth{d}$ is identical to the target variable of $\textgoth{c}$, then for every possible value $\delta$, $\textgoth{c}: \delta _{f}$ iff $\textgoth{d}: \delta _{f}$.
Hence, $ \textgoth{c} $ is JT with respect to $\textgoth{d}$.
\end{theorem}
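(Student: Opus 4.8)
The plan is to reduce the claim to the per-value identity furnished by Lemma~\ref{lemma:CompPresAgree}, using the JT hypothesis only to guarantee that the judgments fed into the two constructions (respectively deconstructions) carry identical probabilities. Concretely, by the definition of JT for non-atomic variables it suffices to fix an arbitrary value $\delta$ of the common target variable of $\textgoth{c}$ and $\textgoth{d}$ and prove that $\textgoth{c}:\delta_{f}$ iff $\textgoth{d}:\delta_{f}$; once this holds for every $\delta$, the two systems agree on all values, which is exactly JT.

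First I would treat the construction case. Since the target variable of $\textgoth{c}$ is obtained from the target variables of $\textgoth{a}_{1},\ldots,\textgoth{a}_{n}$ by a fixed sequence of right I-rules, the chosen value $\delta$ decomposes, along that same logical structure, into values $\beta^{1},\ldots,\beta^{n}$ for $\textgoth{a}_{1},\ldots,\textgoth{a}_{n}$. Let $f_{1},\ldots,f_{n}$ be the probabilities assigned by the originals, so that $\textgoth{a}_{i}:\beta^{i}_{f_{i}}$ for each $i$. Because each $\textgoth{a}_{i}$ is JT with respect to $\textgoth{b}_{i}$, the copy assigns the very same probability to every value of its target variable; in particular $\textgoth{b}_{i}:\beta^{i}_{f_{i}}$. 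Thus both families begin from inputs carrying identical probabilities, and applying the construction yields $\textgoth{a}_{1}:\beta^{1}_{f_{1}},\ldots,\textgoth{a}_{n}:\beta^{n}_{f_{n}}\mapsto^{cstr}\textgoth{c}:\delta_{g}$ together with $\textgoth{b}_{1}:\beta^{1}_{f_{1}},\ldots,\textgoth{b}_{n}:\beta^{n}_{f_{n}}\mapsto^{cstr}\textgoth{d}:\delta_{h}$. Since the target variables of $\textgoth{c}$ and $\textgoth{d}$ are assumed identical, Lemma~\ref{lemma:CompPresAgree} gives $g=h$, whence $\textgoth{c}:\delta_{f}$ iff $\textgoth{d}:\delta_{f}$.

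The deconstruction case is entirely analogous: one decomposes $\delta$ along the fixed sequence of right E-rules instead and appeals to the deconstruction half of Lemma~\ref{lemma:CompPresAgree}. The only point that requires care---and the step I expect to be the crux---is the very first one: justifying that a value $\delta$ of the composite target variable determines, uniquely and in the same way for both families, the input values $\beta^{i}$ to which JT is then applied. This rests on the observation already exploited in the proof of Lemma~\ref{lemma:CompPresAgree}, namely that the right logical rules leave their trace in both the variable and the value, so that from the shared target variable together with $\delta$ the underlying rule applications, and hence the decomposition into $\beta^{1},\ldots,\beta^{n}$, are reconstructed univocally. Once this alignment is secured, the agreement of inputs supplied by JT and the propagation of identity supplied by the Lemma combine to make the two probabilistic computations coincide term by term, and the desired equivalence follows for every $\delta$.
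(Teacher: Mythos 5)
Your argument is correct and follows essentially the same route as the paper: use the JT hypothesis to conclude that each pair $\textgoth{a}_{i}$, $\textgoth{b}_{i}$ feeds identical probabilities into the two (de)constructions, then invoke Lemma~\ref{lemma:CompPresAgree} to propagate that identity to the conclusions. The extra care you take over the univocal decomposition of $\delta$ is exactly the observation the paper delegates to the proof of that lemma, so nothing is missing.
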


\begin{proof}
    If $\textgoth{a}_{i}$ is JT with respect to $\textgoth{b}_{i}$, then they share both the same target variable and the same list of values attributions, and they agree on the probability of every possible value for the target variable.
    Hence, the theorem follows from lemma~\ref{lemma:CompPresAgree}.
\end{proof}

Theorem~\ref{theo:LRulesPreserveJT} proves that ML systems that are Justifiably Trustworthy regarding a set of target variables and regarding a set of outputs remain so regarding variables and values that are logically constructed or deconstructed from these.
From this, it follows that we can generalize the trustworthiness of ML systems from only atomic applications to non-atomic ones.
Indeed, as a special case, theorem~\ref{theo:LRulesPreserveJT} remains valid when the systems are originally applied to atomic variables and only for evaluating atomic values.

In general, logical construction changes not only the structure of the target variable but also the list of value attributions in the antecedent of the judgments.
The problem regards the rules for conditional and conjunction in particular.
As for the conditional, there seems to be no solution, since we need to change the antecedent in order to introduce the implication.
The situation is different for the conjunction.
In fact, when the target variables in the premises are mutually independent, the rule of introduction of the conjunction can be simplified as follows, where the extra premise in $ u \ci t $ formalizes the mutual independence of $ u $ and $ t$:\footnote{
Mutual independence is defined as usual: $ A $ and $B$ are mutually independent iff $P(B | A)= P(B)$ (or equivalently $P(A | B)= P(A)$).
Note that this rule is alike the one adopted in \citep{10.1093/logcom/exaf003}.
}
\begin{prooftree}
    \AxiomC{$ \sigma  \rhd u : \delta _{g} $}
	\AxiomC{$ \sigma \rhd t: \beta _{f} $}
    \AxiomC{$ u \ci t $}
	\RightLabel{{\tiny I$\times ^{\ci}$}}
	\TrinaryInfC{$ \sigma  \rhd \langle t, u \rangle: (\beta\times  \delta) _{f\cdot g} $}
\end{prooftree}

\noindent

Hence, we can obtain the following result regarding the preservation of General JT:

\begin{corollary}[Logical Construction Preserves General JT]
\label{PreservationLogicJTGeneral}
Given two ML systems dubbed respectively the \textit{original} and the \textit{copy}, if the \textit{copy} is Justifiably Trustworthy with respect to the \textit{original} regarding a set of lists of values assignments $\Sigma$ and a set of mutually independent atomic target variables $\mathscr{A}$ that do not occur in $\Sigma$, then the \textit{copy} is Justifiably Trustworthy with respect to the \textit{original} also regarding the set of target variables $\mathscr{T}$, obtained closing $ \mathscr{A} $ under conjunction, and the same set of lists of values assignments $\Sigma$.
\end{corollary}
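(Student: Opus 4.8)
The plan is to reduce the corollary to the single-construction result of Theorem~\ref{theo:LRulesPreserveJT} by unfolding the definition of General Non-Atomic Justifiably Trustworthiness (Definition~\ref{LogicJTGeneral}). By that definition, it suffices to show that for every $\sigma \in \Sigma$ and every $t \in \mathscr{T}$, the \emph{copy} is Justifiably Trustworthy with respect to the \emph{original} considering the variable $t$ and the list $\sigma$. So I would fix an arbitrary $\sigma \in \Sigma$ and an arbitrary $t \in \mathscr{T}$, and reason locally about this single pair.

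Since $\mathscr{T}$ is the closure of $\mathscr{A}$ under conjunction, $t$ is built by pairing from finitely many atomic variables $a_{1}, \ldots , a_{k} \in \mathscr{A}$. By hypothesis the \emph{copy} is Justifiably Trustworthy with respect to the \emph{original} regarding $\Sigma$ and $\mathscr{A}$, so for the fixed $\sigma$ and each $a_{i}$ the two systems agree on the probability of every possible atomic value of $a_{i}$. The key step is then to exhibit $t$ as a logical construction of the atomic judgments $\sigma \rhd a_{i}:\beta^{i}$ \emph{without altering the list $\sigma$}. This is exactly what the simplified introduction rule $I\times^{\ci}$ provides: because the variables in $\mathscr{A}$ are mutually independent, at each pairing step the partial conjunction $\langle a_{1}, \ldots , a_{j}\rangle$ is independent of the next component $a_{j+1}$, so $I\times^{\ci}$ applies and keeps $\sigma$ fixed in both premises and in the conclusion. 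The assumption that the variables of $\mathscr{A}$ do not occur in $\Sigma$ guarantees that each $a_{i}$ is a genuine target over $\sigma$, so that the judgments $\sigma \rhd a_{i}:\beta^{i}$ are precisely the atomic applications for which Justifiably Trustworthiness is assumed.

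With this construction in hand, I would apply Theorem~\ref{theo:LRulesPreserveJT}. Since the same sequence of $I\times^{\ci}$ steps over the common list $\sigma$ builds $t$ from both the copy-components and the original-components, the two constructions reach the \emph{same} target variable $t$, and each copy-component is JT with respect to the corresponding original-component. The theorem (through Lemma~\ref{lemma:CompPresAgree}) then yields $\textgoth{c}:\delta_{f}$ iff $\textgoth{d}:\delta_{f}$ for \emph{every} value $\delta$ of $t$, which is exactly Justifiably Trustworthiness of the copy with respect to the original for the variable $t$ and the list $\sigma$. Since $\sigma \in \Sigma$ and $t \in \mathscr{T}$ were arbitrary, Definition~\ref{LogicJTGeneral} delivers the desired General Justifiably Trustworthiness over $\Sigma$ and $\mathscr{T}$.

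I expect the main obstacle to be the bookkeeping around the antecedent $\sigma$. The ordinary rules $I\times 1$ and $I\times 2$ move a value attribution into the condition (turning $\sigma$ into $\sigma, t:\beta$), which would take the construction outside the fixed $\sigma$ and break the appeal to General JT over $\Sigma$. Justifying the replacement of these rules by $I\times^{\ci}$ is therefore the crux, and it is precisely here that both hypotheses---mutual independence of $\mathscr{A}$ and disjointness of $\mathscr{A}$ from the variables occurring in $\Sigma$---are used; everything downstream is a direct invocation of Theorem~\ref{theo:LRulesPreserveJT}.
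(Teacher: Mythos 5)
Your proposal is correct and follows essentially the same route as the paper: both reduce the claim to Theorem~\ref{theo:LRulesPreserveJT} by observing that, under the hypotheses of mutual independence and disjointness of $\mathscr{A}$ from the variables in $\Sigma$, the only applicable conjunction-introduction rule is I$\times^{\ci}$, which keeps the antecedent $\sigma$ fixed throughout the construction. Your explicit identification of the antecedent bookkeeping (the fact that I$\times$1 and I$\times$2 would turn $\sigma$ into $\sigma, t:\beta$) as the crux is exactly the point the paper's proof makes.
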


\begin{proof}
Assuming that none of the target variables $\mathscr{A}$ occur in $\Sigma$, the only rule that can introduce a conjunction in the target variable is I$\times ^{\ci}$, which has the same antecedent in both the premises and the conclusion.
Note that this rule is applicable because of the assumption that all variables in $\mathscr{A}$ are mutually independent.
Moreover, the rules for conditional cannot be applied, since conditional does not occur in $\mathscr{T}$.
As for the other rules, it is obvious by inspecting the rules of tables~\ref{tab:triangle} and \ref{tab:LRtriangle} that, when the target variable of every premise occurs in the antecedent of none of the other premises, the antecedents are all equal in both the premises and the conclusion.
As an example, the rules for the introduction of disjunction on the left cannot be applied, since it requires that $t:\gamma$ and $t:\beta$ occur in both the antecedent and the conclusion of the premises.
This fact entails that the ML system remains JT regarding the same set $\Sigma$.
With these observation, the proof follows by theorem~\ref{theo:LRulesPreserveJT}.
\end{proof}

When the target variables are not mutually independent and distinct from the variables occurring in $\Sigma$, we have to apply theorem~\ref{theo:LRulesPreserveJT} and so take into account the changes in $\Sigma$.

Let us now deal with Almost Trust.
Also in this case, we start with a lemma of preservation, but only for construction and restricted only to some logical constants:

\begin{lemma}[Construction with Implication, Conjunction and Disjunction preserves Inequality]
\label{lemma:LRulesPreserveLBound}
If $\textgoth{a}_{1}:\beta^{1} _{f_{1}} , \ldots , \textgoth{a}_{n}: \beta^{n}_{f_{n}} \mapsto ^{cstr} \textgoth{c}: \delta _{f_{n+1}}$, and $\textgoth{b}_{1}:\beta^{1} _{g_{1}} , \ldots , \textgoth{b}_{n}: \beta^{n}_{g_{n}} \mapsto ^{cstr} \textgoth{d}: \delta _{g_{n+1}}$, and moreover the target variable of $\textgoth{d}$ is identical to the target variable of $\textgoth{c}$, $\delta$ is $ \bot $-free, and for every $ i $ s.t. $1 \leq i \leq n$, $f_{i} \geq g_{i} $, then $f_{n+i} \geq g_{n+i} $.
\end{lemma}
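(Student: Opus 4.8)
The plan is to prove the inequality by structural induction on the given construction, using the single fact that, once negation is forbidden, every right I-rule computes the probability of its conclusion as a \emph{monotone non-decreasing} function of the probabilities of its premises. To set this up I would first invoke the structural content of the proof of Lemma~\ref{lemma:CompPresAgree}: since $\textgoth{c}$ and $\textgoth{d}$ have the same target variable and the same value $\delta$, the two constructions apply the same right I-rules in the same order, so their derivation trees are isomorphic and can be matched node by node. The induction then runs on this common tree, the invariant being that at every node the probability computed on the $\textgoth{a}$-side is at least the probability computed at the matching node on the $\textgoth{b}$-side.

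At the leaves the invariant is exactly the hypothesis $f_i \geq g_i$. For an internal node I would split into cases according to the last rule, which by the definition of construction is a right I-rule and, because $\delta$ is $\bot$-free, is one of I$\rightarrow$, I$\times$1, I$\times$2 or I$\rhd+$ --- never I$\rhd\bot$. For I$\rightarrow$ the probability is passed through unchanged, so the inductive hypothesis on the unique premise gives the claim at once. For I$\rhd+$ the conclusion's probability is the sum of the two premise probabilities, and $x+y$ is non-decreasing in each argument, so the two instances of the hypothesis combine to give it. For I$\times$1 (and symmetrically I$\times$2) the conclusion's probability is the product of the two premise probabilities; since all probabilities are non-negative and $x\cdot y$ is non-decreasing in each non-negative argument, the hypotheses again combine. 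Taking the root node yields $f_{n+1}\geq g_{n+1}$.

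The delicate point, and the one I expect to require the most care, is the conjunction case together with the matching of the two trees. There the premises split between two sub-derivations --- one yielding a marginal, the other a conditional whose antecedent has been extended by $t:\beta$ --- and one must check that the node-by-node correspondence of Lemma~\ref{lemma:CompPresAgree} genuinely pairs these sub-derivations, even though I$\times$1 and I$\times$2 produce the same conclusion and the antecedents are altered along the way; this is harmless only because both variants compute the same joint probability, so the matching may be fixed without affecting any value. The single genuinely essential hypothesis, by contrast, is $\bot$-freeness: the probability function of I$\rhd\bot$ is $g\mapsto 1-g$, which is \emph{decreasing}, so admitting even one negation could reverse the inequality, which is exactly why the lemma restricts to $\bot$-free $\delta$ and why the analogous statement would fail without it.
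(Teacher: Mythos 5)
Your proof is correct and follows essentially the same route as the paper's: the paper's entire argument is the observation that if $f_i \geq g_i$ and $f_j \geq g_j$ then $f_i + f_j \geq g_i + g_j$ and $f_i \cdot f_j \geq g_i \cdot g_j$, i.e.\ exactly the monotonicity of the probability functions of the $\bot$-free right I-rules that drives your induction. Your version merely makes explicit the tree-matching (borrowed from Lemma~\ref{lemma:CompPresAgree}) and the role of $\bot$-freeness, which the paper relegates to the surrounding discussion.
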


\begin{proof}
The proof follows from the observation that if $ f_{i} \geq g_{i}$ and $ g_{i} \geq g_{j}$, then $ f_{i} +  f_{j} \geq g_{i}+ g_{j}$ and $ f_{i} \times f_{j} \geq g_{i} \times g_{j}$.    
\end{proof}

Note that we required that $\delta$ contains only $ \rightarrow $, $ + $ and $ \times $, but not $ \bot $.
This requirement is needed, since we can easily find counterexamples to preservation of inequality under negation.
As an example, let us assume that $\textgoth{a}: \beta _{f}$ and $\textgoth{b}: \beta _{g}$ with $ f\gneq g $, that is $\textgoth{a}$ is not ET with respect to $\textgoth{b}$ regarding $\beta$.
For $ I \rhd \bot $, $\textgoth{a}: \beta ^{\bot} _{1-f}$ and $\textgoth{b}: \beta ^{\bot} _{1-g}$.
Moreover, it is a trivial arithmetical fact that if $ f\gneq g $ then $ 1-g \gneq 1-f $.

Note also that, as opposed to lemma~\ref{lemma:CompPresAgree}, lemma~\ref{lemma:LRulesPreserveLBound} cannot be generalized for logical deconstructions.
Indeed, let us assume that $f_{i}$ and $g_{i}$ are the probabilities of the major premises of two instances of an E-rule, and $f_{j}$ and $g_{j}$ are the probabilities of their minor premises.
Moreover, let us assume that $ f_{i} \geq g_{i}$ and $ f_{j} \geq g_{j}$.
The probabilities associated with the conclusion of the instances of the E-rule are respectively $ \frac{f_{i}}{f_{j}} $ and $ \frac{g_{i}}{g_{j}}$ for $\times$, and $ f_{i} - f_{j} $ and $ g_{i} - g_{j}$ for $+$.
However, $ f_{i} \geq g_{i}$ and $ f_{j} \geq g_{j}$ entails neither $ \frac{f_{i}}{f_{j}} \geq \frac{g_{i}}{g_{j}}$, nor $ f_{i} - f_{j} \geq g_{i} - g_{j}$.
An easy counterexample for both of them is the following: $ f_{i} = 0.6 $,  $ g_{i} = 0.6 $, $ f_{j} = 0.3 $ and $ g_{j} = 0.2 $.

With these results, we can now address AT preservation:

\begin{theorem}[Construction with Implication, Conjunction and Disjunction preserves AT]
\label{theorem:LRulesPreserveAT}
Let us consider two sets $\textgoth{A}$ and $\textgoth{B}$ of applied ML systems and assume that for every $\textgoth{a}_{i} \in \textgoth{A}$ there is $\textgoth{b}_{i} \in \textgoth{B}$ such that $\textgoth{a}_{i}$ is AT with respect to $\textgoth{b}_{i}$, regarding the set of values $ B^{i} $.
If $\textgoth{a}_{1}, \ldots \textgoth{a}_{n} \mapsto ^{cstr} \textgoth{c}$, and $\textgoth{b}_{1}, \ldots \textgoth{b}_{n} \mapsto ^{cstr} \textgoth{d}$, and moreover the target variable of $\textgoth{d}$ is identical to the target variable of $\textgoth{c}$, then for every value $\delta$ in the closure of $ \bigcup^{n}_{i = 1} B^{i} $ under $ \rightarrow $, $ + ^{\Bot}$ and $ \times $, if $\textgoth{c}: \delta _{f}$ and $\textgoth{d}: \delta _{g}$, then $ f \geq g $.
Hence, $ \textgoth{c} $ is AT with respect to $\textgoth{d}$, regarding all values $\delta$ in the closure of $ \bigcup^{n}_{i = 1} B^{i} $ under $ \rightarrow $, $ + ^{\Bot}$ and $ \times $.\footnote{
The closure of the set $B$ under $ + ^{\Bot}$ is the set $B'$ s.t.: $B \subseteq B'$ and if $\beta \in B'$, $\delta \in B'$, and $\beta ~ \Bot  ~ \delta$, then $\beta + \delta \in B'$.
Note that this restriction on the closure under $+$ could be removed.
In fact, for $\textgoth{c}: \delta _{f}$ to be possible, $\delta$ cannot contain a disjunction of values that are not mutually exclusive, due to restrictions on the rules of disjunction.
Nevertheless, we chose to make this restriction explicit to prevent erroneous applications of the theorem.
}
\end{theorem}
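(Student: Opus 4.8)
The plan is to reduce the statement to the inequality-preservation lemma (Lemma~\ref{lemma:LRulesPreserveLBound}) by first translating the Almost Trustworthiness hypotheses on the building blocks into pointwise inequalities between the leaf probabilities of the two constructions. First I would fix a value $\delta$ in the closure of $\bigcup_{i=1}^{n} B^{i}$ under $\rightarrow$, $+^{\Bot}$ and $\times$, and suppose $\textgoth{c}: \delta _{f}$ and $\textgoth{d}: \delta _{g}$ are the two judgments obtained by construction. Since construction uses only right I-rules, and since (by the structure-determination reasoning of Lemma~\ref{lemma:CompPresAgree}) the target value $\delta$ fixes the shape of the derivation, both derivations have the same shape and their leaves correspond pairwise: $\textgoth{c}: \delta _{f}$ descends from $\textgoth{a}_{1}:\beta^{1}_{f_{1}},\ldots,\textgoth{a}_{n}:\beta^{n}_{f_{n}}$ and $\textgoth{d}: \delta _{g}$ from $\textgoth{b}_{1}:\beta^{1}_{g_{1}},\ldots,\textgoth{b}_{n}:\beta^{n}_{g_{n}}$, with the same constituent values $\beta^{i}$ but possibly different probabilities.

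Next I would observe that each constituent value $\beta^{i}$, being one of the base values out of which $\delta$ is built, lies in the relevant set $B^{i}$ attached to the pair $(\textgoth{a}_{i},\textgoth{b}_{i})$; the assignment of each leaf to its system is exactly what the construction records. Hence the Almost Trustworthiness of $\textgoth{a}_{i}$ with respect to $\textgoth{b}_{i}$ regarding $B^{i}$ applies at every leaf and yields $f_{i}\geq g_{i}$ for all $i$, which is precisely the leaf hypothesis demanded by Lemma~\ref{lemma:LRulesPreserveLBound}. It remains only to check that the lemma is applicable, i.e.\ that $\delta$ is $\bot$-free: this is guaranteed by the choice of closure, since $+^{\Bot}$ merely combines mutually exclusive disjuncts into a disjunction and introduces no negation, and neither do $\rightarrow$ nor $\times$. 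Applying Lemma~\ref{lemma:LRulesPreserveLBound} then gives $f\geq g$; as $\delta$ was arbitrary in the closure and $\textgoth{c},\textgoth{d}$ share a target variable, this is exactly the claim that $\textgoth{c}$ is Almost Trustworthy with respect to $\textgoth{d}$ regarding the whole closure.

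I expect the delicate point to be the bookkeeping of the first two steps: correctly matching the constituents of $\delta$ to the systems $\textgoth{a}_{i}$ and their relevant sets $B^{i}$, so that the AT hypothesis can legitimately be invoked at each leaf (in particular when a value could be shared by several sets). The genuinely restrictive ingredient, by contrast, is the confinement of the closure to $\bot$-free values: monotonicity of the construction fails under negation, because the map $x\mapsto 1-x$ used by the $\rhd\bot$ rule reverses inequalities, so excluding $\bot$ is not a convenience but a necessity, witnessed by the counterexample recorded immediately after Lemma~\ref{lemma:LRulesPreserveLBound}. Once the leaf inequalities are secured, the propagation through $\rightarrow$, $+$ and $\times$ is routine, since those rules compute the conclusion probability as an identity, a sum, or a product of the premise probabilities, and all three operations are monotone in their arguments.
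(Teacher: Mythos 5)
Your proposal is correct and follows essentially the same route as the paper's own proof: unfold the Almost Trustworthiness hypotheses into leafwise inequalities $f_i \geq g_i$ and invoke Lemma~\ref{lemma:LRulesPreserveLBound}, noting that the restriction to the $+^{\Bot}$-closure keeps $\delta$ $\bot$-free and that only values actually derivable by I-rules need be considered. You merely spell out the leaf-matching bookkeeping more explicitly than the paper does, which is a fair refinement rather than a deviation.
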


\begin{proof}
From definition~\ref{def:ATNoAtm}, $\textgoth{a}_{i}$ is AT with respect to $\textgoth{b}_{i}$, regarding the set of values $ B^{i} $, iff, for every value $ \beta \in B^{i} $, if $ \textgoth{a}:\beta _{f_{i}} $ and $ \textgoth{b}:\beta _{g_{i}} $, then $ f_{i}\geq g_{i} $.
In the same way, $ \textgoth{c} $ is AT with respect to $\textgoth{d}$, regarding all values $\delta$ in the closure of $ \bigcup^{n}_{i = 1} B^{i} $ under $ \rightarrow $, $ + ^{\Bot}$ and $ \times $, iff, if $ \textgoth{c}:\delta _{f} $ and $ \textgoth{d}:\delta _{g} $, then $ f\geq g $.
Hence, the result follows from lemma~\ref{lemma:LRulesPreserveLBound}.
Indeed, note that, even though we consider all $\delta$ in the closure of $ \bigcup^{n}_{i = 1} B^{i} $ under $ \rightarrow $, $ + ^{\Bot}$ and $ \times $, the assumptions $\textgoth{a}_{1}, \ldots \textgoth{a}_{n} \mapsto ^{cstr} \textgoth{c}$ and $\textgoth{b}_{1}, \ldots \textgoth{b}_{n} \mapsto ^{cstr} \textgoth{d}$, together with $\textgoth{c}: \delta _{f}$ and $\textgoth{d}: \delta _{g}$, make sure that we can focus only on the values that can be obtained by applications of the I-rules. 
\end{proof}

Also in this case, when the target variables are mutually independent and do not occur in $\Sigma$, we can obtain a stronger result, regarding the preservation of General AT:

\begin{corollary}[Construction with Implication, Conjunction and Disjunction preserves General AT]
\label{PreservationLogicATGeneral}
Given two ML systems dubbed respectively the \textit{original} and the \textit{copy}, if the \textit{copy} is Almost Trustworthy with respect to the \textit{original} regarding a set of lists of values assignments $\Sigma$, a set of mutually independent atomic target variables $\mathscr{A}$ that do not occur in $\Sigma$ and a function $v$ assigning relevant values to each atomic target variable $a\in \mathscr{A}$, then the \textit{copy} is Almost Trustworthy with respect to the \textit{original} also regarding the same set of lists of values assignments $\Sigma$, the set of target variables $\mathscr{T}$, obtained closing $ \mathscr{A} $ under conjunction, and the function $v'$ from $t\in \mathscr{T}$ to outputs so defined:
\begin{itemize}
    \item for every atomic target variable $a$ occurring in $t$, $v'(a)$ is the closure of $v(a)$ under $+^{\Bot}$;
    \item $v'(\langle t,u \rangle )$ is the closure under $+^{\Bot}$ of $v'(t) \times v'(u)$.
\end{itemize}
\end{corollary}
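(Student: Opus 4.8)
The plan is to follow the proof of Corollary~\ref{PreservationLogicJTGeneral} step by step, substituting the appeal to Theorem~\ref{theo:LRulesPreserveJT} with one to Theorem~\ref{theorem:LRulesPreserveAT} and checking that the function $v'$ catalogues exactly the composite values reachable by construction. Unwinding Definition~\ref{EATGeneral}, it suffices to fix an arbitrary $\sigma \in \Sigma$, an arbitrary $t \in \mathscr{T}$, and an arbitrary value $\delta \in v'(t)$, and to establish that the \textit{copy} is Almost Trustworthy with respect to the \textit{original} considering variable $t$, row $\sigma$, and the singleton set of relevant values $\{\delta\}$; that is, by Definition~\ref{def:ATNoAtm}, that whenever the original assigns $\delta$ probability $f$ and the copy assigns it probability $g$, then $g \geq f$.

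First I would read off from the definitions of $\mathscr{T}$ and $v'$ that the pair $(t,\delta)$ is generated purely by conjunction and mutually exclusive disjunction: $t$ is a conjunction of atomic variables $a_{1}, \ldots , a_{k} \in \mathscr{A}$, and, tracing the recursive clauses of $v'$, the value $\delta$ lies in the closure under $+^{\Bot}$ of a product of members of the sets $v'(a_{i})$, each of which is itself the closure of $v(a_{i})$ under $+^{\Bot}$. Consequently $\delta$ belongs to the closure of $\bigcup_{i=1}^{k} v(a_{i})$ under $\times$ and $+^{\Bot}$, hence a fortiori to the closure under $\rightarrow$, $+^{\Bot}$ and $\times$ that appears in Theorem~\ref{theorem:LRulesPreserveAT}.

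The heart of the argument is then to exhibit the construction realizing $\delta$ while keeping $\sigma$ fixed, exactly as in Corollary~\ref{PreservationLogicJTGeneral}. Because the variables in $\mathscr{A}$ are mutually independent and absent from $\Sigma$, every conjunction can be introduced by $I\times^{\ci}$, whose antecedent is identical in both premises and in the conclusion, while every admissible disjunction---and the restriction to the $+^{\Bot}$ closure guarantees precisely the mutual exclusivity premise demanded by $I\rhd+$---is introduced on the right without touching the antecedent. No left rule can fire, since the target variables never occur in $\sigma$. Thus a single $\sigma \in \Sigma$ serves throughout, and we obtain constructions $\textgoth{a}_{1}, \ldots , \textgoth{a}_{k} \mapsto^{cstr} \textgoth{c}$ and $\textgoth{b}_{1}, \ldots , \textgoth{b}_{k} \mapsto^{cstr} \textgoth{d}$ for copy and original respectively, with $\textgoth{c}$ and $\textgoth{d}$ sharing target variable $t$ and value $\delta$.

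Finally I would invoke Theorem~\ref{theorem:LRulesPreserveAT}. The hypothesis that the copy is General Almost Trustworthy over $\Sigma$, $\mathscr{A}$ and $v$ unwinds, by Definition~\ref{WEATGeneral}, into the atomic premise that for each $a_{i}$ the copy is Almost Trustworthy with respect to the original regarding $v(a_{i})$ at row $\sigma$; this is exactly the per-component AT premise of the theorem. Since $\delta$ sits in the closure of $\bigcup_{i} v(a_{i})$ under $\rightarrow$, $+^{\Bot}$ and $\times$ and is in fact built using only $\times$ and $+^{\Bot}$, the theorem delivers the inequality in the required direction, namely $g \geq f$. Quantifying back over all $\sigma \in \Sigma$, $t \in \mathscr{T}$ and $\delta \in v'(t)$ and reapplying Definition~\ref{EATGeneral} yields General Almost Trustworthiness of the copy over $\Sigma$, $\mathscr{T}$ and $v'$. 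I expect the only delicate point to be the antecedent-preservation step: one must verify that mutual independence together with non-occurrence in $\Sigma$ genuinely excludes every antecedent-altering rule, so that the atomic and the composite judgments can legitimately share the same row $\sigma$---this is the same obstacle already negotiated in the JT case, now to be rechecked against the disjunction rules entering through the $+^{\Bot}$ closure.
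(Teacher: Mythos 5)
Your proposal is correct and follows essentially the same route as the paper's own (much terser) proof: reduce to Theorem~\ref{theorem:LRulesPreserveAT} by observing that mutual independence and non-occurrence in $\Sigma$ force every construction step to be an antecedent-preserving right I-rule (I$\times^{\ci}$ for conjunction, I$\rhd+$ licensed by the $+^{\Bot}$ closure), exactly as in Corollary~\ref{PreservationLogicJTGeneral}. Your additional unwinding of the definition of $v'$ to check that every $\delta\in v'(t)$ lies in the $\times$/$+^{\Bot}$ closure of $\bigcup_i v(a_i)$, thereby implicitly excluding $\bot$ as required by Lemma~\ref{lemma:LRulesPreserveLBound}, matches the paper's remark about the induction defining $v'$.
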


\begin{proof}
Like for corollary~\ref{PreservationLogicJTGeneral}, the proof just follows from theorem~\ref{theorem:LRulesPreserveAT}, observing that, when the target values of the premises of a rule do not occur in the antecedent of any of the other premises, the antecedent remains the same in the conclusion and the only I-rule for conjunction that can be applied is I$\times ^{\ci}$.
The only tricky part is the definition of $v'$, which we have provided for induction on the complexity of $t\in \mathscr{T}$.
We need to close any induction step under $+ ^{\Bot}$ but not under $\bot$, due to the lemma~\ref{lemma:LRulesPreserveLBound}.
\end{proof}

As for Equal Trust, the following preservation result can be proved:

\begin{theorem}[Logical Construction preserves ET]
\label{theorem:LRulesPreserveET}
Let us consider two sets $\textgoth{A}$ and $\textgoth{B}$ of applied ML systems and assume that for every $\textgoth{a}_{i} \in \textgoth{A}$ there is $\textgoth{b}_{i} \in \textgoth{B}$ such that $\textgoth{a}_{i}$ is ET with respect to $\textgoth{b}_{i}$, regarding the set of values $ B^{i} $.
If $\textgoth{a}_{1}, \ldots \textgoth{a}_{n} \mapsto ^{cstr} \textgoth{c}$, and $\textgoth{b}_{1}, \ldots \textgoth{b}_{n} \mapsto ^{cstr} \textgoth{d}$, and moreover the target variable of $\textgoth{d}$ is identical to the target variable of $\textgoth{c}$, then for every value $\delta$ in the closure of $ \bigcup^{n}_{i = 1} B^{i} $ under $ \rightarrow $, $ + ^{\Bot} $, $ \times $ and $ \bot $, if $\textgoth{c}: \delta _{f}$ then $\textgoth{d}: \delta _{f}$.
Hence, $ \textgoth{c} $ is ET with respect to $\textgoth{d}$, regarding all values $\delta$ in the closure of $ \bigcup^{n}_{i = 1} B^{i} $ under $ \rightarrow $, $ + ^{\Bot} $, $ \times $ and $ \bot $.
\end{theorem}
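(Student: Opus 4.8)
The plan is to mirror the proof of Theorem~\ref{theorem:LRulesPreserveAT}, replacing the inequality-preservation Lemma~\ref{lemma:LRulesPreserveLBound} with the equality-preservation Lemma~\ref{lemma:CompPresAgree}. First I would unfold the definition of ET: that $\textgoth{a}_i$ is ET with respect to $\textgoth{b}_i$ regarding $B^i$ means that for every $\beta \in B^i$ one has $\textgoth{a}_i:\beta_f$ iff $\textgoth{b}_i:\beta_f$, i.e.\ the two systems attach \emph{the same} probability to each relevant value. Dually, the goal---that $\textgoth{c}$ is ET with respect to $\textgoth{d}$ regarding the closure---reduces to showing that whenever $\delta$ lies in the closure of $\bigcup_{i=1}^{n} B^i$ under $\rightarrow$, $+^{\Bot}$, $\times$ and $\bot$ and $\textgoth{c}:\delta_f$, then $\textgoth{d}:\delta_f$.

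Next I would reduce the theorem to Lemma~\ref{lemma:CompPresAgree}. Because $\delta$ belongs to the closure of $\bigcup_{i=1}^{n} B^i$ and $\textgoth{c}:\delta_f$ is witnessed by a logical construction (right I-rules only), the leaf values $\beta^i$ feeding the two constructions are drawn from the respective $B^i$; by the ET hypothesis this forces the leaf probabilities to coincide, so that $\textgoth{a}_i:\beta^i_{f_i}$ and $\textgoth{b}_i:\beta^i_{f_i}$ carry the same $f_i$. With identical leaf probabilities and, by hypothesis, identical target variable for $\textgoth{c}$ and $\textgoth{d}$, the two derivations apply the same rules in the same order, so Lemma~\ref{lemma:CompPresAgree} applies verbatim and delivers equal conclusion probabilities. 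Exactly as in the AT proof, the hypotheses $\textgoth{a}_1,\ldots,\textgoth{a}_n \mapsto^{cstr} \textgoth{c}$, $\textgoth{b}_1,\ldots,\textgoth{b}_n \mapsto^{cstr} \textgoth{d}$ together with $\textgoth{c}:\delta_f$ guarantee that only values actually obtainable by the I-rules need be considered, so quantifying over the entire closure is harmless.

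The conceptually decisive point---and where this statement genuinely departs from Theorem~\ref{theorem:LRulesPreserveAT}---is the treatment of negation. The right rule for $\bot$ (namely $\mathrm{I}\rhd\bot$) \emph{reverses} inequalities, sending $f \geq g$ to $1-f \leq 1-g$, which is precisely the counterexample that forced Lemma~\ref{lemma:LRulesPreserveLBound} to exclude $\bot$ from its closure; but it manifestly \emph{preserves} equality, since $g_c = g_d$ yields $1-g_c = 1-g_d$. As Lemma~\ref{lemma:CompPresAgree} already accounts for all right I-rules, including those for $\bot$ and $+$, no extra argument for negation is needed and the closure may legitimately be taken under $\bot$ as well.

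I expect the main obstacle to be bookkeeping rather than depth: one must check that membership of $\delta$ in the closure of $\bigcup_{i=1}^{n} B^i$ is exactly what pins each leaf value $\beta^i$ to the set $B^i$ for which $\textgoth{a}_i$ is ET with respect to $\textgoth{b}_i$, and that the shared target variable licenses the parallel-derivation claim. Both facts are internal to Lemma~\ref{lemma:CompPresAgree}, so once the agreement of the leaf probabilities is established the theorem follows by a single citation of that lemma.
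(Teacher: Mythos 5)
Your proposal is correct and follows essentially the same route as the paper: the paper's proof also reduces the statement to the identity-preservation argument of Theorem~\ref{theo:LRulesPreserveJT} (hence to Lemma~\ref{lemma:CompPresAgree}), with the sole adjustment of restricting attention to the values in $B^{i}$, for which alone the premise probabilities of $\textgoth{a}_{i}$ and $\textgoth{b}_{i}$ are known to coincide. Your explicit remark that $\mathrm{I}\rhd\bot$ preserves equality (unlike inequality) merely spells out why the closure may include $\bot$ here but not in Theorem~\ref{theorem:LRulesPreserveAT}, which the paper leaves implicit.
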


\begin{proof}
The proof is like the one for theorem~\ref{theo:LRulesPreserveJT}.
We only have to focus on the values $ B^{i} $, since only for them we know that the probabilities in the premises are the same for $\textgoth{a}_{i}$ and $\textgoth{b}_{i}$.
\end{proof}

Like for theorem~\ref{theo:LRulesPreserveJT}, also theorem~\ref{theorem:LRulesPreserveET} holds for logical deconstruction as well:

\begin{theorem}[Logical deconstruction preserves ET]
\label{theorem:LERulesPreserveET}
Let us consider two sets $\textgoth{A}$ and $\textgoth{B}$ of applied ML systems and assume that for every $\textgoth{a}_{i} \in \textgoth{A}$ there is $\textgoth{b}_{i} \in \textgoth{B}$ such that $\textgoth{a}_{i}$ is ET with respect to $\textgoth{b}_{i}$, regarding the set of values $ B^{i} $.
If $\textgoth{a}_{1}, \ldots \textgoth{a}_{n} \mapsto ^{decstr} \textgoth{c}$, and $\textgoth{b}_{1}, \ldots \textgoth{b}_{n} \mapsto ^{decstr} \textgoth{d}$, and moreover the target variable of $\textgoth{d}$ is identical to the target variable of $\textgoth{c}$, then for every value $\delta$ that is a sub-value of an element $\beta$ in $ \bigcup^{n}_{i = 1} B^{i} $, if $\textgoth{c}: \delta _{f}$ then $\textgoth{d}: \delta _{f}$.
Hence, $ \textgoth{c} $ is ET with respect to $\textgoth{d}$, regarding all values $\delta$ that are sub-values of an element $\beta$ in $ \bigcup^{n}_{i = 1} B^{i} $.\footnote{
A sub-value of $ \beta $ is recursively defined in the following way: $ \beta $ is a sub-value of itself; $ \beta $ is a sub-value of $ \beta ^{\bot}$; $ \beta $ and $ \gamma $ are sub-values of $ \beta \rightarrow \gamma$, $ \beta + \gamma$ and $ \beta \times \gamma$.
}
\end{theorem}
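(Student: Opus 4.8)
The plan is to run the argument in exact parallel with Theorem~\ref{theorem:LRulesPreserveET}, the construction version, replacing the closure of $\bigcup^{n}_{i=1} B^{i}$ by its sub-value set and invoking the \emph{deconstruction} half of Lemma~\ref{lemma:CompPresAgree} in place of the construction half. The workhorse is the same as for Theorem~\ref{theo:LRulesPreserveJT}: since $\textgoth{c}$ and $\textgoth{d}$ share the same target variable, Lemma~\ref{lemma:CompPresAgree} tells us that the two deconstructions reaching $\textgoth{c}:\delta_{f}$ and $\textgoth{d}:\delta_{g}$ apply the same right E-rules in the same order, so that the output probabilities coincide ($f=g$) as soon as the two derivations agree on the probabilities assigned to their leaves. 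Thus the whole theorem reduces to showing that, for the values $\delta$ in the advertised set, every leaf premise used carries a probability on which $\textgoth{a}_{i}$ and $\textgoth{b}_{i}$ already agree.

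First I would establish the structural fact dual to the closure used in the construction case: each right E-rule yields, in its conclusion, a \emph{sub-value} of the value occurring in its major premise, and moreover every minor premise it consumes also concerns a sub-value of that same value. This is verified by inspecting the rules one by one --- the eliminations for $\times$ extract $\beta$ or $\delta$ from $\beta\times\delta$, the double-line rules for $\rightarrow$ and $\bot$ pass from $\beta\rightarrow\delta$ and $\delta^{\bot}$ to $\delta$, and the eliminations for $+$ on the right move between $\delta+\gamma$ and its disjuncts, with their minor premises carrying exactly those disjuncts. By transitivity of the sub-value relation, every value occurring anywhere in the deconstruction --- in particular $\delta$ itself --- is a sub-value of one of the starting leaf values, which is precisely what pins down the scope ``sub-value of an element of $\bigcup^{n}_{i=1} B^{i}$'' in the statement.

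With this in hand the argument should close: the restriction of $\delta$ to sub-values of elements of $\bigcup^{n}_{i=1} B^{i}$ is designed to keep the leaves feeding the derivation inside the relevant $B^{i}$, so that the ET hypothesis ($\textgoth{a}_{i}$ and $\textgoth{b}_{i}$ agree on $B^{i}$) supplies the matching leaf probabilities Lemma~\ref{lemma:CompPresAgree} requires, yielding $f=g$ and hence ET of $\textgoth{c}$ with respect to $\textgoth{d}$. The step I expect to be the main obstacle is precisely the bookkeeping that secures this: the sub-value condition on $\delta$ directly controls only the chain of major premises from the root up to its top leaf, so one must invoke the structural fact above to argue that the minor premises branching off this chain cannot introduce a value outside the corresponding $B^{i}$ either --- and, if necessary, strengthen the reading of the hypothesis so that each $B^{i}$ is itself closed under sub-values. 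Finally, it is worth recording why ET survives deconstruction whereas AT does not (cf.\ Lemma~\ref{lemma:LRulesPreserveLBound}): the probability in the conclusion of every E-rule is a \emph{function} of the premise probabilities, so equal inputs force equal outputs, and the division in the $\times$-eliminations and the subtraction in the $+$-eliminations --- which destroy inequalities --- are harmless once we track only equalities.
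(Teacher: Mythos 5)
Your proposal takes essentially the same route as the paper's (very terse) proof: invoke the deconstruction half of Lemma~\ref{lemma:CompPresAgree} to get equality of the output probabilities from equality of the leaf probabilities, and observe that every right E-rule produces in its conclusion a sub-value of the value in one of its premises, which is exactly what ties the scope of the statement to sub-values of $\bigcup^{n}_{i=1} B^{i}$. Your additional bookkeeping about the minor premises (and the suggestion to read each $B^{i}$ as closed under sub-values so that the ET hypothesis covers every leaf the derivation consumes) is more careful than the paper's two-sentence argument and correctly identifies the one point the paper leaves implicit, but it does not change the approach.
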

        
\begin{proof}
    The theorem follows from lemma~\ref{lemma:CompPresAgree}.
    Note that when we apply right E-rules, the value of the target variable in the conclusion is a sub-value of the value of the target variable in one of the premises.
\end{proof}

Also in this case, when the target variables are mutually independent and do not occur in $\Sigma$, we can obtain a stronger result, regarding the preservation of General ET:

\begin{corollary}[Logical Construction preserves General ET]
\label{PreservationLogicETGeneral}
Given two ML systems dubbed respectively the \textit{original} and the \textit{copy}, if the \textit{copy} is Equally Trustworthy with respect to the \textit{original} regarding a set of lists of values assignments $\Sigma$, a set of mutually independent atomic target variables $\mathscr{A}$ that do not occur in $\Sigma$ and a function $v$ assigning relevant values to each atomic target variable $a\in \mathscr{A}$, then the \textit{copy} is Equally Trustworthy with respect to the \textit{original} also regarding the same set of lists of values assignments $\Sigma$, the set of target variables $\mathscr{T}$, obtained closing $ \mathscr{A} $ under conjunction, and the function $v'$ from $t\in \mathscr{T}$ to outputs so defined:
\begin{itemize}
    \item for every atomic target variable $a$ occurring in $t$, $v'(a)$ is the closure of $v(a)$ under $+^{\Bot}$ and $\bot $;
    \item $v'(\langle t,u \rangle )$ is the closure under $+^{\Bot}$ and $\bot$ of $v'(t) \times v'(u)$.
\end{itemize}
\end{corollary}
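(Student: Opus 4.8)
The plan is to mirror the proof of Corollary~\ref{PreservationLogicATGeneral}, substituting Theorem~\ref{theorem:LRulesPreserveET} for Theorem~\ref{theorem:LRulesPreserveAT} wherever the latter was used. First I would invoke the structural observation already established in the proof of Corollary~\ref{PreservationLogicJTGeneral}: since the atomic target variables in $\mathscr{A}$ are mutually independent and do not occur in $\Sigma$, the only I-rule capable of introducing a conjunction in the target variable is I$\times^{\ci}$, whose antecedent is identical in the premises and in the conclusion. Consequently the list of value attributions is never altered along the construction, so the copy remains Equally Trustworthy regarding the same $\Sigma$, and the claim reduces to the local statement of Theorem~\ref{theorem:LRulesPreserveET}.

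Next I would set, for each $\textgoth{a}_{i}$ obtained from an atomic variable in $\mathscr{A}$ applied to a fixed $\sigma$, the relevant value set $B^{i} = v(a_{i})$, and note that closing $\mathscr{A}$ under conjunction corresponds exactly to repeated applications of I$\times^{\ci}$ together with the other right I-rules acting on the values. Theorem~\ref{theorem:LRulesPreserveET} then yields that Equal Trust is preserved for every value $\delta$ lying in the closure of $\bigcup_{i} B^{i}$ under $\rightarrow$, $+^{\Bot}$, $\times$ and $\bot$. The remaining work, exactly as in the AT corollary, is to verify that $v'$ tracks precisely these values, which I would establish by induction on the complexity of $t\in\mathscr{T}$: for an atomic $a$ occurring in $t$ the relevant values are those of $v(a)$ closed under $+^{\Bot}$ and $\bot$, while for a conjunction $\langle t,u\rangle$ they are the products in $v'(t)\times v'(u)$, again closed under $+^{\Bot}$ and $\bot$.

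The one genuinely new point---the place where this corollary diverges from Corollary~\ref{PreservationLogicATGeneral}---is the admissibility of closing under $\bot$. For Almost Trust this was forbidden, because Lemma~\ref{lemma:LRulesPreserveLBound} breaks under negation: if $f \gneq g$ then $1-g \gneq 1-f$, reversing the inequality. For Equal Trust, however, equality is evidently preserved by negation, since $f = g$ implies $1-g = 1-f$; this is precisely why Theorem~\ref{theorem:LRulesPreserveET} already includes $\bot$ in its closure. Hence the inductive definition of $v'$ may legitimately close each step under both $+^{\Bot}$ and $\bot$, matching the statement.

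The main obstacle I anticipate is not any hard estimate but the bookkeeping in the inductive step defining $v'$: one must check that every value produced by the right I-rules from members of $\bigcup_{i} B^{i}$ is captured by the prescribed closure, and conversely that no value outside this closure is erroneously claimed. Since Theorem~\ref{theorem:LRulesPreserveET} supplies the substantive preservation fact, this reduces to a routine verification that the closure operations $+^{\Bot}$, $\times$ and $\bot$ interact correctly with the recursion on $\langle t,u\rangle$.
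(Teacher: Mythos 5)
Your proposal is correct and follows essentially the same route as the paper, which likewise proves this corollary by repeating the argument of Corollary~\ref{PreservationLogicATGeneral} with Theorem~\ref{theorem:LRulesPreserveET} in place of Theorem~\ref{theorem:LRulesPreserveAT}, noting only that $v'$ may now be closed under $\bot$ as well. Your explicit justification of that last point (equality, unlike inequality, survives the map $f \mapsto 1-f$) is exactly the reason the paper's ET theorem already includes $\bot$ in its closure.
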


\begin{proof}
    The proof is like the one for corollary~\ref{PreservationLogicATGeneral}.
    Note that in this case function $v'$ is constructed closing under negation as well.
\end{proof}

Like inequality, also the zeros in the distribution of probabilities are preserved when the logical construction uses only implication, conjunction, and disjunction.
This result will constitute the missing ingredient of WT:

\begin{lemma}[Construction with Implication, Conjunction and Disjunction preserves Zero]
\label{lemma:LRulesPreserveZero}
If $\textgoth{a}_{1}:\beta^{1} _{f_{1}} , \ldots , \textgoth{a}_{n}: \beta^{n}_{f_{n}} \mapsto ^{cstr} \textgoth{c}: \delta _{f_{n+1}}$, and $\textgoth{b}_{1}:\beta^{1} _{g_{1}} , \ldots , \textgoth{b}_{n}: \beta^{n}_{g_{n}} \mapsto ^{cstr} \textgoth{d}: \delta _{g_{n+1}}$, and moreover the target variable of $\textgoth{d}$ is identical to the target variable of $\textgoth{c}$, $\delta$ is $ \bot $-free, and for every $ i $ s.t. $1 \leq i \leq n$, $f_{i} = 0$ iff $g_{i} = 0$, then $f_{n+i} = 0$ iff $g_{n+i} = 0$.
\end{lemma}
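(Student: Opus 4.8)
The plan is to mirror the proof of Lemma~\ref{lemma:LRulesPreserveLBound}, replacing the monotonicity observation about $\geq$ with the corresponding observation about vanishing probabilities. The whole argument rests on the fact that, along a construction using only the right I-rules for $\rightarrow$, $\times$ and $+$, each step computes the probability of its conclusion as either the probability of its single premise (for I$\rightarrow$), the product of the probabilities of its premises (for I$\times$1 and I$\times$2), or their sum (for I$\rhd+$); any side-premises recording mutual exclusivity carry no probability and so play no role here.

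First I would invoke the structural content of Lemma~\ref{lemma:CompPresAgree}: since $\textgoth{c}$ and $\textgoth{d}$ share the same target variable and are both constructions reaching the same value $\delta$, the very same right I-rules are applied in the same order in both derivations. Hence the two derivation trees are isomorphic, differing only in the probabilities decorating their leaves, namely $f_{1}, \ldots, f_{n}$ for one and $g_{1}, \ldots, g_{n}$ for the other.

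Then I would proceed by induction on this common derivation tree, carrying the invariant that at every node the two probabilities decorating it vanish together. The base case is exactly the hypothesis $f_{i} = 0 \iff g_{i} = 0$ at the leaves. For the inductive step I would use the elementary arithmetic facts that, for non-negative reals, $x \cdot y = 0$ iff $x = 0$ or $y = 0$, and $x + y = 0$ iff $x = 0$ and $y = 0$, together with the trivial preservation under I$\rightarrow$. Applying the invariant to the premises of each I-rule and these facts yields the invariant at the conclusion; at the root this gives $f_{n+1} = 0 \iff g_{n+1} = 0$.

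The point that must not be overlooked is the role of the hypothesis that $\delta$ is $\bot$-free. This is what rules out any use of I$\rhd\bot$, whose probability function $g \mapsto 1 - g$ does \emph{not} respect vanishing: $1 - g = 0$ iff $g = 1$, not iff $g = 0$. So without the $\bot$-free restriction the invariant would break at a negation node, exactly as inequality fails in the counterexample given after Lemma~\ref{lemma:LRulesPreserveLBound}. Once this is noted, the induction goes through with no further subtlety.
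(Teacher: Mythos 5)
Your proposal is correct and follows essentially the same route as the paper: the paper's proof consists precisely of the two arithmetic observations you use (a sum of non-negative reals vanishes iff both summands do; a product vanishes iff some factor does), leaving the induction over the common derivation tree and the appeal to the structure established in Lemma~\ref{lemma:CompPresAgree} implicit. Your more explicit treatment of the isomorphism of the two derivations and of why the $\bot$-free hypothesis is needed only spells out what the paper takes for granted.
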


\begin{proof}
Let us assume that $ f_{i} = 0 $ iff $ g_{i} = 0$ and $ f_{j} = 0 $ iff $ g_{j} = 0$.
$ f_{i} +  f_{j} = 0 $ iff $ f_{i} = 0 $ and $ f_{j} = 0 $.
But then $ g_{i} = 0 $ and $ g_{j} = 0 $, and so $ g_{i} +  g_{j} = 0 $.
Moreover, $ f_{i} \times  f_{j} = 0 $ iff $ f_{i} = 0 $ or $ f_{j} = 0 $.
But then $ g_{i} = 0 $ or $ g_{j} = 0 $, and so $ g_{i} \times  g_{j} = 0 $.
\end{proof}

Lemma~\ref{lemma:LRulesPreserveZero} cannot be generalized for logical deconstructions, at least for disjunction.
Indeed, let us assume that $ f_{i} = 0 $ iff $ g_{i} = 0$ and $ f_{j} = 0 $ iff $ g_{j} = 0$, where $f_{i}$ and $ g_{i}$ are the probabilities of the disjunction occurring in the major premise of the respective applications of the elimination rule for $+$ and $f_{j}$ and $ g_{j}$ are the probabilities of the disjunct occurring in their minor premise.
Hence, the probabilities of the conclusion of the applications of the E-rule are $ f_{i} - f_{j} $ and $ g_{i} - g_{j} $.
However, it is not possible to prove $ f_{i} - f_{j} = 0 $ iff $ g_{i} - g_{j} = 0$. 
An easy counterexample is the following: $ f_{i} = 0.6 $,  $ g_{i} = 0.6 $, $ f_{j} = 0.6 $ and $ g_{j} = 0.2 $.
Nevertheless, a partial result is still provable about E-rules:

\begin{lemma}[Deconstruction with Implication and Conjunction preserves Zero]
\label{lemma:LERulesPreserveZero}
If $\textgoth{a}_{1}:\beta^{1} _{f_{1}} , \ldots , \textgoth{a}_{n}: \beta^{n}_{f_{n}} \mapsto ^{decstr} \textgoth{c}: \delta _{f_{n+1}}$, and $\textgoth{b}_{1}:\beta^{1} _{g_{1}} , \ldots , \textgoth{b}_{n}: \beta^{n}_{g_{n}} \mapsto ^{decstr} \textgoth{d}: \delta _{g_{n+1}}$, and moreover the target variable of $\textgoth{d}$ is identical to the target variable of $\textgoth{c}$, $\delta$ is $ \bot $-free and $ + $-free, and for every $ i $ s.t. $1 \leq i \leq n$, $f_{i} = 0$ iff $g_{i} = 0$, then $f_{n+i} = 0$ iff $g_{n+i} = 0$.
\end{lemma}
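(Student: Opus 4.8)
The plan is to mirror the proof of Lemma~\ref{lemma:LRulesPreserveZero}, reading right E-rules where that proof read right I-rules, and to lean on Lemma~\ref{lemma:CompPresAgree} for the bookkeeping. First I would observe that, exactly as in Lemma~\ref{lemma:CompPresAgree}, the hypothesis that $\textgoth{c}$ and $\textgoth{d}$ carry the identical target variable forces the two deconstructions to apply the same E-rules in the same order, since each right E-rule leaves its characteristic trace in the variable and the value of its conclusion. It then suffices to check zero-preservation one rule at a time. Because $\delta$ is both $\bot$-free and $+$-free, the only right E-rules that the two deconstructions need involve are those for implication and conjunction; the subtractive rules E$\rhd+$a, E$\rhd+$b and the complementing rule E$\rhd\bot$ are excluded.

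Next I would dispose of the two surviving cases. The double rule I/E$\rightarrow$ carries the same probability on its two judgments, so its E-direction leaves the probability untouched and trivially preserves zeros. For conjunction, every E-rule (E$\times$1a, E$\times$1b, E$\times$2a, E$\times$2b) computes the conclusion probability as a quotient $f/g$ under the standing proviso $g \neq 0$. Writing $f$ for the probability of the decomposed (major) premise and $g$ for that of the side (minor) premise in the first deconstruction, and $f',g'$ for the corresponding probabilities in the second, the inductive hypothesis gives $f=0$ iff $f'=0$ and $g=0$ iff $g'=0$. Since the proviso forbids $g=0$, and hence also $g'=0$, both quotients are well defined, and $f/g = 0$ iff $f = 0$, because a nonzero denominator can neither create nor destroy a zero. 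Chaining the equivalences, $f/g = 0$ iff $f = 0$ iff $f' = 0$ iff $f'/g' = 0$, which is the required single-step preservation; composing these steps along the common shape of the two deconstructions yields $f_{n+1} = 0$ iff $g_{n+1} = 0$.

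The hard part is not the arithmetic but justifying the reduction to the implication and conjunction cases, and this is exactly where the two syntactic restrictions earn their keep. The $+$-free restriction is indispensable: the elimination rules for disjunction subtract, and $f_i - f_j = 0$ does not entail $g_i - g_j = 0$, as witnessed by the values $f_i = g_i = 0.6$, $f_j = 0.6$, $g_j = 0.2$ recorded just before the statement; the $\bot$-free restriction is needed for the analogous reason that the map $p \mapsto 1-p$ sends a zero to a one rather than to a zero. Thus the delicate point of the write-up is to argue that, under these restrictions, the two deconstructions involve only the E-rules for $\rightarrow$ and $\times$, and that the nonzero-denominator proviso of the conjunction rules, which holds in the first deconstruction, transfers to the second through the hypothesis $f_i = 0$ iff $g_i = 0$. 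Once this is secured, the remaining reasoning is the elementary observation about quotients used above.
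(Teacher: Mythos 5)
Your proposal is correct and takes essentially the same route as the paper: the paper's proof is precisely the observation that the E-rules for $\times$ carry the proviso that the denominator be nonzero, so $f_{i}/f_{j}=0$ iff $f_{i}=0$, and the zero-biconditional hypothesis then chains through each quotient step. Your additional care in justifying the reduction to the $\rightarrow$ and $\times$ cases and in noting that the nonzero-denominator proviso transfers between the two deconstructions only makes explicit what the paper leaves implicit.
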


\begin{proof}
    The proof follows from the observation that if we assume that $ f_{i} = 0 $ iff $ g_{i} = 0$, we can easily prove $ \frac{f_{i}}{f_{j}} = 0 $ iff $ \frac{g_{i}}{g_{j}} = 0$, since the application of E-rules for $\times$ require that $ f_{j} \neq 0 $ and $ g_{j} \neq 0$.
\end{proof}

As opposed to what happens with theorem~\ref{theorem:LRulesPreserveAT}, preservation of zero applies to negation as well, both for construction and deconstruction:

\begin{lemma}[Construction and Deconstruction with Negation Preserve Zero]
\label{lemma:NegPreserveZero}
Let us consider two ML systems $\textgoth{a}$ and $\textgoth{b}$.
If for every $\bot$-free value $ \beta $, $\textgoth{a} : \beta _{0}$ iff $\textgoth{b} : \beta _{0}$, then for every value $ \delta $, $\textgoth{a} : \delta _{0}$ iff $\textgoth{b} : \delta _{0}$.
\end{lemma}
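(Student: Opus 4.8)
The plan is to reduce the statement to the $\bot$-free case by showing that, inside a single system, every value is probability-equivalent to a $\bot$-free one, so that the hypothesis can be applied. First I would unpack what the hypothesis really says: specialising it to the singleton $\bot$-free values (the atomic values $\alpha^{i}$, and, for a product target variable, the elementary pairs such as $\alpha \times \alpha'$) yields that $\textgoth{a}$ and $\textgoth{b}$ assign probability $0$ to exactly the same elementary outcomes, i.e.\ they share the same \emph{support}. The guiding intuition is then that whether a value receives probability $0$ is a purely combinatorial property of its ``extent'' (the set of elementary outcomes it covers) relative to this common support, and hence cannot tell $\textgoth{a}$ and $\textgoth{b}$ apart.

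To turn this into an argument using only the rules, I would prove the following auxiliary claim by induction on the structure of $\delta$: for a single system, $\textgoth{a}: \delta _{f}$ holds iff $\textgoth{a}: \delta^{*} _{f}$ holds, where $\delta^{*}$ is a $\bot$-free value obtained from $\delta$ by eliminating every negation. The elimination of negation is exactly where the assumption recalled in Section~\ref{sec:TrustCopies}, that the atomic values $\alpha^{1}, \ldots , \alpha^{n}$ are exhaustive and mutually exclusive, is used: by rule I/E$\rhd\bot$ we have $Pr(\gamma^{\bot}) = 1 - Pr(\gamma)$, and since the probabilities over a full row of atomic values sum to $1$, this $1 - Pr(\gamma)$ coincides with the probability of the $\bot$-free disjunction of the elementary outcomes \emph{not} covered by $\gamma$. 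Disjunction is handled by the additivity rule I$\rhd+$ and conjunction by the product rules I$\times$1/I$\times$2, so that each connective maps the extent of $\delta$ to the corresponding Boolean operation on sets of elementary outcomes while preserving the assigned probability.

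With the auxiliary claim in place the lemma is immediate. Given any value $\delta$, we have both $\textgoth{a}: \delta _{0}$ iff $\textgoth{a}: \delta^{*} _{0}$ and $\textgoth{b}: \delta _{0}$ iff $\textgoth{b}: \delta^{*} _{0}$; since $\delta^{*}$ is $\bot$-free, the hypothesis gives $\textgoth{a}: \delta^{*} _{0}$ iff $\textgoth{b}: \delta^{*} _{0}$, and chaining the three equivalences yields $\textgoth{a}: \delta _{0}$ iff $\textgoth{b}: \delta _{0}$, as required.

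I expect the main obstacle to be the negation case of the auxiliary claim, and more precisely the fact that $Pr(\gamma^{\bot}) = 1 - Pr(\gamma)$ converts a question about a zero into a question about a one; this is what forces the appeal to exhaustiveness in order to re-express the complement as a genuine $\bot$-free value with the same probability. A secondary complication is the conditional $\gamma_{1} \rightarrow \gamma_{2}$, whose value is a conditional probability rather than the probability of an event: there I would reduce ``$Pr(\gamma_{1} \rightarrow \gamma_{2}) = 0$'' to ``$Pr(\gamma_{1} \times \gamma_{2}) = 0$'', which is legitimate because any derivation of such a judgment carries the side condition that the conditioning value has nonzero probability, and then fall back on the conjunction case already treated.
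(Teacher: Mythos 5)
Your proposal is correct in substance, but it reaches the conclusion by a genuinely different route from the paper. The paper proves the lemma by induction on the complexity of $\delta$, disposing of $+$ and $\times$ as in the earlier zero-preservation lemma, and handling $\delta = \beta^{\bot}$ by contradiction: if $\textgoth{a}:\beta^{\bot}_{0}$ but $\textgoth{b}:\beta^{\bot}_{r}$ with $r\gneq 0$, then $\textgoth{a}:\beta_{1}$ while $\textgoth{b}:\beta_{1-r}$, so some $\bot$-free $\gamma$ with $\gamma ~\Bot~ \beta$ must receive nonzero probability from $\textgoth{b}$ yet probability $0$ from $\textgoth{a}$, contradicting the hypothesis. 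You instead globalize the same underlying fact --- that exhaustivity and mutual exclusivity let complements be re-expressed $\bot$-freely --- into a normalization: every $\delta$ is probability-equivalent (within one system) to a $\bot$-free $\delta^{*}$, after which the hypothesis applies directly to $\delta^{*}$. This is essentially the $*$-transformation of the appendix promoted from a tool for deciding mutual exclusivity to a probability-preserving rewriting, and indeed the paper's own proof silently relies on that transformation in a footnote merely to guarantee the \emph{existence} of the witness $\gamma$. Your version buys a stronger intermediate statement (full probability equivalence of $\delta$ and $\delta^{*}$, not just preservation of zeros) at the cost of having to verify that the rewriting is licensed by the rules for every connective, including the negated-conjunction expansion and the conditional; the paper's witness argument is more local and avoids committing to a normal form. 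Two caveats apply to both proofs but are more exposed in yours: the induction/normalization must say something about $\rightarrow$ and about negations of non-atomic values (your reduction of $(\gamma_{1}\rightarrow\gamma_{2})$ with zero probability to $(\gamma_{1}\times\gamma_{2})$ with zero probability changes the target variable from $[t]u$ to $\langle t,u\rangle$, so you should check that the lemma's hypothesis is read as quantifying over $\bot$-free values of \emph{all} relevant variables, not only the fixed target variable); and the $*$-transformation as defined in the appendix is stated only for values of atomic variables, so the extension to products and conditionals that you sketch needs to be made explicit. Neither caveat invalidates your argument.
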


\begin{proof}
Let us assume that for every possible $\bot$-free value $ \beta $ of $\textgoth{a} $ and $ \textgoth{b}$, $\textgoth{a} : \beta _{0}$ iff $\textgoth{b} : \beta _{0}$.
By induction on the complexity of $ \delta $, let us prove that $\textgoth{a} : \delta _{0}$ iff $\textgoth{b} : \delta _{0}$.
The cases in which $ \delta = \beta_{1} + \beta_{2} $ or $ \delta = \beta_{1} \times \beta_{2} $ are proved as for lemma~\ref{lemma:LRulesPreserveZero}.
For $ \delta = \beta ^{\bot} $, let us assume that $\textgoth{a} : \delta _{0}$ and $\textgoth{b} : \delta _{r}$, with $ r\gneq 0 $.
By the rules for negation, $\textgoth{a} : \beta _{1}$ and $\textgoth{b} : \beta _{1-r}$, with $ r\gneq 0 $.
Since $ 1-r \neq 1 $, there has to be some $\bot$-free value $ \gamma $ s.t. $ \gamma ~ \Bot ~ \beta$ and $\textgoth{b} : \gamma _{s}$ with $ r \geq s \gneq 0 $.
However, since $ \gamma $ and $ \beta$ are mutually exclusive and $\textgoth{a} : \beta _{1}$, it follows that $\textgoth{a} : \gamma _{0}$, against induction hypothesis.\footnote{
The availability of the $\bot$-free value $ \gamma $ s.t. $ \gamma ~ \Bot ~ \beta$ follows from the observation that $\bot$ can be defined using $+$ in TNDPQ.
See the $*$-transformation in~\ref{appendix}.
}
\end{proof}

With these lemmas in place, we can prove preservation of Weak Trust when logical construction uses only implication, conjunction and disjunction:

\begin{theorem}[Construction with Implication, Conjunction and Disjunction preserve WT]
\label{theorem:LRulesPreserveWT}
Let us consider two sets $\textgoth{A}$ and $\textgoth{B}$ of applied ML systems and assume that for every $\textgoth{a}_{i} \in \textgoth{A}$ there is $\textgoth{b}_{i} \in \textgoth{B}$ such that $\textgoth{a}_{i}$ is WT with respect to $\textgoth{b}_{i}$, regarding the set of values $ B^{i} $.
If $\textgoth{a}_{1}, \ldots \textgoth{a}_{n} \mapsto ^{cstr} \textgoth{c}$, and $\textgoth{b}_{1}, \ldots \textgoth{b}_{n} \mapsto ^{cstr} \textgoth{d}$, and moreover the target variable of $\textgoth{d}$ is identical to the target variable of $\textgoth{c}$, then for every value $\delta$ in the closure of $ \bigcup^{n}_{i = 1} B^{i} $ under $ \rightarrow $, $ + ^{\Bot} $ and $ \times $, if $\textgoth{c}: \delta _{f}$ and $\textgoth{d}: \delta _{g}$, then $ f \geq g $.
Moreover, for every possible value $\beta$, $\textgoth{c}: \beta _{0}$ iff $\textgoth{d}: \beta _{0}$.
Hence, $ \textgoth{c} $ is WT with respect to $\textgoth{d}$, regarding all values $\delta$ in the closure of $ \bigcup^{n}_{i = 1} B^{i} $ under $ \rightarrow $, $ + ^{\Bot} $ and $ \times $.
\end{theorem}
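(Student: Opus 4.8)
The plan is to decompose Weak Trust into its two constituent requirements --- the inequality on the relevant values and the agreement on \emph{zeros} over \emph{all} values --- and to discharge each of them by invoking a result already established. Recall from Definition~\ref{def:WTNoAtm} that $\textgoth{c}$ being WT with respect to $\textgoth{d}$ regarding a set $B$ amounts to (i) $\textgoth{c}:\delta_0$ iff $\textgoth{d}:\delta_0$ for \emph{every} value $\delta$ of the target variable, together with (ii) $f \geq g$ whenever $\textgoth{c}:\beta_f$, $\textgoth{d}:\beta_g$ and $\beta \in B$. The hypothesis that each $\textgoth{a}_i$ is WT with respect to $\textgoth{b}_i$ regarding $B^i$ supplies exactly these two ingredients at the level of the components: zero-agreement for all values of the shared target variable of $\textgoth{a}_i$ and $\textgoth{b}_i$, and the inequality $f_i \geq g_i$ for $\beta \in B^i$.

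For the inequality (ii), I would first observe that the second clause of WT for the components is literally the defining condition of AT (Definition~\ref{def:ATNoAtm}); hence each $\textgoth{a}_i$ is also AT with respect to $\textgoth{b}_i$ regarding $B^i$. Theorem~\ref{theorem:LRulesPreserveAT} then applies verbatim and yields that $\textgoth{c}$ is AT with respect to $\textgoth{d}$ regarding the closure of $\bigcup_{i=1}^{n} B^i$ under $\rightarrow$, $+^{\Bot}$ and $\times$, which is precisely clause (ii) of the target statement (and which internally rests on Lemma~\ref{lemma:LRulesPreserveLBound}).

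For the preservation of zeros (i), I would proceed in two stages. First, restricting attention to $\bot$-free values $\delta$ of the constructed target variable, each such $\delta$ arises as a value-attribution construction $\textgoth{a}_1:\beta^1_{f_1},\ldots,\textgoth{a}_n:\beta^n_{f_n} \mapsto^{cstr} \textgoth{c}:\delta_f$ from suitable component values, whose structure is fixed by $\textgoth{a}_1,\ldots,\textgoth{a}_n \mapsto^{cstr} \textgoth{c}$; since the WT hypotheses give zero-agreement for \emph{all} values of each component (so that $f_i = 0$ iff $g_i = 0$ along any such construction), Lemma~\ref{lemma:LRulesPreserveZero} yields $\textgoth{c}:\delta_0$ iff $\textgoth{d}:\delta_0$ for every $\bot$-free $\delta$. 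Second, I would feed this conclusion into Lemma~\ref{lemma:NegPreserveZero}, applied to $\textgoth{c}$ and $\textgoth{d}$: its hypothesis is exactly zero-agreement on all $\bot$-free values, and its conclusion extends the agreement to every value $\delta$, including those containing $\bot$. Combining the two stages gives clause (i) for all values, and conjoining it with (ii) establishes that $\textgoth{c}$ is WT with respect to $\textgoth{d}$ regarding the stated closure.

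The main obstacle I anticipate is precisely this separation of the zero condition from the inequality condition. Unlike inequality, which is confined to the relevant set $B$ and handled wholesale by the AT theorem, the zero requirement must be verified over \emph{all} values of the constructed variable; and because Lemma~\ref{lemma:LRulesPreserveZero} is limited to $\bot$-free values while the construction itself never introduces $\bot$, the negation case cannot be absorbed into the construction argument and must be routed separately through Lemma~\ref{lemma:NegPreserveZero}. The care needed is to verify that the ``zero-agreement on $\bot$-free values'' fed to Lemma~\ref{lemma:NegPreserveZero} is genuinely the output of the first stage, and that the inequality and the zero conclusions concern the same pair $\textgoth{c},\textgoth{d}$, so that they may legitimately be merged into a single WT claim.
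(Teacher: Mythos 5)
Your proposal is correct and follows essentially the same route as the paper's own proof: decompose WT into AT (discharged by Theorem~\ref{theorem:LRulesPreserveAT}) plus zero-agreement on all values (discharged by Lemma~\ref{lemma:LRulesPreserveZero} for $\bot$-free values and Lemma~\ref{lemma:NegPreserveZero} for the rest), then recombine. Your write-up is in fact slightly more explicit than the paper's about how the two zero-preservation lemmas are chained, but the argument is the same.
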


\begin{proof}
As seen in subsection~\ref{subsec:entailment}, WT corresponds to AT plus the requirement that the systems give probability zero to the same values.
Hence, if $\textgoth{a}_{i}$ is WT with respect to $\textgoth{b}_{i}$ regarding the set of values $ B^{i} $, $\textgoth{a}_{i}$ is AT with respect to $\textgoth{b}_{i}$ regarding the same values.
Moreover, since $\delta$ is in the closure of $ \bigcup^{n}_{i = 1} B^{i} $ under $ \rightarrow $, $ + ^{\Bot} $ and $ \times $, we can apply theorem~\ref{theorem:LRulesPreserveAT} and obtain that $ \textgoth{c} $ is AT with respect to $\textgoth{d}$ regarding values $\delta$.
As for preservation of the zeros, if $\textgoth{a}_{i}$ is WT with respect to $\textgoth{b}_{i}$, then for every value $ \beta $, $\textgoth{a}_{i}: \beta_{0}$ iff $\textgoth{b}_{i}: \beta_{0}$.
Hence, by lemma~\ref{lemma:LRulesPreserveZero} and lemma~\ref{lemma:NegPreserveZero}, for every possible value $\beta$, $\textgoth{c}: \beta _{0}$ iff $\textgoth{d}: \beta _{0}$.
In conclusion, since $ \textgoth{c} $ is AT with respect to $\textgoth{d}$ regarding values $\delta$ in the closure of $ \bigcup^{n}_{i = 1} B^{i} $ under $ \rightarrow $, $ + ^{\Bot} $ and $ \times $, and for every possible value $\beta$, $\textgoth{c}: \beta _{0}$ iff $\textgoth{d}: \beta _{0}$, $ \textgoth{c} $ is WT with respect to $\textgoth{d}$, regarding all values $\delta$ in the closure of $ \bigcup^{n}_{i = 1} B^{i} $ under $ \rightarrow $, $ + ^{\Bot} $ and $ \times $.
\end{proof}

Note that, since AT is not preserved under logical deconstruction, and AT is needed for WT, WT is not preserved under logical deconstruction either.

Let us now address General WT, which requires the usual restrictions on $\Sigma$ and $\mathscr{A}$:

\begin{corollary}[Logical Construction preserves General WT]
\label{PreservationLogicWTGeneral}
Given two ML systems dubbed respectively the \textit{original} and the \textit{copy}, if the \textit{copy} is Weakly Trustworthy with respect to the \textit{original} regarding a set of lists of values assignments $\Sigma$, a set of mutually independent atomic target variables $\mathscr{A}$ that do not occur in $\Sigma$ and a function $v$ assigning relevant values to each atomic target variable $a\in \mathscr{A}$, then the \textit{copy} is Weakly Trustworthy with respect to the \textit{original} also regarding the same set of lists of values assignments $\Sigma$, the set of target variables $\mathscr{T}$, obtained closing $ \mathscr{A} $ under conjunction, and the function $v'$ from $t\in \mathscr{T}$ to outputs so defined:
\begin{itemize}
    \item for every atomic target variable $a$ occurring in $t$, $v'(a)$ is the closure of $v(a)$ under $+^{\Bot}$;
    \item $v'(\langle t,u \rangle )$ is the closure under $+^{\Bot}$ of $v'(t) \times v'(u)$.
\end{itemize}
\end{corollary}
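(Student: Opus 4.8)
The plan is to reuse the template of the preceding corollaries, in particular Corollary~\ref{PreservationLogicATGeneral}, reducing preservation of General WT to the local result of Theorem~\ref{theorem:LRulesPreserveWT}. The overall shape of the argument is twofold: first show that, under the stated assumptions, the logical construction leaves the antecedent $\Sigma$ fixed, so that Theorem~\ref{theorem:LRulesPreserveWT} applies row by row; then read off the correct family of relevant values, which is exactly what the inductively-defined $v'$ records.

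First I would establish the antecedent-preservation step. Since the atomic target variables in $\mathscr{A}$ are mutually independent and do not occur in $\Sigma$, the only conjunction-introduction rule that can fire is I$\times^{\ci}$, whose premises and conclusion all carry the same antecedent; the rules for implication never apply, because no conditional occurs in $\mathscr{T}$; and, by inspection of Tables~\ref{tab:triangle} and~\ref{tab:LRtriangle}, every remaining I-rule leaves the antecedent unchanged precisely when the target value of each premise occurs in none of the other premises' antecedents (the left-introduction rule for $+$, for instance, cannot be used, as it requires the disjoined value to appear on the left of $\rhd$). Hence for each $\sigma \in \Sigma$ the construction fixes $\sigma$, and for each $t \in \mathscr{T}$ we are exactly in the situation of Theorem~\ref{theorem:LRulesPreserveWT}. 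Invoking that theorem then yields, for every $\delta$ in the closure of $\bigcup_{i} v(a_{i})$ under $\rightarrow$, $+^{\Bot}$ and $\times$, both the inequality and the zero-agreement that together constitute WT of the constructed system with respect to its copy.

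The main obstacle --- flagged already in the proofs of the AT and ET corollaries --- is pinning down the definition of $v'$ and justifying the precise shape of its closure. I would define $v'$ by induction on $t \in \mathscr{T}$, taking $v'(a)$ to be the closure of $v(a)$ under $+^{\Bot}$ and $v'(\langle t,u\rangle)$ to be the closure under $+^{\Bot}$ of $v'(t) \times v'(u)$. The delicate point is that we close under $+^{\Bot}$ but deliberately \emph{not} under $\bot$: WT inherits from AT the inequality requirement, and Lemma~\ref{lemma:LRulesPreserveLBound} guarantees preservation of inequality only for $\bot$-free values, since negation reverses inequalities. By contrast, the zero-preservation half of WT causes no trouble with negation, because Lemma~\ref{lemma:NegPreserveZero} shows zeros survive even under $\bot$. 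It is therefore the AT component, and not the zero component, that forces the $\bot$-free restriction built into $v'$; matching the closure appearing in Theorem~\ref{theorem:LRulesPreserveWT} against the two inductive clauses for $v'$ completes the argument.
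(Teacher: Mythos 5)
Your proof is correct, and it reaches the conclusion by a slightly different (but equivalent) factorization of the same underlying material. The paper's own proof is a one-line reduction: it invokes Corollary~\ref{PreservationLogicATGeneral} (General AT preservation, which already packages the antecedent-fixing argument and the definition of $v'$) and adds Lemma~\ref{lemma:LRulesPreserveZero} for the zero condition. You instead redo the antecedent-preservation argument explicitly (mutual independence forcing I$\times^{\ci}$, no conditionals in $\mathscr{T}$, inspection of the remaining rules) and then apply the \emph{local} Theorem~\ref{theorem:LRulesPreserveWT} row by row and variable by variable. Since Theorem~\ref{theorem:LRulesPreserveWT} is itself proved from Theorem~\ref{theorem:LRulesPreserveAT} together with Lemmas~\ref{lemma:LRulesPreserveZero} and~\ref{lemma:NegPreserveZero}, the two routes rest on the same lemmas; yours is just routed through the local WT theorem rather than through the General AT corollary. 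One point in favour of your version: the zero clause of Definition~\ref{def:WTNoAtm} quantifies over \emph{every} value of the target variable, including negated ones, so Lemma~\ref{lemma:NegPreserveZero} is genuinely needed; your route inherits it automatically from Theorem~\ref{theorem:LRulesPreserveWT}, whereas the paper's stated citation mentions only Lemma~\ref{lemma:LRulesPreserveZero}. Your closing discussion of why $v'$ is closed under $+^{\Bot}$ but not under $\bot$ --- the inequality half fails under negation by the counterexample to Lemma~\ref{lemma:LRulesPreserveLBound}, while the zero half survives it --- correctly identifies the constraint that shapes the statement.
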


\begin{proof}
    The result follows from corollary~\ref{PreservationLogicATGeneral} and lemma~\ref{lemma:LRulesPreserveZero}
\end{proof}

\begin{example}[Logical Preservation of Trust under Logical Construction]
\label{exa:AtGenJT}
Let us consider a pair of diagnostic systems devised for the detection of some kinds of diseases such as \textit{Chickenpox} and \textit{Hepatitis}.
The systems consider a different variable for each disease and give the probabilities for its severity, with possible outputs $Absent$, $Minor$, $Moderate$, $Major$, and $Extreme$.
Let us moreover assume that the systems are JT one another when a Test Set $\Sigma$ of patients is considered and the target variable is either \textit{Chickenpox} or \textit{Hepatitis}.
Hence, for any $\sigma \in \Sigma$, the ML systems give the same probabilities for these diseases.
As an example:
\medskip
\[
\sigma \rhd Chickenpox : Absent _{0.2}, Minor_{0.4}, Moderate_{0.3}, Major_{0.1}, Extreme_{0}
\]

\medskip
\noindent
and 
\medskip
\[
\sigma \rhd Hepatitis : Absent _{0.5}, Minor_{0.3}, Moderate_{0.2}, Major_{0}, Extreme_{0}
\]

\medskip
By corollary~\ref{PreservationLogicJTGeneral} the systems are JT also regarding logically complex judgments  constructed from these atomic judgments.
As an example, 
\medskip
\[
\sigma \rhd \langle Chickenpox , Hepatitis \rangle : Absent \times (Minor + Moderate) _{0.1}
\]

\medskip
Let us now assume that one system (copy) is AT with respect to the other (original) when a Test Set $\Sigma$ of patients is considered, the target variable is either \textit{Chickenpox} or \textit{Hepatitis}, and the relevant values are $Major$, and $Extreme$.
That is, we know that the copy does not underestimate the most severe cases of \textit{Chickenpox} and \textit{Hepatitis}.
By corollary~\ref{PreservationLogicATGeneral} the copy is AT with respect to the original, also regarding logically complex variables $t$ constructed closing $\{ Chickenpox ,$ $ Hepatitis \}$ under conjunction and implication, when the relevant outputs are assigned to $t$ by $v$ as follows:
\begin{itemize}
    \item $v(Chickenpox) $ and $ v(Hepatitis)$ are the closure of $\{ Major, Extreme \}$ under $+^{\Bot}$;
    \item $v(\langle t,u \rangle )$ is the closure under $+^{\Bot}$ of $v(t) \times v(u)$.
\end{itemize}

As an example, the copy is AT with respect to the original, regarding the variable $\langle Chickenpox ,$ $ Hepatitis \rangle$, and the relevant output:
\medskip
\[
((Major + Extreme) \times Extreme) + (Extreme \times Major)
\]

\medskip
\noindent
That is, the copy is AT with respect to the original, regarding the pair of conditions \textit{Chickenpox} and \textit{Hepatitis}, considering the relevant output ``either \textit{Chickenpox} is $Major$ or $Extreme$ and \textit{Hepatitis} is $Extreme$, or \textit{Chickenpox} is $Extreme$ and \textit{Hepatitis} is $Major$''.
Nonetheless, the copy is not necessarily AT with respect to the original, regarding the variable $\langle Chickenpox ,$ $ Hepatitis \rangle$, if the relevant output is:
\medskip
\[
(Major + Extreme) \times Extreme ^{\bot}
\]

\medskip
\noindent
That is, the copy is not necessarily AT with respect to the original, regarding the pair of conditions \textit{Chickenpox} and \textit{Hepatitis}, considering as relevant output ``\textit{Chickenpox} is $Major$ or $Extreme$, and \textit{Hepatitis} is not $Extreme$''.

Note that in both cases we are assuming that the probabilities of having \textit{Chickenpox} or \textit{Hepatitis} are mutually independent.
This is not a necessary assumption.
However, when atomic variables are not mutually independent, changes in $ \sigma $ have to be considered as well.
\end{example}

\section{Conclusions}
\label{sec:Conclusion}

In this paper, we have developed the system TNDPQ to analyze how logically complex queries are evaluated by ML systems.
Then, we have defined different notions of copy of an ML system, based on which element of the original system is changed in the copy, and proposed different notions of trustworthiness for such copies.
After a discussion of the relations between these different notions of trustworthiness and their possible combinations, we have investigated whether and to what extent trust about atomic queries is preserved when logically complex queries are considered. Future research shall be devoted to the implementation of a trustworthiness preservation module within the MIRAI Toolbox (\url{https://mirai.systems}) a platform for evaluation and risk assessment of ML systems.

In our investigation, we have not discussed specific properties of the original system that we want to preserve in the copy.
In contrast, we kept the investigation as general as possible, focusing only on the preservation of the output.
Clearly, this gives some benefits.
However, when specific epistemological or ethical properties of the original system are considered, special precautions could be necessary to preserve them in the copy.
For this reason, there are many possible questions that we have left unanswered.
As an example, ML systems are usually required to be fair in their judgments, ignoring protected attributes or even balancing biased Data Sets.
As observed in the literature on intersectionality, fair judgments about some protected properties do not necessarily remain fair when intersections between these properties are considered.
This kind of phenomenon suggests that further investigation is required to generalize the results of this paper when the desiderata is the preservation of less simple properties.

\section*{Acknowledgments}

\paragraph*{Funding:} This work was supported by the Project ``SMARTEST: Simulation of Probabilistic Systems for the Age of the Digital Twin'' (MUR - PRIN 2022 - PRIN202223GPRIM\_01 - CUP:G53D23008030006).

\appendix 
\renewcommand{\thesection}{\Alph{section}} 
\makeatletter
\renewcommand\@seccntformat[1]{\appendixname\ \csname the#1\endcsname.\hspace{0.5em}}
\makeatother

\section{}
\label{appendix}

In section~\ref{sec:ProofSystem}, we used mutual exclusivity of outputs in the formulation of the I-rules for disjunction.
In this appendix, we will clarify this concept and define formal criteria for exclusivity.

\begin{definition}[Generalized disjunction]
\label{def:Gen+}
    Using $+$, we define a generalized notion of disjunction as follows:
    \medskip
    \[
    \beta^{i} = \bigoplus_{j \in \{i\}} \beta^{j}
    \]
    \[
    \bigoplus_{j \in J} \beta^{j} + \beta^{k} = \beta^{k} + \bigoplus_{j \in J} \beta^{j} = \bigoplus_{j \in J \cup \{k\}} \beta^{j}
    \]
    
    \medskip
    Note that generalized disjunction identifies disjunctions up to associativity, commutativity and idempotence. 
\end{definition}

Recall that $I$ is the set of all indexes for the possible atomic values of $a$.\footnote{
For simplicity, we assume that the set of atomic values for different atomic variables are always mutually disjoint.}

\begin{definition}[Mutual Exclusivity]
    Given a possibly non-atomic variable $t$, outputs $\beta$ and $\delta$ are mutually exclusive for $t$, denoted $t: \beta ~ \Bot ~ t: \delta$, iff $\beta \times \delta \rightarrow \bot$ is provable in any classical propositional system
    that, for every atomic $a$ occurring in $t$, contains the axioms:
    \begin{description}
    \item[Ax. of exclusivity] $ \forall i,j \in I (i\neq j \rightarrow \alpha _{i} \times \alpha ^{j}\rightarrow \bot)$.
    \item[Ax. of exhaustivity] $\bigoplus_{i \in I} \alpha _{i}$
    \end{description}
We call ``classical theory of $\beta$ and $\delta$'' such a theory. Below we will denote it simply with $T$.
\end{definition}

Note that mutual exclusivity is defined, relying on a classical system expressing the mutual incompatibility of all attributions of atomic values to variables.
In other words, the system formalizes the consequences of the assumption that we are working only with incompatible atomic outputs.

\begin{example}[Mutually Exclusive Values]
Le us consider an ML system that deals with Data Sets containing information about age distributions in a population.
    Values $x \leq 20$ and $x \leq 30$ are not mutually exclusive for the atomic variable \textit{Age}, and so are not acceptable.
    On the contrary, values $x \leq 20$, $ 20 \lneq x \leq 30$, and $x \gneq 30$ are mutually exclusive and exhaustive for the atomic variable \textit{Age}, and so are acceptable.
    Hence, we can extend a classical system with the following axioms:
    \begin{description}
    \item[Ax.1] $(x \leq 20 \times 20 \lneq x \leq 30) \rightarrow \bot$
    \item[Ax.2] $(x \leq 20 \times x \gneq 30) \rightarrow \bot$
    \item[Ax.3] $(20 \lneq x \leq 30 \times x \gneq 30) \rightarrow \bot$
    \end{description}
    \noindent
    and use this set of axioms to decide mutual exclusivity of logically complex values.
    As an example, we can easily prove $Age: x \leq 20 + x \gneq 30  ~ \Bot ~ Age: 20 \lneq x \leq 30$.
\end{example}

\begin{definition}[$ \beta  ^{*}$ and $ \beta  ^{\infty}$]
    Given a value $\beta$ for an atomic variable $a$, let us define $\beta ^{*}$ as the result of applying  one of the following transformations (called $*$-transformations and abbreviated with $\rightharpoonup ^{*}$) to $\beta$:
\begin{enumerate}
    \item substitute every positive atomic value $\alpha ^{i}$ in $\beta$ with $\bigoplus_{j \in \{i\}} \alpha ^{j}$;
    \item substitute every negated atomic value $(\alpha ^{i})^{\bot}$ in $\beta$ with $\bigoplus_{j \in I \setminus \{i\}} \alpha ^{j}$;
    \item substitute every negated generalized disjunction of positive atomic values $(\bigoplus_{j \in J  } \alpha ^{j})^{\bot}$ in $\beta$ with $\bigoplus_{j \in \overline{J}} \alpha ^{j}$.
\end{enumerate}
Where $I$ is the set of all the indexes for the possible atomic values of $a$, $J$ is a set of some indexes for the possible atomic values of $a$, and $\overline{J}$ is its complement.
Moreover, let us define $\beta  ^{\infty}$ as the result of applying as many times as possible these transformations.
\end{definition}

Before going on, let us see an example of $*$-transformation.

\begin{example}
Let us see the reduction of $(((\alpha ^{1} + \alpha ^{2})^{\bot} + \alpha ^{3})^{\bot} + \alpha ^{4})^{\bot}$, step by step up to $((((\alpha ^{1} + \alpha ^{2})^{\bot} + \alpha ^{3})^{\bot} + \alpha ^{4})^{\bot})^{\infty}$:

\bigskip

\scalebox{0.90}{$
   \begin{aligned}
    (((\alpha ^{1} + \alpha ^{2})^{\bot} + \alpha ^{3})^{\bot} + \alpha ^{4})^{\bot} \rightharpoonup ^{*} ((\bigoplus_{j \in I\setminus \{1,2\}} \alpha ^{j} + \alpha ^{3})^{\bot} + \alpha ^{4})^{\bot} \rightharpoonup ^{*} \qquad \quad \\
    ((\bigoplus_{j \in I\setminus \{1,2\}} \alpha ^{j})^{\bot} + \alpha ^{4})^{\bot} \rightharpoonup ^{*}     \rightharpoonup ^{*} (\bigoplus_{j \in \{1,2\}} \alpha ^{j} + \alpha ^{4})^{\bot} \rightharpoonup ^{*} (\bigoplus_{j \in \{1,2,4\}} \alpha ^{j})^{\bot} \rightharpoonup ^{*} \bigoplus_{j \in I\setminus \{1,2,4\}} \alpha ^{j}
    \end{aligned}
    $}
\bigskip

\end{example}

\begin{lemma}[$\beta ^{\infty}$ is a generalized disjunction of atoms]
\label{lemma:*Disj}
    For every value $\beta$ of an atomic variable $a$, $\beta ^{\infty}$ is a generalized disjunction of positive atomic values.
\end{lemma}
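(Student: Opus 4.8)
The plan is to prove the lemma by structural induction on the value $\beta$, after first establishing that the $*$-transformation process always terminates, so that $\beta^{\infty}$ is well-defined. The key fact I would rely on is contained in the defining equations of generalized disjunction (Definition~\ref{def:Gen+}), which identify disjunctions up to associativity, commutativity and idempotence; in particular, the disjunction $\bigoplus_{j \in J} \alpha^{j} + \bigoplus_{k \in K} \alpha^{k}$ of two generalized disjunctions of positive atoms is \emph{itself} a generalized disjunction of positive atoms, namely $\bigoplus_{j \in J \cup K} \alpha^{j}$, obtained by repeatedly absorbing single disjuncts. This lets me close the disjunction case of the induction without any extra rewriting step.

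For termination, I would introduce the measure $\mu(\beta)$ equal to the number of occurrences of $\bot$ in $\beta$ plus the number of \emph{bare} atomic values, that is, occurrences of some $\alpha^{i}$ not already enclosed in a $\bigoplus$. Each transformation strictly decreases $\mu$: transformation $1$ turns a bare atom into a singleton generalized disjunction, lowering the bare-atom count by one and leaving the $\bot$-count untouched; transformation $2$ removes a $\bot$ and a bare atom simultaneously; transformation $3$ removes a $\bot$ and introduces no new bare atoms. Since $\mu$ is a non-negative integer that strictly decreases at every step, the process must halt, and so $\beta^{\infty}$ is reached after finitely many steps.

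For the characterisation, I would then carry out the induction on the structure of $\beta$. In the base case $\beta = \alpha^{i}$, a single application of transformation $1$ yields $\bigoplus_{j \in \{i\}} \alpha^{j}$. If $\beta = \gamma + \delta$, the induction hypothesis gives $\gamma^{\infty} = \bigoplus_{j \in J} \alpha^{j}$ and $\delta^{\infty} = \bigoplus_{k \in K} \alpha^{k}$; reducing both subterms and merging via Definition~\ref{def:Gen+} produces $\bigoplus_{j \in J \cup K} \alpha^{j}$. If $\beta = \gamma^{\bot}$, the induction hypothesis first reduces the subterm to $\gamma^{\infty} = \bigoplus_{j \in J} \alpha^{j}$, at which point transformation $3$ applies and yields $\bigoplus_{j \in \overline{J}} \alpha^{j}$. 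In each case the output is a generalized disjunction of positive atoms; and since a normal form can contain no bare atom (else transformation $1$ applies), no negated atom (transformation $2$), and no negated generalized disjunction (transformation $3$), this generalized disjunction is exactly the terminal form.

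The main obstacle I anticipate is not any individual case but the bookkeeping needed to make ``reduce the subterms, then fire the top-level rule'' rigorous: one must argue that fully reducing the immediate subterms does not block the outermost transformation, which is precisely why the inside-out order of the induction matters in the negation case, where a $\bot$ can only be discharged by transformation $3$ once its argument has already become a generalized disjunction of atoms. A secondary point worth flagging is the degenerate subcase $J = I$ in the negation step, where $\overline{J} = \emptyset$ and transformation $3$ returns the empty generalized disjunction; this corresponds to negating the exhaustive disjunction of all atomic values, i.e. the contradictory value, and should either be admitted as the (empty) generalized disjunction or excluded at the outset as not denoting a genuine value.
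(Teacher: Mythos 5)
Your proof is correct; the paper's own proof of this lemma is literally the single line ``Obvious, by inspecting the definition of $*$-transformation,'' so your structural induction together with the decreasing measure $\mu$ simply makes explicit the argument the authors leave implicit, and the termination claim (which the paper never addresses but which is needed for $\beta^{\infty}$ to be well-defined) is a genuine and welcome addition. Your flagging of the degenerate subcase $J = I$ in the negation step is also a real observation the paper glosses over: Definition~\ref{def:Gen+} builds generalized disjunctions from nonempty index sets only, so $\bigoplus_{j \in \overline{J}}\alpha^{j}$ with $\overline{J}=\emptyset$ falls outside it, and one must either admit an empty disjunction or rule out the contradictory value $(\bigoplus_{i \in I}\alpha^{i})^{\bot}$ as you suggest. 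No gap; if anything your write-up is more careful than the source.
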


\begin{proof}
Obvious, by inspecting the definition of $ * $-transformation.
\end{proof}

\begin{lemma}[$*$-transformation preserves Exclusivity]
\label{lemma:disjunctionstar}
    For every pair of values $\beta$ and $\delta$ of an atomic variable $a$, $a: \beta ~ \Bot ~ a: \delta$ iff $a: \beta ^{\infty} ~ \Bot ~ a: \delta ^{\infty}$.
\end{lemma}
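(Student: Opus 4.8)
The plan is to show that each individual $*$-transformation replaces a subexpression with one that is \emph{provably equivalent} to it in the classical theory $T$ generated by the axioms of exclusivity and exhaustivity for $a$, and then to lift this to whole values by the congruence (replacement of provable equivalents) of classical propositional logic. Since $\beta^{\infty}$ and $\delta^{\infty}$ are obtained from $\beta$ and $\delta$ by finitely many such steps, we will obtain $T \vdash \beta \leftrightarrow \beta^{\infty}$ and $T \vdash \delta \leftrightarrow \delta^{\infty}$. From this the stated biconditional follows at once: because $T \vdash (\beta \times \delta) \leftrightarrow (\beta^{\infty}\times \delta^{\infty})$, the formula $\beta \times \delta \rightarrow \bot$ is provable in $T$ exactly when $\beta^{\infty}\times \delta^{\infty}\rightarrow \bot$ is, which by the definition of mutual exclusivity is precisely $a: \beta ~ \Bot ~ a: \delta$ iff $a: \beta^{\infty} ~ \Bot ~ a: \delta^{\infty}$.

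The heart of the argument is therefore the verification that each of the three transformation clauses preserves provable equivalence in $T$. Clause~1 is immediate, since $\bigoplus_{j \in \{i\}}\alpha^{j}$ is literally $\alpha^{i}$ by the base case of Definition~\ref{def:Gen+}. For clause~3 (of which clause~2 is the special case $J = \{i\}$) I would establish $T \vdash (\bigoplus_{j \in J}\alpha^{j})^{\bot} \leftrightarrow \bigoplus_{j \in \overline{J}}\alpha^{j}$ directly. For the left-to-right direction, assuming no $\alpha^{j}$ with $j \in J$ holds, the axiom of exhaustivity $\bigoplus_{i \in I}\alpha^{i}$ forces some $\alpha^{k}$ to hold, and $k$ must lie in $\overline{J}$, giving $\bigoplus_{j \in \overline{J}}\alpha^{j}$. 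For the converse, if some $\alpha^{k}$ with $k \in \overline{J}$ holds, then for every $j \in J$ we have $j \neq k$, so the axiom of exclusivity $\alpha^{k}\times\alpha^{j}\rightarrow\bot$ rules out $\alpha^{j}$; hence none of the disjuncts indexed by $J$ can hold, i.e.\ $(\bigoplus_{j \in J}\alpha^{j})^{\bot}$ holds.

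With the single-step equivalences in hand, the remaining work is routine: classical propositional logic admits replacement of provably equivalent subformulas within any context, so applying a $*$-transformation to a subterm of $\beta$ yields a value provably equivalent to $\beta$ in $T$, and a straightforward induction on the length of the reduction sequence from $\beta$ to $\beta^{\infty}$ (finite, since by Lemma~\ref{lemma:*Disj} the procedure terminates in a generalized disjunction of atoms) gives $T \vdash \beta \leftrightarrow \beta^{\infty}$, and likewise for $\delta$. The one place where the content really sits—and thus the main obstacle—is the equivalence of clause~3, since it is exactly here that both axioms are used essentially: one must check that exhaustivity supplies the forward direction, exclusivity the backward one, and that the complementation over $\overline{J}$ is matched correctly against the negated generalized disjunction. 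Everything else is either definitional (clause~1) or a standard appeal to congruence.
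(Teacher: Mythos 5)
Your proof is correct, and its mathematical core coincides with the paper's: the single-step equivalence $T \vdash (\bigoplus_{j\in J}\alpha^{j})^{\bot} \leftrightarrow \bigoplus_{j\in\overline{J}}\alpha^{j}$, with exhaustivity supplying one direction (via disjunctive syllogism) and exclusivity the other (via negation introduction), is exactly what the paper establishes in the base cases of its induction. Where you genuinely differ is in the architecture. The paper's argument is asymmetric: for the direction from $\beta\times\delta\rightarrow\bot$ to $\beta^{\infty}\times\delta^{\infty}\rightarrow\bot$ it transforms the given derivation wholesale, replacing every formula $\gamma$ occurring in it by $\gamma^{\infty}$ and then checking that the only rule applications that can break under this substitution --- \emph{non contradiction} and \emph{classical reductio}, since by Lemma~\ref{lemma:*Disj} the transformed formulas are negation-free --- are re-derivable from the two axioms; only for the converse does it prove a substitution statement ($\Gamma\vdash_{T}\phi$ iff $\Gamma\vdash_{T}\phi[\gamma/\gamma^{*}]$) by induction on $\phi$. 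You instead prove $T\vdash\beta\leftrightarrow\beta^{\infty}$ and $T\vdash\delta\leftrightarrow\delta^{\infty}$ once and invoke congruence of classical logic, obtaining both directions simultaneously from $T\vdash(\beta\times\delta)\leftrightarrow(\beta^{\infty}\times\delta^{\infty})$. Your route is shorter, symmetric, and avoids the paper's somewhat delicate blanket claim that ``all the applications of the rules remain valid'' under the $\infty$-substitution; what the paper's derivation-transformation buys in exchange is an explicit check that the negation rules are admissible over the negation-free fragment, a fact it reuses elsewhere (e.g.\ in the footnote to Lemma~\ref{lemma:NegPreserveZero} on defining $\bot$ via $+$). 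Two small points worth making explicit in your write-up: since $\beta$ and $\delta$ are values in $O$, the surrounding contexts are built only from $+$ and $\bot$, both of which respect provable equivalence, so the appeal to congruence is unproblematic; and the identification of generalized disjunctions up to associativity, commutativity and idempotence in Definition~\ref{def:Gen+} is itself a provable equivalence in $T$, so absorbing a transformed subterm into an enclosing $\bigoplus$ before the next $*$-step is likewise covered by congruence.
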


\begin{proof}
    Let us assume that we have a proof of $\beta \times \delta \rightarrow \bot$ in the classical theory of $\beta $ and $ \delta $.
    We can prove $\beta ^{\infty} \times \delta ^{\infty} \rightarrow \bot$ by changing each formula $\gamma$ in this proof with $\gamma ^{\infty}$.
    All the applications of the rules (and the axioms) remain valid, apart from those for negation (\textit{non contradiction} and \textit{classical reductio}), which can be shown to be derivable as follows.

For non contradiction, with $\mid I\setminus \{i\}\mid = n-1$ and the obvious generalization of $E+$ to a generalized disjunction:

\begin{prooftree}
    \AxiomC{$ \bigoplus_{j \in I \setminus \{i\}} \alpha ^{j} $}

    \AxiomC{$ \alpha ^{i} $}  
    \AxiomC{$ [\alpha ^{1} ] $}  
    \BinaryInfC{$\bot$}

    \AxiomC{$ \ldots n-1 ~~~times $} 

    \AxiomC{$ \alpha ^{i} $}  
    \AxiomC{$ [\alpha _{n}] $}  
    
    \BinaryInfC{$\bot$}

\QuaternaryInfC{$\bot$}
\end{prooftree}

For \textit{classical reductio}:

\begin{prooftree}
    \AxiomC{$ \bigoplus_{i \in I} \alpha ^{i} $}

    \AxiomC{$ [\bigoplus_{j \in I \setminus \{k\}} \alpha ^{j} ] $}  
    \noLine
    \UnaryInfC{$\vdots$}
    \noLine
    \UnaryInfC{$\bot$}
    \UnaryInfC{$\alpha ^{k}$}

    \AxiomC{$ [\alpha ^{k} ] $} 

\TrinaryInfC{$\alpha ^{k}$}
\end{prooftree}

Now, let us assume that $\beta ^{\infty} \times \delta ^{\infty} \rightarrow \bot$, and prove that $\beta \times \delta \rightarrow \bot$.
It is convenient to prove the stronger result that $\Gamma \vdash _{T} \phi$ iff $\Gamma \vdash _{T} \phi [\gamma / \gamma ^{*}]$.
The proof is by induction on the complexity of $\phi$ and by cases on its outmost logical connective.
\begin{description}
\item[Base:] If $\phi$ is an atom, then $\phi = \gamma = \alpha ^{i}$, and the proof is obvious.

\item[Step (negated atom):] If $\phi$ is a negated atom, then $\phi = (\alpha ^{i}) ^{\bot}$, and $*$-transformation can be applied only to $(\alpha ^{i}) ^{\bot}$, so $\gamma = (\alpha ^{i}) ^{\bot}$.
If $\Gamma \vdash _{T} (\alpha ^{i}) ^{\bot}$ and $ \vdash _{T} \bigoplus_{i \in I} \alpha ^{i}$ by Exhaustivity, we obtain $\Gamma \vdash _{T} \bigoplus_{j \in I \setminus \{i\}} \alpha ^{j}$ by disjunctive syllogism. If $\Gamma \vdash _{T} \bigoplus_{j \in I \setminus \{i\}} \alpha ^{j}$ and $ \alpha ^{i} , \alpha ^{j} \vdash _{T} \bot$ for every $i,j \in I$ s.t. $ i\neq j$, we obtain $\Gamma , \alpha ^{i} \vdash _{T} \bot$ by various applications of $\lor$E, and so $\Gamma \vdash _{T} (\alpha ^{i})^{\bot}$ by introduction of negation.

\item[Step (negated disjunction of atoms):] If $\phi$ is a negated disjunction of atoms, then $\phi = (\bigoplus_{j \in J  } \alpha ^{j})^{\bot}$, and so $\gamma = (\bigoplus_{j \in J  } \alpha ^{j})^{\bot}$.
If $\Gamma \vdash _{T} (\bigoplus_{j \in J  } \alpha ^{j})^{\bot}$, then $\Gamma \vdash _{T} \bigoplus_{j \in \overline{J}} \alpha ^{j}$ follows from $ \vdash _{T} \bigoplus_{i \in I} \alpha ^{i}$ by disjunctive syllogism.
On the other direction, assuming $\Gamma \vdash _{T} \bigoplus_{j \in \overline{J}} \alpha ^{j}$, from   $ \alpha ^{i} , \bigoplus_{j \in \overline{J}} \alpha ^{j} \vdash _{T} \bot$ for every $i\in J$, we obtain $\Gamma , \bigoplus_{j \in \overline{J}} \alpha ^{j} , \bigoplus_{i \in J  } \alpha ^{i} \vdash _{T} \bot$ by various applications of $\lor$E, and so $\Gamma , \bigoplus_{j \in \overline{J}} \alpha ^{j} \vdash _{T} (\bigoplus_{j \in J  } \alpha ^{j})^{\bot}$ by introduction of negation.
Hence, if $\Gamma \vdash _{T} \bigoplus_{j \in \overline{J}} \alpha ^{j}$, we obtain the desired conclusion by transitivity of deduction.

\item[Step disjunction] If $\phi = \psi + \psi '$, then by induction hypothesis $\Gamma \vdash _{T} \psi$ iff $\Gamma \vdash _{T} \psi [\gamma / \gamma ^{*}]$, and $\Gamma \vdash _{T} \psi '$ iff $\Gamma \vdash _{T} \psi ' [\gamma / \gamma ^{*}]$.
Hence, assuming $\Gamma \vdash _{T} \psi + \psi '$, we use $\psi \vdash _{T} \psi [\gamma / \gamma ^{*}] +\psi ' [\gamma / \gamma ^{*}]$ and $\psi ' \vdash _{T} \psi [\gamma / \gamma ^{*}] +\psi ' [\gamma / \gamma ^{*}]$ to derive the conclusion;

\item[Steps conjunction and conditional] Similar to the previous case.
\end{description}

This concludes the proof that $\Gamma \vdash _{T} \phi$ iff $\Gamma \vdash _{T} \phi [\gamma / \gamma ^{*}]$, from which it easily follows that  if $\beta ^{\infty} \times \delta ^{\infty} \rightarrow \bot$, then $\beta \times \delta \rightarrow \bot$.
\end{proof}

\begin{lemma}[Exclusivity for disjunctions of atoms]
\label{lemma:EcludDisj}
    Given an atomic variable $a$, $a: \bigoplus_{j \in J } \alpha ^{j} ~ \Bot ~ a: \bigoplus_{k \in K } \alpha ^{k}$ holds iff $ K \cap J = \emptyset $. 
\end{lemma}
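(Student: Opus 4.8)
The plan is to unfold the definition of mutual exclusivity and prove the two implications separately. By definition, $a: \bigoplus_{j \in J } \alpha ^{j} ~ \Bot ~ a: \bigoplus_{k \in K } \alpha ^{k}$ means precisely that $(\bigoplus_{j \in J } \alpha ^{j}) \times (\bigoplus_{k \in K } \alpha ^{k}) \rightarrow \bot$ is provable in the classical theory $T$ containing the axioms of exclusivity and exhaustivity for $a$. First I would establish the direction ``$K \cap J = \emptyset$ implies exclusivity'' by a direct syntactic derivation, and then the converse ``exclusivity implies $K \cap J = \emptyset$'' by contraposition, invoking the soundness of classical propositional logic together with a suitable countermodel of $T$.

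For the forward direction, assume $K \cap J = \emptyset$. Classically, $(\bigoplus_{j \in J } \alpha ^{j}) \times (\bigoplus_{k \in K } \alpha ^{k})$ is equivalent to the generalized disjunction $\bigoplus_{j \in J,\, k \in K} (\alpha ^{j} \times \alpha ^{k})$ obtained by distributing $\times$ over $+$. Since $J$ and $K$ are disjoint, every pair $(j,k)$ occurring in this disjunction has $j \neq k$, so the axiom of exclusivity yields $\alpha ^{j} \times \alpha ^{k} \rightarrow \bot$ for each disjunct. By the (generalized) elimination rule for disjunction, each disjunct implying $\bot$ makes the whole disjunction imply $\bot$, hence $(\bigoplus_{j \in J } \alpha ^{j}) \times (\bigoplus_{k \in K } \alpha ^{k}) \rightarrow \bot$ is provable in $T$.

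For the converse I would argue contrapositively: suppose $K \cap J \neq \emptyset$ and pick $i \in K \cap J$. Consider the propositional valuation that sets $\alpha ^{i}$ true and every other atom $\alpha ^{l}$ (with $l \in I$, $l \neq i$) false. This valuation is a model of $T$: the axiom of exhaustivity $\bigoplus_{l \in I} \alpha ^{l}$ holds because $\alpha ^{i}$ is true, and every instance of the exclusivity axiom $\alpha ^{l} \times \alpha ^{m} \rightarrow \bot$ with $l \neq m$ holds vacuously, since at most one atom is true. Under this valuation both $\bigoplus_{j \in J } \alpha ^{j}$ and $\bigoplus_{k \in K } \alpha ^{k}$ are true, as each contains the true disjunct $\alpha ^{i}$, so their conjunction is true and the implication to $\bot$ is false. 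By soundness, $(\bigoplus_{j \in J } \alpha ^{j}) \times (\bigoplus_{k \in K } \alpha ^{k}) \rightarrow \bot$ is therefore not provable in $T$, which is exactly the failure of exclusivity.

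The only delicate point is the converse direction, where I must establish \emph{non}-provability rather than provability: this cannot be done by exhibiting a derivation and instead rests on the soundness of the classical system for $T$ together with the verification that the single-true-atom valuation genuinely satisfies both axiom schemes. Once that check is in place, both directions are routine, and the lemma supplies exactly the base case to which the $*$-transformation of Lemma~\ref{lemma:disjunctionstar} reduces the general exclusivity problem.
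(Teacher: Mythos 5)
Your proof is correct and follows essentially the same route as the paper's: a derivation by generalized disjunction elimination plus the exclusivity axiom for the forward direction, and the single-true-atom valuation as a countermodel for the converse. If anything, you are slightly more careful than the paper in explicitly checking that this valuation satisfies both axiom schemes of $T$ (so that soundness applies), which is a welcome addition.
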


\begin{proof}
$\bigoplus_{j \in J } \alpha ^{j},\bigoplus_{k \in K } \alpha ^{k}\vdash_{I}\bot$ only using the axiom of exclusivity, and this requires that $J \cap K = \emptyset$:

\begin{prooftree}
\footnotesize
\def\defaultHypSeparation{\hskip .1in}
    \AxiomC{$ \bigoplus_{k \in K} \alpha ^{k} $}

        \AxiomC{$ \bigoplus_{j \in J} \alpha ^{j} $}

            \AxiomC{$ [\alpha _{j^{1}}]^{1} $}  
            \AxiomC{$ [\alpha _{k^{1}}]^{2} $}  
            \BinaryInfC{$\bot$}

    \AxiomC{$ \ldots   \mid J \times K \mid ~~~times$} 
    
            \AxiomC{$ [\alpha _{j_{n}}]^{1} $}  
            \AxiomC{$ [\alpha _{k_{n}}]^{2} $}  
            \BinaryInfC{$\bot$}

\RightLabel{1}
\QuaternaryInfC{$\bot$}

\RightLabel{2}
\BinaryInfC{$\bot$}
\end{prooftree}

\noindent
Where $J \cap K = \emptyset$ entails that $\bot$ is derivable from any combination of $\alpha ^{j} $ and $\alpha ^{k} $.
On the contrary, if $J \cap K \neq \emptyset$, then let us assume $i\in J \cap K$.
The valuation giving True to $\alpha ^{i}$ and False to every other atom is a model of $\bigoplus_{j \in J } \alpha ^{j} \times \bigoplus _{k \in K } \alpha ^{k}$.
Hence, since $\bigoplus_{j \in J } \alpha ^{j} \times \bigoplus_{k \in K } \alpha ^{k}$ has a model it cannot be inconsistent.    
\end{proof}

\begin{theorem}[Decision (atomic variables)]
\label{teo:decAtom}
Given an atomic variable $a$ and two outputs $\beta$ and $\delta$, there is an effective procedure for deciding whether $a: \beta ~ \Bot ~ a: \delta$.
\end{theorem}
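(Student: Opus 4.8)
The plan is to turn the three preceding lemmas into a terminating algorithm. Given $\beta$ and $\delta$, I would first compute their normal forms $\beta^{\infty}$ and $\delta^{\infty}$ by repeatedly applying the $*$-transformations. By Lemma~\ref{lemma:*Disj}, each of these normal forms is a generalized disjunction of positive atomic values, say $\beta^{\infty} = \bigoplus_{j \in J} \alpha^{j}$ and $\delta^{\infty} = \bigoplus_{k \in K} \alpha^{k}$ for some index sets $J, K \subseteq I$. The decision then reduces to a purely combinatorial check on $J$ and $K$, which is effective because $I$ is finite.

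Concretely, the steps would be carried out in the following order. First, normalize: apply $*$-transformations to $\beta$ and to $\delta$ until no further transformation applies, obtaining $\beta^{\infty}$ and $\delta^{\infty}$; this is a mechanical rewriting that reads off, for each normal form, the subset of $I$ indexing its disjuncts. Second, appeal to Lemma~\ref{lemma:disjunctionstar}, which guarantees that $a: \beta ~ \Bot ~ a: \delta$ holds if and only if $a: \beta^{\infty} ~ \Bot ~ a: \delta^{\infty}$ holds, so that nothing is lost in passing to the normal forms. Third, apply Lemma~\ref{lemma:EcludDisj}: since both normal forms are generalized disjunctions of atoms, $a: \beta^{\infty} ~ \Bot ~ a: \delta^{\infty}$ holds if and only if $J \cap K = \emptyset$. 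As $I$ is finite, computing $J$ and $K$ and testing $J \cap K = \emptyset$ is a finite, and hence effective, procedure.

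The main obstacle I expect is not the final disjointness test, which is trivially decidable once $I$ is finite, but rather confirming that the normalization phase is genuinely effective, i.e.\ that the $*$-transformation process terminates so that $\beta^{\infty}$ is well defined and computable. I would argue termination by noting that each non-trivial transformation (clauses~2 and~3 of the $*$-transformation) eliminates one occurrence of negation, working from the innermost negated sub-value outward, while the clause rewriting a bare atom $\alpha^{i}$ as the singleton generalized disjunction $\bigoplus_{j \in \{i\}} \alpha^{j}$ only serves to expose the next negation for elimination; since the initial expression contains finitely many connectives, only finitely many transformations can be applied. Combined with Lemma~\ref{lemma:*Disj}, which already fixes the shape of the terminal form, this shows that the whole procedure halts and returns the correct answer.
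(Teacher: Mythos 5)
Your proposal is correct and follows essentially the same route as the paper, which simply reduces the question to $a:\beta^{\infty} ~ \Bot ~ a:\delta^{\infty}$ via Lemma~\ref{lemma:disjunctionstar} and then (implicitly) decides it by Lemma~\ref{lemma:EcludDisj} and the finiteness of $I$. You additionally spell out the termination of the $*$-transformation and the final disjointness test, which the paper leaves implicit; these additions are sound and strengthen rather than alter the argument.
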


\begin{proof}
By lemma~\ref{lemma:disjunctionstar}, $a: \beta ~ \Bot ~ a: \delta$ iff $a: \beta ^{\infty}~ \Bot ~ a: \delta ^{\infty}$.
\end{proof}

Now that we have a decision procedure for the mutual exclusivity of outputs of the atomic variables, we need to generalize this result to non-atomic variables.

\begin{theorem}[Decision]
Given a variable $u$ and two outputs $\beta$ and $\delta$, there is an effective procedure for deciding whether $u: \beta ~ \Bot ~ u: \delta$.
\end{theorem}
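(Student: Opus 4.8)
The plan is to reduce the statement to the decidability of classical propositional consequence. By the definition of mutual exclusivity, $u : \beta ~ \Bot ~ u : \delta$ holds exactly when $\beta \times \delta \rightarrow \bot$ is provable in the classical theory $T$ associated with $\beta$ and $\delta$. The key observation is that this theory is finitely axiomatized over finitely many propositional atoms: since $u$ is a finite expression it mentions only finitely many atomic variables $a_1, \ldots, a_k$, and because every atomic variable has finitely many possible atomic values, the axioms of exclusivity and exhaustivity contributed by each $a_j$ are finite in number. Reading each atomic value attribution as a propositional atom, $\times$ as conjunction, $+$ as disjunction, $\rightarrow$ as material implication, and $(\cdot)^{\bot}$ as negation, both $\beta$ and $\delta$ become ordinary propositional formulas over this finite set of atoms, and $T$ becomes a finite propositional theory.

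Next I would pass to the semantic side via soundness and completeness of classical propositional logic, so that $T \vdash \beta \times \delta \rightarrow \bot$ iff $\beta \times \delta$ is false in every model of $T$. Here the axioms pin the models down tightly: exhaustivity forces at least one atomic value of each $a_j$ to be true, while exclusivity forbids two distinct atomic values of the same $a_j$ from being simultaneously true. Hence the models of $T$ are in bijection with the tuples that assign to each $a_j$ exactly one of its atomic values, and there are only finitely many such tuples (their number is the product of the sizes of the per-variable value sets). Since distinct atomic variables share no axioms, their assignments range independently, which is precisely what permits the sound evaluation of values built by $\times$ and $\rightarrow$ across different variables.

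This yields the effective procedure. Enumerate the finitely many assignments described above; for each one evaluate the propositional formula $\beta \times \delta$, a finite and mechanical computation; and declare $u : \beta ~ \Bot ~ u : \delta$ to hold precisely when no assignment satisfies $\beta \times \delta$. This terminates and returns the correct answer by the characterization of the previous step, establishing the theorem.

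The step I expect to be the main obstacle is handling values that combine atomic values drawn from several distinct atomic variables through $\times$ and $\rightarrow$: the single-variable $*$-transformation underlying Theorem~\ref{teo:decAtom} normalizes a value into a generalized disjunction of atoms of one variable, but it does not apply once products span two or more variables, so Lemma~\ref{lemma:EcludDisj} cannot be invoked verbatim. The finite-model argument above is designed to sidestep this by working over the product of the per-variable value spaces rather than within a single one. An alternative, more syntactic route would normalize $\beta$ and $\delta$ into disjunctions of product-conjunctions, one atomic-value literal per atomic variable occurring in $u$, and then, exploiting the mutual independence of distinct variables, reduce exclusivity to a componentwise disjointness condition generalizing Lemma~\ref{lemma:EcludDisj}; the extra bookkeeping required to justify that normalization is the price of remaining purely proof-theoretic.
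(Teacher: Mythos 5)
Your reduction to satisfiability over the finitely many models of the theory $T$ is sound and effective for the fragment of values built from atomic values, $\bot$, $+$ and $\times$: there the paper's own criteria coincide with classical unsatisfiability (its pair-variable base case, ``$\langle t,v \rangle: \gamma ^{1}\times \gamma ^{2} ~ \Bot ~ \langle t,v \rangle: \gamma ^{3}\times \gamma ^{4}$ iff $ t: \gamma ^{1} ~ \Bot ~ t: \gamma ^{3}$ or $v :\gamma ^{2} ~ \Bot ~ v : \gamma ^{4}$'', is exactly what the independence of the per-variable axiom blocks gives you), and your finite-model enumeration is arguably more direct than the paper's double induction on that fragment. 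The genuine gap is the conditional, which you correctly flag as the main obstacle but then resolve by decreeing $\rightarrow$ to be material implication. That is not how the paper treats it: in TNDPQ a value $\gamma^{1} \rightarrow \gamma^{2}$ for a variable $[t]v$ denotes a conditional probability (I/E$\rightarrow$ is residuation), and accordingly the paper's base clause for conditionals reads $[t]v: \gamma ^{1}\rightarrow \gamma ^{2} ~ \Bot ~ [t]v: \gamma ^{3}\rightarrow \gamma ^{4}$ iff $\gamma ^{1} = \gamma ^{3}$ and $v :\gamma ^{2} ~ \Bot ~ v : \gamma ^{4}$. Under your reading this comes out the opposite way: any valuation falsifying $\gamma^{1}$ satisfies $(\gamma^{1}\rightarrow\gamma^{2})\times(\gamma^{1}\rightarrow\gamma^{4})$, so your procedure declares the two conditionals compatible even when $\gamma^{2}$ and $\gamma^{4}$ are exclusive, whereas the paper declares them mutually exclusive --- as it must, since $Pr(\gamma^{2}\mid\gamma^{1})+Pr(\gamma^{4}\mid\gamma^{1})=Pr(\gamma^{2}+\gamma^{4}\mid\gamma^{1})$ is precisely what licenses I$\rhd+$ on conditional values. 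So on any $u$ containing a conditional your algorithm decides a different relation from the one the paper needs (e.g.\ for the closure under $+^{\Bot}$ in the preservation theorems), and the theorem is not established in the intended sense.

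The repair is to keep your semantic argument where it works and graft on the paper's recursive clauses for conditional variables: after pushing negations through (rewriting $(\gamma^{1}\rightarrow\gamma^{2})^{\bot}$ as $\gamma^{1}\rightarrow(\gamma^{2})^{\bot}$), match antecedents syntactically and recurse on the consequents. At that point you have essentially reconstructed the paper's proof, which is a structural induction on $u$: the atomic base case settled by the $*$-transformation together with Lemma~\ref{lemma:EcludDisj}, an inner induction on $\times$-complexity for pair variables, and an inner induction on $\rightarrow$-complexity for conditional variables.
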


\begin{proof}
The proof is by induction on the complexity of $u$.

\begin{description}
\item[Base] For atomic variables, we have theorem~\ref{teo:decAtom}.
\item[Step (conjunction)] If $u = \langle t,v \rangle$, then we define a criterion for $\langle t,v \rangle: \beta ~ \Bot ~ \langle t,v \rangle: \delta$ by induction on the $\times$-complexity of $\beta$ and $\delta$ ($\times C(\beta) + \times C(\delta)$), where $\times$-complexity is defined as follows:\footnote{Note that all conjunctions are of complexity $1$, regardless of the complexity of their subvalues.
Apart from this, the definition is as usual.
Moreover, note that $+$ in the argument of $\times C$ stands for disjunction, while in the \textit{definiens} it stands for sum.}
\begin{equation*}
\begin{split}
\times C(\gamma \times \gamma ') & := 1  \\
\times C(\gamma + \gamma ') & := \times C(\gamma ) +  \times C(\gamma')\\
\times C(\gamma ^{\bot}) & := \times C(\gamma ) +  1\\
\end{split}
\end{equation*}

\begin{description}
\item[Base (conjunction)] The base case is when both $\beta$ and $\delta$ are conjunctions.
Let us assume $\beta = \gamma ^{1}\times \gamma ^{2}$ and $\delta = \gamma ^{3}\times \gamma ^{4}$.
In this case, $\langle t,v \rangle: \gamma ^{1}\times \gamma ^{2} ~ \Bot ~ \langle t,v \rangle: \gamma ^{3}\times \gamma ^{4}$ iff $ t: \gamma ^{1} ~ \Bot ~ t: \gamma ^{3}$ or $v :\gamma ^{2} ~ \Bot ~ v : \gamma ^{4}$.
Indeed, mutual exclusivity is defined only for values of the same variable, so if $\langle t,v \rangle: \gamma ^{1}\times \gamma ^{2} ~ \Bot ~ \langle t,v \rangle: \gamma ^{3}\times \gamma ^{4}$ then $ t: \gamma ^{1} ~ \Bot ~ t: \gamma ^{3}$ or $v :\gamma ^{2} ~ \Bot ~ v : \gamma ^{4}$.
Moreover, if $ \gamma ^{1} \times \gamma ^{3} \rightarrow \bot$, then clearly $ (\gamma ^{1} \times \gamma ^{3}) \times (\gamma ^{2} \times \gamma ^{4}) \rightarrow \bot$, and the same holds for $\gamma ^{2} \times \gamma ^{4}  \rightarrow \bot$.

\item[Step] By cases on the outmost connective in $\beta$:

\begin{description}
\item[Double negation] $\beta = \gamma ^{\bot \bot} $ is treated as $\beta = \gamma $;
\item[Disjunction] If $\beta = \bigoplus_{j \in J } \gamma ^{j} $, then $\langle t,v \rangle: \bigoplus_{j \in J } \gamma ^{j} ~ \Bot ~ \langle t,v \rangle: \delta$ iff for every $j \in J$, $\langle t,v \rangle: \gamma ^{j} ~ \Bot ~ \langle t,v \rangle: \delta$, since, by distribution of conjunction over disjunction, $ \bigoplus_{j \in J } \gamma ^{j} \times \delta \rightarrow \bot $ iff $ \bigoplus_{j \in J } (\gamma ^{j} \times \delta) \rightarrow \bot $, and this is equivalent to $ \bigotimes_{j \in J } (\gamma ^{j} \times \delta \rightarrow \bot) $;\footnote{
Where $\bigotimes$ is the obvious generalization of conjunction dual to the generalization of disjunction seen in definition~\ref{def:Gen+}.
}

\item[Negated conjunction] $\beta = (\gamma ^{1} \times \gamma ^{2} )^{\bot} $ is treated as $\beta = ((\gamma ^{1}) ^{\bot} \times \gamma ^{2} ) + (\gamma ^{1}  \times (\gamma ^{2}) ^{\bot}) +  ((\gamma ^{1}) ^{\bot} \times (\gamma ^{2}) ^{\bot}) $;
\item[Negated disjunction] If $\beta = (\bigoplus_{j \in J } \gamma ^{j}) ^{\bot} $, then we have $\langle t,v \rangle: (\bigoplus_{j \in J } \gamma ^{j})^{\bot} ~ \Bot ~ \langle t,v \rangle: \delta$ iff for some $j \in J$, $\langle t,v \rangle: \gamma ^{j} ~ \Bot ~ \langle t,v \rangle: \delta$;
\end{description}

\end{description}

\item[Step (conditional)] If $u = [t]v $, then we define a criterion for $[t]v: \beta ~ \Bot ~ [t]v: \delta$ by induction on the $\rightarrow$-complexity of $\beta$ and $\delta$ ($\rightarrow C(\beta) + \rightarrow C(\delta)$), where $\rightarrow$-complexity is defined as follows:
\begin{equation*}
\begin{split}
\rightarrow C(\gamma \rightarrow \gamma ') & := 1  \\
\rightarrow C(\gamma + \gamma ') & := \rightarrow C(\gamma ) +  \rightarrow C(\gamma')\\
\rightarrow C(\gamma ^{\bot}) & := \rightarrow C(\gamma ) +  1\\
\end{split}
\end{equation*}

\begin{description}
\item[Base (conditional)] The base case is when both $\beta$ and $\delta$ are conditionals.
Let us assume $\beta = \gamma ^{1}\rightarrow \gamma ^{2}$ and $\delta = \gamma ^{3}\rightarrow \gamma ^{4}$.
In this case, $[t]v: \gamma ^{1}\rightarrow \gamma ^{2} ~ \Bot ~ [t]v: \gamma ^{3}\rightarrow \gamma ^{4}$ iff $\gamma ^{1} = \gamma ^{3}$ and $v :\gamma ^{2} ~ \Bot ~ v : \gamma ^{4}$. Indeed, evaluating a conditional $\sigma \rhd [t]v: \gamma  \rightarrow \gamma ' $ is the same as evaluating the consequence, adding the antecedent to the premise: $\sigma , t: \gamma \rhd v: \gamma ' $.

\item[Step] By cases on the outmost connective in $\beta$:

\begin{description}

\item[Double negation] $\beta = \gamma ^{\bot \bot} $ is treated as $\beta = \gamma $;

\item[Disjunction] If $\beta = \bigoplus_{j \in J } \gamma ^{j} $, then $[t]v: \bigoplus_{j \in J } \gamma ^{j} ~ \Bot ~ [t]v: \delta$ iff for every $j \in J$, $[t]v: \gamma ^{j} ~ \Bot ~ [t]v: \delta$;

\item[Negated disjunction] If $\beta = (\bigoplus _{j \in J } \gamma ^{j}) ^{\bot} $, then we have $[t]v: (\bigoplus_{j \in J } \gamma ^{j})^{\bot} ~ \Bot ~ [t]v: \delta$ iff for some $j \in J$, $[t]v: \gamma ^{j} ~ \Bot ~ [t]v: \delta$;

\item[Negated conditional] $\beta = (\gamma ^{1} \rightarrow \gamma ^{2} )^{\bot} $ is treated as $\beta = \gamma ^{1} \rightarrow (\gamma ^{2}) ^{\bot} $, since

\begin{prooftree}
\AxiomC{$ \sigma \rhd [t]u: (\gamma ^{1} \rightarrow \gamma ^{2})^{\bot} _{g} $}
	\RightLabel{{\tiny I/E$\rhd\bot$}}
	\doubleLine
	\UnaryInfC{$ \sigma \rhd [t]u: (\gamma ^{1} \rightarrow \gamma ^{2}) _{1-g} $}
	\RightLabel{{\tiny I/E$\rhd\rightarrow$}}
	\doubleLine
	\UnaryInfC{$ \sigma , t: \gamma ^{1} \rhd u: (\gamma ^{2}) _{1-g} $}
	\RightLabel{{\tiny I/E$\rhd\bot$}}
	\doubleLine
\UnaryInfC{$ \sigma , t: \gamma ^{1} \rhd u: (\gamma ^{2})^{\bot} _{g} $}
\RightLabel{{\tiny I/E$\rhd\rightarrow$}}
	\doubleLine
	\UnaryInfC{$ \sigma \rhd [t]u: (\gamma ^{1} \rightarrow (\gamma ^{2})^{\bot}) _{g} $}
\end{prooftree}
\end{description}

\end{description}

\end{description}
\end{proof}

\begin{example}
    Let us check whether $\langle a_{1},a_{2} \rangle: (\alpha ^{1} \times \alpha ^{3})^{\bot} ~ \Bot ~ \langle a_{1},a_{2} \rangle: (\alpha ^{1} + \alpha ^{2}) \times \alpha ^{3} $.
    The variable $\langle a_{1},a_{2} \rangle$ is a conjunction, so for the first value we use the clause for negated conjunctions.
    Hence, $(\alpha ^{1} \times \alpha ^{3})^{\bot}$ is treated like $((\alpha ^{1})^{\bot} \times \alpha ^{3}) + (\alpha ^{1} \times (\alpha ^{3})^{\bot}) + ((\alpha ^{1})^{\bot} \times (\alpha ^{3})^{\bot})$.
    We then apply the clause for disjunctions $\langle a_{1},a_{2} \rangle: ((\alpha ^{1})^{\bot} \times \alpha ^{3}) + (\alpha ^{1} \times (\alpha ^{3})^{\bot}) + ((\alpha ^{1})^{\bot} \times (\alpha ^{3})^{\bot}) ~ \Bot ~ \langle a_{1},a_{2} \rangle: (\alpha ^{1} + \alpha ^{2}) \times \alpha ^{3} $ iff $\langle a_{1},a_{2} \rangle: ((\alpha ^{1})^{\bot} \times \alpha ^{3}) ~ \Bot ~ \langle a_{1},a_{2} \rangle: (\alpha ^{1} + \alpha ^{2}) \times \alpha ^{3} $, and $\langle a_{1},a_{2} \rangle: (\alpha ^{1} \times (\alpha ^{3})^{\bot}) ~ \Bot ~ \langle a_{1},a_{2} \rangle: (\alpha ^{1} + \alpha ^{2}) \times \alpha ^{3} $, and moreover $\langle a_{1},a_{2} \rangle: ((\alpha ^{1})^{\bot} \times (\alpha ^{3})^{\bot}) ~ \Bot ~ \langle a_{1},a_{2} \rangle: (\alpha ^{1} + \alpha ^{2}) \times \alpha ^{3} $.

    At this point, both values are of $\times $complexity $1$, so we can apply the base case for conjunctions to all these combinations, and check whether all of them hold.
    Let us start with the first one.
    $\langle a_{1},a_{2} \rangle: ((\alpha ^{1})^{\bot} \times \alpha ^{3}) ~ \Bot ~ \langle a_{1},a_{2} \rangle: (\alpha ^{1} + \alpha ^{2}) \times \alpha ^{3} $ iff $a_{1} : (\alpha ^{1})^{\bot}  ~ \Bot ~ a_{1} : \alpha ^{1} + \alpha ^{2}$ or $a_{2} : \alpha ^{3}  ~ \Bot ~ a_{2} : \alpha ^{3}$.
        Clearly, $a_{2} : \alpha ^{3} $ and $ a_{2} : \alpha ^{3}$ are not mutually exclusive.
        As for $a_{1} : (\alpha ^{1})^{\bot}  ~ \Bot ~ a_{1} : \alpha ^{1} + \alpha ^{2}$, we apply the criterion for atomic variables: $((\alpha ^{1})^{\bot})^{\infty} = \bigoplus _{j\in I \setminus \{ 1\}} \alpha ^{j}$, and since $(I \setminus \{ 1\}) \cap \{ 1,2\} = \{ 2\} \neq \emptyset$, $(\alpha ^{1})^{\bot}  $ and $ \alpha ^{1} + \alpha ^{2}$ are not mutually exclusive values for $a_{1}$.
    Since mutual exclusivity does not hold already for the first pair of conjunctions, we can stop here and conclude that $(\alpha ^{1} \times \alpha ^{3})^{\bot} $ and $(\alpha ^{1} + \alpha ^{2}) \times \alpha ^{3} $ are not mutually exclusive values for the variable $\langle a_{1},a_{2} \rangle$.

\end{example}
            
\bibliographystyle{elsarticle-harv} 
  \bibliography{biblio}

\begin{thebibliography}{}

\bibitem[Adams and Jacobs, 2015]{DBLP:conf/types/Adams015}
Adams, R. and Jacobs, B. (2015).
\newblock A type theory for probabilistic and bayesian reasoning.
\newblock In Uustalu, T., editor, {\em 21st International Conference on Types
  for Proofs and Programs, {TYPES} 2015, May 18-21, 2015, Tallinn, Estonia},
  volume~69 of {\em LIPIcs}, pages 1:1--1:34. Schloss Dagstuhl -
  Leibniz-Zentrum f{\"{u}}r Informatik.

\bibitem[Angius and Primiero, 2018]{DBLP:journals/logcom/AngiusP18}
Angius, N. and Primiero, G. (2018).
\newblock The logic of identity and copy for computational artefacts.
\newblock {\em J. Log. Comput.}, 28(6):1293--1322.

\bibitem[Angius and Primiero, 2023]{DBLP:journals/logcom/AngiusP23}
Angius, N. and Primiero, G. (2023).
\newblock Copying safety and liveness properties of computational artefacts.
\newblock {\em J. Log. Comput.}, 33(5):1089--1117.

\bibitem[Bacci et~al., 2018]{DBLP:conf/lics/BacciFKMPS18}
Bacci, G., Furber, R., Kozen, D., Mardare, R., Panangaden, P., and Scott, D.~S.
  (2018).
\newblock Boolean-valued semantics for the stochastic {\(\lambda\)}-calculus.
\newblock In {\em Proceedings of the 33rd Annual {ACM/IEEE} Symposium on Logic
  in Computer Science, {LICS} 2018, Oxford, UK, July 09-12, 2018}, pages
  669--678.

\bibitem[Benk et~al., 2024]{Benk-BENTYO-3}
Benk, M., Kerstan, S., von Wangenheim, F., and Ferrario, A. (2024).
\newblock Twenty-four years of empirical research on trust in ai: A
  bibliometric review of trends, overlooked issues, and future directions.
\newblock {\em AI and Society}, pages 1--24.

\bibitem[Borgstr{\"{o}}m et~al., 2016]{DBLP:conf/icfp/BorgstromLGS16}
Borgstr{\"{o}}m, J., Lago, U.~D., Gordon, A.~D., and Szymczak, M. (2016).
\newblock A lambda-calculus foundation for universal probabilistic programming.
\newblock In {\em Proceedings of the 21st {ACM} {SIGPLAN} International
  Conference on Functional Programming, {ICFP} 2016, Nara, Japan, September
  18-22, 2016}, pages 33--46.

\bibitem[Bori{\v{c}}i{\'c}, 2016]{bor16}
Bori{\v{c}}i{\'c}, M. (2016).
\newblock Inference rules for probability logic.
\newblock {\em Publications de l'Institut Math{\'e}matique}, 100(114):77--86.

\bibitem[Bori{\v{c}}i{\'c}, 2017]{bor17}
Bori{\v{c}}i{\'c}, M. (2017).
\newblock Suppes-style sequent calculus for probability logic.
\newblock {\em Journal of Logic and Computation}, 27(4):1157--1168.

\bibitem[Bori{\v{c}}i{\'c}, 2019]{borivcic2019sequent}
Bori{\v{c}}i{\'c}, M. (2019).
\newblock Sequent calculus for classical logic probabilized.
\newblock {\em Archive for Mathematical Logic}, 58(1-2):119--136.

\bibitem[Carrara and Giaretta, 2001]{carrara2001identity}
Carrara, M. and Giaretta, P. (2001).
\newblock Identity criteria and sortal concepts.
\newblock In {\em Proceedings of the international conference on Formal
  Ontology in Information Systems-Volume 2001}, pages 234--243.

\bibitem[Carrara and Soavi, 2010]{carrara2010copies}
Carrara, M. and Soavi, M. (2010).
\newblock Copies, replicas, and counterfeits of artworks and artefacts.
\newblock {\em The Monist}, 93(3):414--432.

\bibitem[Coraglia et~al., 2023]{DBLP:conf/beware/CoragliaDGGPPQ23}
Coraglia, G., D'Asaro, F.~A., Genco, F.~A., Giannuzzi, D., Posillipo, D.,
  Primiero, G., and Quaggio, C. (2023).
\newblock Brioxalkemy: a bias detecting tool.
\newblock In Boella, G., D'Asaro, F.~A., Dyoub, A., Gorrieri, L., Lisi, F.~A.,
  Manganini, C., and Primiero, G., editors, {\em Proceedings of the 2nd
  Workshop on Bias, Ethical AI, Explainability and the role of Logic and Logic
  Programming co-located with the 22nd International Conference of the Italian
  Association for Artificial Intelligence (AI*IA 2023), Rome, Italy, November
  6, 2023}, volume 3615 of {\em {CEUR} Workshop Proceedings}, pages 44--60.
  CEUR-WS.org.

\bibitem[Coraglia et~al., 2024]{coraglia2024evaluatingaifairnesscredit}
Coraglia, G., Genco, F.~A., Piantadosi, P., Bagli, E., Giuffrida, P.,
  Posillipo, D., and Primiero, G. (2024).
\newblock Evaluating ai fairness in credit scoring with the brio tool.

\bibitem[Dastile et~al., 2020]{DASTILE2020106263}
Dastile, X., Celik, T., and Potsane, M. (2020).
\newblock Statistical and machine learning models in credit scoring: A
  systematic literature survey.
\newblock {\em Applied Soft Computing}, 91:106263.

\bibitem[de~Amorim et~al., 2021]{DBLP:conf/lics/AmorimKMPR21}
de~Amorim, P. H.~A., Kozen, D., Mardare, R., Panangaden, P., and Roberts, M.
  (2021).
\newblock Universal semantics for the stochastic {\(\lambda\)}-calculus.
\newblock In {\em 36th Annual {ACM/IEEE} Symposium on Logic in Computer
  Science, {LICS} 2021, Rome, Italy, June 29 - July 2, 2021}, pages 1--12.

\bibitem[Di~Pierro, 2020]{pierro2020atypetheory}
Di~Pierro, A. (2020).
\newblock A type theory for probabilistic--calculus.
\newblock In {\em From Lambda Calculus to Cybersecurity Through Program
  Analysis: Essays Dedicated to Chris Hankin on the Occasion of His
  Retirement}, pages 86--102. Springer.

\bibitem[D’Asaro et~al., 2025]{10.1093/logcom/exaf003}
D’Asaro, F.~A., Genco, F.~A., and Primiero, G. (2025).
\newblock Checking trustworthiness of probabilistic computations in a typed
  natural deduction system.
\newblock {\em Journal of Logic and Computation}, page exaf003.

\bibitem[Ghilezan et~al., 2018]{ghilezan2018probabilistic}
Ghilezan, S., Iveti{\'c}, J., Ka{\v{s}}terovi{\'c}, S., Ognjanovi{\'c}, Z., and
  Savi{\'c}, N. (2018).
\newblock Probabilistic reasoning about simply typed lambda terms.
\newblock In {\em International Symposium on Logical Foundations of Computer
  Science}, pages 170--189. Springer.

\bibitem[Grieves and Vickers, 2017]{grieves2017digital}
Grieves, M. and Vickers, J. (2017).
\newblock Digital twin: Mitigating unpredictable, undesirable emergent behavior
  in complex systems.
\newblock {\em Transdisciplinary perspectives on complex systems: New findings
  and approaches}, pages 85--113.

\bibitem[Henrique and Santos, 2024]{HENRIQUE2024100043}
Henrique, B.~M. and Santos, E. (2024).
\newblock Trust in artificial intelligence: Literature review and main path
  analysis.
\newblock {\em Computers in Human Behavior: Artificial Humans}, 2(1):100043.

\bibitem[Kaur et~al., 2022]{10.1145/3491209}
Kaur, D., Uslu, S., Rittichier, K.~J., and Durresi, A. (2022).
\newblock Trustworthy artificial intelligence: A review.
\newblock {\em ACM Comput. Surv.}, 55(2).

\bibitem[Kubyshkina and Primiero, 2024]{KUBYSHKINA2024109212}
Kubyshkina, E. and Primiero, G. (2024).
\newblock A possible worlds semantics for trustworthy non-deterministic
  computations.
\newblock {\em International Journal of Approximate Reasoning}, 172:109212.

\bibitem[Manganini and Primiero, 2025]{manganiniprimieroforth}
Manganini, C. and Primiero, G. (2025).
\newblock {\em Philosophy of science for machine learning: Core issues and new
  perspectives}, chapter Defining Formal Validity Criteria for Machine Learning
  Models.
\newblock Springer.

\bibitem[Primiero and D'Asaro, 2022]{DBLP:conf/aiia/PrimieroD22}
Primiero, G. and D'Asaro, F.~A. (2022).
\newblock Proof-checking bias in labeling methods.
\newblock In Boella, G., D'Asaro, F.~A., Dyoub, A., and Primiero, G., editors,
  {\em Proceedings of 1st Workshop on Bias, Ethical AI, Explainability and the
  Role of Logic and Logic Programming {(BEWARE} 2022) co-located with the 21th
  International Conference of the Italian Association for Artificial
  Intelligence (AI*IA 2022), Udine, Italy, December 2, 2022}, volume 3319 of
  {\em {CEUR} Workshop Proceedings}, pages 9--19. CEUR-WS.org.

\bibitem[Retzlaff et~al., 2024]{RETZLAFF2024101243}
Retzlaff, C.~O., Angerschmid, A., Saranti, A., Schneeberger, D., Röttger, R.,
  Müller, H., and Holzinger, A. (2024).
\newblock Post-hoc vs ante-hoc explanations: xai design guidelines for data
  scientists.
\newblock {\em Cognitive Systems Research}, 86:101243.

\bibitem[Semeraro et~al., 2021]{SEMERARO2021103469}
Semeraro, C., Lezoche, M., Panetto, H., and Dassisti, M. (2021).
\newblock Digital twin paradigm: A systematic literature review.
\newblock {\em Computers in Industry}, 130:103469.

\bibitem[Termine et~al., 2021]{10.1007/978-3-030-88708-7_19}
Termine, A., Primiero, G., and D'Asaro, F.~A. (2021).
\newblock Modelling accuracy and trustworthiness of explaining agents.
\newblock In Ghosh, S. and Icard, T., editors, {\em Logic, Rationality, and
  Interaction}, pages 232--245, Cham. Springer International Publishing.

\bibitem[Warrell, 2016]{DBLP:journals/corr/Warrell16}
Warrell, J.~H. (2016).
\newblock A probabilistic dependent type system based on non-deterministic beta
  reduction.
\newblock {\em CoRR}, abs/1602.06420.

\end{thebibliography}
\end{document}